\keywords{B\"uchi automata, Markov decision processes, Omega-regular objectives, Reinforcement learning}
\theoremstyle{plain} 
\begin{document}\sloppy

\title[]{On Good-for-MDPs Automata}

\author{Sven Schewe\lmcsorcid{0000-0002-9093-9518}}
\author{Qiyi Tang\lmcsorcid{0000-0002-9265-3011}}
\author{Tansholpan Zhanabekova\lmcsorcid{0000-0002-4941-2554}}

\address{University of Liverpool, UK}	
\email{\{sven.schewe, qiyi.tang, t.zhanabekova\}@liverpool.ac.uk}  





\begin{abstract}
  \noindent Nondeterministic good-for-MDPs (GFM) automata are for MDP model checking and reinforcement learning what good-for-games (GFG) automata are for reactive synthesis: a more compact alternative to deterministic automata that displays nondeterminism, but only so much that it can be resolved locally, such that a syntactic product can be analysed.
  
  GFM has recently been introduced as a property for reinforcement learning, where the simpler B\"uchi acceptance conditions it allows to use is key. However, while there are classic and novel techniques to obtain automata that are GFM, there has not been a decision procedure for checking whether or not an automaton is GFM.
  We show that GFM-ness is decidable and provide an {\sf EXPTIME} decision procedure as well as a {\sf PSPACE}-hardness proof.

We also compare the succinctness of GFM automata with other types of automata with restricted nondeterminism. The first natural comparison point are GFG automata.
Deterministic automata are GFG, and GFG automata are GFM, but not vice versa.
This raises the question of how these classes relate in terms of succinctness.
GFG automata are known to be exponentially more succinct than deterministic automata, but the gap between GFM and GFG automata as well as the gap between ordinary nondeterministic automata and those that are GFM have been open.
We establish that these gaps are exponential, and sharpen this result by showing that the latter gap remains exponential when restricting the nondeterministic automata to separating safety or unambiguous reachability automata.
  
\end{abstract}

\maketitle

	\section{Introduction}
	Omega-automata \cite{Thomas90,Perrin04} are formal acceptors of $\omega$-regular properties, which often result from translating a formula from temporal logics like LTL~\cite{Pnueli77}, as a specification for desired model properties in quantitative model checking and strategy synthesis \cite{Baier08}, and reinforcement learning~\cite{Sutton18}.
	
	Especially for reinforcement learning, having a simple B\"uchi acceptance mechanism has proven to be a breakthrough \cite{HahnPSSTW19}, which led to the definition of the ``good-for-MDPs'' (GFM) property in \cite{HPSSTWBP2020}.
	Just like for good-for-games automata in strategy synthesis for strategic games \cite{Henzin06}, there is a certain degree of nondeterminism allowed when using a nondeterministic automaton on the syntactic product with an MDP to learn how to control it, or to apply quantitative model checking. 
	Moreover, the degree of freedom available to control MDPs is higher than the degree of freedom for controlling games.
	In particular, this always allows for using nondeterministic automata with a B\"uchi acceptance condition, both when using the classically used suitable limit deterministic automata \cite{Vardi85,Courco95,Hahn15,Sicker16b,HahnPSSTW19} and for alternative GFM automata like the slim automata from \cite{HPSSTWBP2020}.
	
	This raises the question of whether or not an automaton is GFM.
	While \cite{HPSSTWBP2020} has introduced the concept, there is not yet a decision procedure for checking the GFM-ness of an automaton, let alone for the complexity of this test.
	
	We will start by showing that the problem of deciding GFM-ness is {\sf PSPACE}-hard by a reduction from 
	the NFA universality problem \cite{StockmeyerM73}.
	We then define the auxiliary concept of \emph{qualitative GFM}, QGFM, which relaxes the requirements for GFM to qualitative properties, and develop an automata-based {\sf EXPTIME} decision procedure for QGFM.
	This decision procedure is constructive in that it can provide a counter-example for QGFM-ness when such a counter-example exists.
	We then use it to provide a decision procedure for GFM-ness that uses QGFM queries for all states of the candidate automaton.
	Finally, we show that the resulting criterion for GFM-ness is also a necessary criterion for QGFM-ness, which leads to a collapse of the two concepts. This entails that the {\sf EXPTIME} decision procedure we developed to test QGFM-ness can be used to decide GFM-ness, while our {\sf PSPACE}-hardness proofs extend to QGFM-ness.

We then close this article by studying the succinctness of GFM automata, with a focus on a comparison to good-for-games (GFG) automata.
Like GFM automata, they need to be able to resolve their nondeterminism on-the-fly, but the restriction is stricter: they need to be able to resolve their nondeterminism on-the-fly in every context, including in two-player games, while it only needs to be possible for GFM automata in the context of finite MDPs.
This makes the restrictions posed on GFM automata milder.
One of the implications is that GFM automata are less complex than GFG automata, as GFM B\"uchi automata can recognise all regular languages.
But are they also \emph{significantly} more succinct?

We show that this is the case, establishing that GFM automata are exponentially more succinct than GFG automata, where exponentially more succinct means that a translation of automata with $n$ states (or transitions) require automata with $2^{\Omega(n)}$ states in the worst case.

To show this, we develop a family $\G_n$ of automata with $n+2$ states and $3n+7$ transitions, 
and show that the smallest language equivalent deterministic B\"uchi automaton (DBW) has at least $2^{n-1}$ states. As GFG B\"uchi automata are only quadratically more succinct than DBWs \cite{KuperbergS15}, this implies that GFG automata recognising the language are in $\Omega(2^{n/2})$.

We then continue to show a similar gap between B\"uchi automata with general nondeterminism and those that are GFM, producing two families of automata, $\mathcal R_n$ and $\mathcal S_n$, with $n+2$ states and $2n+5$ 
transitions and $n+1$ states and $2n$ transitions, respectively, and show that a language equivalent GFM automaton requires at least $2^n$ states to recognise either of these languages.

This is not only close to the known translation to slim automata \cite{HPSSTWBP2020} that results in $3^n$ states (and no more than two successors for any state letter pair), but it provides more: each $\mathcal R_n$ is a run-unambiguous %
reachability automaton, while each $\mathcal S_n$ is a safety automaton, which is run-unambiguous
and separating%
. 
Moreover, we show that the languages recognised by $\mathcal R_n$ are also recognised by a separating B\"uchi automaton $\mathcal R_n'$ with $n+2$ states and $3(n+2)$ transitions.

That the succinctness gap extends to these smaller classes of automata is not only interesting from a principle point of view: unambiguous automata appear naturally both as a standard translation from LTL, and in model checking Markov chains \cite{BaierKKMW23}.
Being unambiguous is thus a dissimilar type of restriction to nondeterminism, very different in nature, and yet with fields of application related to GFM.
It is also interesting to note that separation is more expensive ($\Omega((n-1)!)$, \cite{KarmarkarJC13}) than translating a nondeterministic B\"uchi automaton into one that is GFM ($O(3^n)$, cf.\ \cite{Courco90,Hahn15,Sicker16b,HPSSTWBP2020}).

	\section{Preliminaries}
	
	We write $\nat$ for the set of nonnegative integers. 
	Let $S$ be a finite set. 
	We denote by $\Dist(S)$ the set of probability distributions on~$S$. For a distribution $\mu \in \Dist(S) $ we write $\supp(\mu) = \{s \in S \suchthat \mu(s) > 0 \}$ for its support. The cardinal of $S$ is denoted $|S|$. We use $\Sigma$ to denote a finite alphabet. We denote by $\Sigma^{*}$ the set of finite words over $\Sigma$ and $\Sigma^{\omega}$ the set of $\omega$-words over $\Sigma$. We use the standard notions of prefix and suffix of a word. By $w\alpha$ we denote the concatenation of a finite word $w$ and an $\omega$-word $\alpha$. 
	If $L \subseteq \Sigma^{\omega}$ and $w \in \Sigma^{*}$, the residual language (left quotient of $L$ by $w$), denoted by $w^{-1}L$ is defined as $\{\alpha \in \Sigma^{\omega} \mid w\alpha \in L\}$.
	
	\subsection{Automata}\label{subsection:preliminaries-automata} 
	A nondeterministic B\"uchi \textbf{word automaton} (NBW) is a tuple $\A= (\Sigma, Q, q_0, \delta, F)$, where $\Sigma$ is a finite alphabet, $Q$ is a finite set of states, $q_0 \in Q$ is the initial state, $\delta: Q \times \Sigma \to 2^Q$ is the transition function, and $F \subseteq Q$ is the set of final (or accepting) states. 
	{An NBW is \emph{complete} if $\delta(q, \sigma) \neq \emptyset$ for all $q \in Q$ and $\sigma \in \Sigma$.
		Unless otherwise mentioned, we consider complete NBWs in this paper.} 
	A run $r$ of $\A$ on $w \in \Sigma^{\omega}$ is an $\omega$-word $q_0, w_0, q_1, w_1, \ldots \in (Q \times \Sigma)^{\omega}$ such that $q_i \in \delta(q_{i-1} , w_{i-1})$ for all $i > 0$.
	An NBW $\A$ accepts exactly those runs, in which at least one of the infinitely often occurring states is in~$F$. 
	A word in $\Sigma^\omega$ is accepted by the automaton if it has an accepting run, and the language of an automaton, denoted $\lang(\A)$, is the set of accepted words in $\Sigma^{\omega}$.  
	An example of an NBW is given in \cref{fig:NBW}.
	
	\begin{figure}
		\centering
		\begin{tikzpicture} [xscale=.7,shorten >=1pt,node distance = 1cm, on grid, auto,initial text = {}]
			\usetikzlibrary{positioning,arrows,automata}
			\node[initial below,state] (q0) at (0,0) {$q_0$};
			\node[accepting,state] (q1) at (-4,0) {$q_1$};
			\node[accepting,state] (q2) at (4,0) {$q_2$};
			\node[state] (q3) at (0,2) {$q_3$};
			
			\path [->] (q0) edge [above] node {$a$} (q1)
			(q0) edge [above] node {$a$} (q2)
			(q0) edge [left] node {$b$} (q3)
			(q1) edge [above] node {$b$} (q3)
			(q2) edge [above] node {$a$} (q3)
			(q1) edge [loop left] node{$a$} (q1)
			(q2) edge [loop right] node{$b$} (q2)
			(q3) edge [loop above] node{$a,b$} (q3);
		\end{tikzpicture}
		\caption{An NBW over $\{a, b\}$. This NBW is complete and accepts the language $\{a^{\omega}, ab^{\omega}\}$.}
		\label{fig:NBW}
	\end{figure}
	
	Let $C \subset \nat$ be a finite set of colours. A nondeterministic parity \textbf{word automaton} (NPW) is a tuple $P= (\Sigma, Q, q_0, \delta, \pi)$, where $\Sigma$, $Q$, $q_0$ and $\delta$ have the same definitions as for NBW, and $\pi: Q  \to C $ is the priority (or colouring) function that maps each state to a priority (or colour). A run is accepting if and only if the highest priority occurring infinitely often in the infinite sequence is even. Similar to NBWs, a word in $\Sigma^\omega$ is accepted by an NPW if it has an accepting run, and the language of the NPW $P$, denoted $\lang(P)$, is the set of accepted words in $\Sigma^{\omega}$. An NBW is a special case of an NPW where $\pi(q) = 2$ for $q \in F$ and $\pi(q) = 1$ otherwise with $C = \{1, 2\}$. 
	
	A nondeterministic word automaton is \emph{deterministic} if the transition function $\delta$ maps each state and letter pair to a singleton set, a set consisting of a single state. 
	
	A nondeterministic automaton is called \emph{good-for-games (GFG)} if it only relies on a limited form of nondeterminism: GFG automata can make their decision of how to resolve their nondeterministic choices on the history at any point of a run rather than using the knowledge of the complete word as a nondeterministic automaton normally would without changing their language. They can be characterised in many ways, including as automata that simulate deterministic automata. The NBW in \cref{fig:NBW} is neither GFG nor good-for-MDPs (GFM) as shown later.  

    We also consider automata that operate on finite words. 
    A \emph{nondeterministic finite automaton} (NFA) $\N$ is a tuple $(\Sigma, Q, q_0, \delta, F)$, where $F \subseteq Q$ is a set of \emph{final} states.
    Unlike NBWs, NFAs process \emph{finite} traces rather than $\omega$-traces.
    A finite word $u \in \Sigma^{*}$ is \emph{accepted} by an NFA $\N$ if there exists a finite run of $\N$ on $u$ that terminates in a final state.
    A \emph{deterministic finite automaton} (DFA) is an NFA in which the transition function is deterministic, defined analogously to deterministic $\omega$-automata.

	\subsection{Markov Decision Processes (MDPs)}\label{subsection:preliminaries-mdp}
	
	A \emph{(finite, state-labelled) Markov decision process} (MDP) is a tuple $\langle S, \Act, \Prob, \Sigma , \ell\rangle$ consisting of a finite set $S$ of states, a finite set $\Act$ of actions, a partial function 
	$\Prob: S \times \Act \pfun \Dist(S)$ 
	denoting the probabilistic transition and a labelling function $\ell: S \to \Sigma$.
	The set of available actions in a state $s$ is $\Act(s) = \{\m \in \Act \suchthat \Prob(s, \m) \text{ is defined}\}$. 
	An MDP is a (labelled) Markov chain (MC) if $|\Act(s)| = 1$ for all $s \in S$. 
	
	\begin{figure}
		\begin{subfigure}{0.49\textwidth}
			\centering
			\resizebox{!}{3.5cm}{%
				\begin{tikzpicture} [xscale=.7,shorten >=1pt,node distance = 1cm, on grid, auto]
					\usetikzlibrary{positioning,arrows,automata}
					\tikzstyle{BoxStyle} = [draw, circle, fill=black, scale=0.4,minimum width = 1pt, minimum height = 1pt]
					
					\node[state,fill=orange!40] (s0) at (0,0) {$s_0$};
					\node[BoxStyle] (act) at (0,-1) {};
					\node[label] at (0.3,-0.8) {$\m$};
					\node[state,fill=orange!40] (s1) at (-2,-2) {$s_1$};
					\node[state,fill=green!30] (s2) at (2,-2) {$s_2$};
					\node[BoxStyle] at (-3.25,-2) {};
					\node[label] at (-3.3, -2.3) {$\m$};
					\node[label] at (-3, -1.65) {$1$};
					\node[BoxStyle] at (3.25,-2) {};		
					\node[label] at (3.3, -1.7) {$\m$};
					\node[label] at (3, -2.35) {$1$};
					
					\path [-] (s0) edge [above] node {} (act);
					
					\path [->] (act) edge [above, midway] node {$\frac{1}{3}$} (s1)
					(act) edge [above, midway] node {$\frac{2}{3}$} (s2)
					(s1) edge [loop left, near end] node {} (s1)
					(s2) edge [loop right, near end] node {} (s2);
					
					\node[rectangle,draw,dashed, minimum width = 2.5cm, minimum height = 1.3cm] at (-2,-2) {};
					\node[rectangle,draw,dashed, minimum width = 2.5cm, minimum height = 1.3cm] at (2,-2) {};
					
					\node at (0, -3.2) {(a) };
				\end{tikzpicture}
			}
			\label{fig:MDP}
		\end{subfigure}
		\begin{subfigure}{0.49\textwidth}
			\centering
			\resizebox{!}{3.5cm}{%
				\begin{tikzpicture} [xscale=.7,shorten >=1pt,node distance = 1cm, on grid, auto]
					\usetikzlibrary{positioning,arrows,automata}
					\tikzstyle{BoxStyle} = [draw, circle, fill=black, scale=0.4,minimum width = 1pt, minimum height = 1pt]
					
					\node[state,fill=orange!40] (s0) at (0,0) {$s_0$};
					\node[state,fill=orange!40] (s1) at (-2,-1) {$s_1$};
					\node[state,fill=green!30] (s2) at (2,-1) {$s_2$};
					\node[state,fill=orange!40] (s12) at (-2,-2.1) {$s_1$};
					\node[state,fill=green!30] (s22) at (2,-2.1) {$s_2$};
					\node[label] at (-2,-2.7) {$\vdots$};
					\node[label] at (2,-2.7) {$\vdots$};
					
					\path [->] (s0) edge [above, midway] node {} (s1)
					(s0) edge [above, midway] node {} (s2)
					(s1) edge [left] node {} (s12)
					(s2) edge [right] node {} (s22);
					\node at (0, -3.2) {(b)};
					
				\end{tikzpicture}
			}
			\label{fig:MC-tree}
		\end{subfigure}
		\caption{(a) An MDP with initial state $s_0$. The set of labels is $\{a, b\}$ and the labelling function for the MDP is as follows: $\ell(s_0) = \ell(s_1) = a$, $\ell(s_2) = b$. The labels are indicated by different colours. Since each state has only one available action $\m$, the MDP is actually an MC. There are two end-components in this MDP labelled with the two dashed boxes. (b) The tree that stems from unravelling of the MC with initial state $s_0$ on the left, while disregarding probabilities.}
		\label{fig:example-MC-tree}
	\end{figure}

	
	An infinite \emph{run (path)} of an MDP $\M$ is a sequence $s_0\m_1\ldots \in (S \times \Act)^\omega$ such that $\Prob(s_i, \m_{i+1})$ is defined and $\Prob(s_i, \m_{i+1})(s_{i+1}) > 0$ for all $i \ge 0$.
	A finite run is a finite such sequence.
	Let $\Omega(\M)$ (resp. $\Paths(\M)$) denote the set of infinite (resp. finite) runs in $\M$ and $\Omega(\M)_s$ (resp. $\Paths(\M)_s$) denote the set of infinite (resp. finite) runs in $\M$ starting from $s$.
	Abusing the notation slightly, for an infinite run $r = s_0\m_1s_1\m_2\ldots$ we write $\ell(r) = \ell(s_0)\ell(s_1)\ldots \in \Sigma^{\omega}$.
	
	A \emph{strategy} for an MDP is a function $\mu: \Paths(\M) \to \Dist(\Act)$ that, given a finite run~$r$, returns a probability distribution on all the available actions at the last state of $r$. 
	A \emph{memoryless} (positional) strategy for an MDP is a function $\mu: S \to \Dist(\Act)$ that, given a state $s$, returns a probability distribution on all the available actions at that state. 
	The set of runs $\Omega(\M)_s^{\mu}$ is a subset of $\Omega(\M)_s$ that correspond to strategy $\mu$ and initial state~$s$.
	Given a memoryless/finite-memory strategy $\mu$ for $\M$, an MC $(\M)_{\mu}$ is induced \cite[Section~10.6]{Baier08}.


	A \emph{sub-MDP} of $\M$ is an MDP $\M' = \langle S', \Act', \Prob', \Sigma , \ell' \rangle$, where $S' \subseteq S$, $\Act' \subseteq \Act$ is such that $\Act'(s) \subseteq \Act(s)$ for every $s \in S'$, and $\Prob'$ and $\ell'$ are analogous to $\Prob$ and $\ell$ when restricted to $S'$ and $\Act'$.  In particular, $\M'$ is closed under probabilistic transitions, i.e. for all $s \in S'$ and $\m \in \Act'$ we have that $\Prob'(s, \m)(s') > 0$ implies that $s' \in S'$. An \emph{end-component} \cite{Alfaro1998,Baier08} of an MDP $\M$ is a sub-MDP $\M'$ of $\M$ such that its underlying graph is strongly connected and it has no outgoing transitions. An example MDP is presented in \cref{fig:example-MC-tree}(a).
	
	A strategy $\mu$ and an initial state $s \in S$ induce a standard probability measure on sets of infinite runs, see, e.g., \cite{Baier08}. Such measurable sets of infinite runs are called events or objectives. We write $\mathrm{Pr}_s^{\mu}$ for the probability of an event $E \subseteq sS^{\omega}$ of runs starting from $s$.
	
	\begin{thm}\label{theorem:end-component-properties}(End-Component Properties \cite{Alfaro1998,Baier08}). 
		Once an end-component $E$ of an MDP is entered, there is a strategy that visits every state-action combination in $E$ with probability one and stays in $E$ forever. Moreover, for every
		strategy the union of the end-components is visited with probability one.
	\end{thm}

	\subsection{The Product of MDPs and Automata}\label{subsection:preliminaries-product-mdp-gfm}
	
	Given an MDP $\M = \langle S, \Act, \Prob, \Sigma, \ell \rangle$ with initial state $s_0 \in S$ and an NBW $\N = \langle \Sigma, Q, \delta, q_0, F \rangle$, we want to compute an optimal strategy satisfying the objective that the run of $\M$ is in the language of $\N$. We define the semantic satisfaction probability for $\N$ and a strategy $\mu$ from state $s \in S$ as:
	$\mathrm{PSem}_{\N}^{\M}(s, \mu) = \mathrm{Pr}_s^{\mu}\{ r \in \Omega_s^{\mu}: \ell(r) \in \lang(\N)\}$ and $\mathrm{PSem}_{\N}^{\M}(s) = \sup_{\mu} \big( \mathrm{PSem}_{\N}^{\M}(s, \mu) \big)$. In the case that $\M$ is an MC, we simply have $\mathrm{PSem}_{\N}^{\M}(s) = \mathrm{Pr}_s\{ r \in \Omega_s: \ell(r) \in \lang(\N)\}$.
	
	\begin{figure*}[t]
		\centering
		\scalebox{0.82}{
			\begin{tikzpicture} [xscale=.7,shorten >=1pt,node distance = 1cm, on grid, auto]
				\usetikzlibrary{positioning,arrows,automata}
				\tikzstyle{BoxStyle} = [draw, circle, fill=black, scale=0.4,minimum width = 1pt, minimum height = 1pt]
				
				\node[state,fill=orange!40] (s0q0) at (0,0) {$s_0, q_0$};
				\node[BoxStyle] (act1) at (-2.3,0) {};
				\node[label] at (-1.8,0.3) {$(\m,q_1)$};
				\node[BoxStyle] (act2) at (2.3,0) {};
				\node[label,color=red] at (1.8,0.3) {$(\m, q_2)$};
				
				\node[state,fill=orange!40] (s1q1) at (-5,1) {$s_1, q_1$};
				\node[BoxStyle] at (-6.55,1) {};
				\node[label] at (-7, 0.61) {$(\m,q_1)$};
				\node[label] at (-6, 1.41) {$1$};
				
				\node[state,fill=green!30] (s2q1) at (-5,-1) {$s_2, q_1$};
				\node[BoxStyle] at (-7.3,-1) {};
				\node[label] at (-7,-1.3) {$(\m, q_3)$};
				\node[state,fill=green!30] (s2q3) at (-10,-1) {$s_2, q_3$};
				\node[BoxStyle] at (-11.55,-1) {};
				\node[label] at (-12, -1.39) {$(\m,q_3)$};
				\node[label] at (-11, -0.6) {$1$};
				
				\node[state,fill=orange!40] (s1q2) at (5,1) {$s_1, q_2$};
				\node[BoxStyle] at (7.3,1) {};
				\node[label] at (7,1.3) {$(\m, q_3)$};

				\node[state,fill=orange!40] (s1q3) at (10,1) {$s_1, q_3$};				\node[BoxStyle] at (11.55,1) {};
				\node[label] at (12,1.39) {$(\m,q_3)$};
				\node[label] at (11,0.56) {$1$};
				
				\node[state,fill=green!30] (s2q2) at (5,-1) {$s_2, q_2$};
				\node[BoxStyle] at (6.55,-1) {};
				\node[label] at (7,-0.61) {$(\m,q_2)$};
				\node[label] at (6,-1.45) {$1$};
				
				\path [-] (s0q0) edge [above] node {} (act1);
				\path [-,red,line width=2pt] (s0q0) edge [above] node {} (act2);
				
				\path [->] 
				(act1) edge [above, midway] node {$\frac{1}{3}$} (s1q1)
				(act1) edge [below] node {$\frac{2}{3}$} (s2q1)
				(s1q1) edge [loop left, near end] node {} (s1q1)
				(s2q1) edge [above, near end] node {$1$} (s2q3)
				(s2q3) edge [loop left, near end] node {} (s2q3)
				(act2) edge [above, midway] node {$\frac{1}{3}$} (s1q2)
				(act2) edge [below, midway] node {$\frac{2}{3}$} (s2q2)
				(s2q2) edge [loop right, near end] node {} (s2q2)
				(s1q2) edge [above, near end] node {$1$} (s1q3)
				(s1q3) edge [loop right, near end] node {} (s1q3);
				
				\node[rectangle,draw,dashed, minimum width = 3cm, minimum height = 1.3cm, line width=2pt] at (-5.7,1) {}; 
				\node[rectangle,draw,dashed, minimum width = 3cm, minimum height = 1.3cm] at (-10.7,-1) {}; 
				\node[rectangle,draw,dashed, minimum width = 3cm, minimum height = 1.3cm] at (10.7,1) {}; 
				\node[rectangle,draw,dashed, minimum width = 3cm, minimum height = 1.3cm, line width=2pt] at (5.7,-1) {}; 
				
			\end{tikzpicture}
		}
		\bigskip
		\caption{An example of a product MDP $\M \times \N$ with initial state $(s_0, q_0)$ and $F^{\times} = \{(s_1, q_1), (s_2, q_1), (s_1, q_2), (s_2, q_2)\}$ where $\M$ is the MDP (MC) in \cref{fig:example-MC-tree}(a) and $\N$ is the NBW in \cref{fig:NBW}. The states $(s_0, q_0), (s_1, q_1), (s_1, q_2)$ and $(s_1, q_3)$ are labelled with $a$ while all the other states are labelled with $b$. Again, the four end-components of the MDP are labelled with dashed boxes; the upper left and lower right end-components are accepting (highlighted in thick dashed boxes).}
		\label{fig:product-MDP}
	\end{figure*}

	The \emph{product} of $\M$ and $\N$ is an MDP $\M \times \N = \langle S \times Q, \Act \times Q, \Prob^{\times}, \Sigma, \ell^{\times} \rangle$ augmented with the initial state $(s_0, q_0)$ and the B\"uchi acceptance condition $F^{\times}
	= \{(s, q) \in S \times Q \suchthat q \in F \}$. The labelling function $\ell^{\times}$ maps each state $(s, q) \in S \times Q$ to $\ell(s)$. 
	
	We define the partial function $\Prob^{\times}: (S \times Q) \times (\Act \times Q) \pfun \Dist(S \times Q)$ as follows: for all $(s, \m) \in \supp(\Prob)$
	, $s' \in S$ and $q, q' \in Q$,  we have  
	$\Prob^{\times}\big((s, q), (\m, q')\big) \big( (s', q') \big) = 
	\Prob(s, \m)(s')
	$ for all $q' \in \delta(q, \ell(s))$.
	When $\N$ is complete, there always exists a state $q'$ such that $q' \in \delta(q, \ell(s))$.

	
	We define the syntactic satisfaction probability for the product MDP and a strategy $\mu^{\times}$ from a state $(s, q)$ as:
	$\mathrm{PSyn}_{\N}^{\M} \big( (s, q), \mu^{\times}\big) = \mathrm{Pr}_{s, q}^{\mu^{\times}}\{ r \in \Omega_{s, q}^{\mu^{\times}}: \ell^{\times}(r) \in \lang(\N) \}$
    and $\mathrm{PSyn}_{\N}^{\M}(s) = \sup_{\mu^{\times}} (\mathrm{PSyn}_{\N}^{\M}\big( (s,q_0), \mu^{\times})\big)$. 
	The set of actions is $\Act$ in the MDP $\M$ while it is $\Act \times Q$ in the product MDP. This makes $\mathrm{PSem}$ and $\mathrm{PSyn}$ potentially different. In general, $\mathrm{PSyn}_{\N}^{\M}(s) \le \mathrm{PSem}_{\N}^{\M}(s)$ for all $s \in S$, because accepting runs can only occur on accepted words. 
	If $\N$ is deterministic, $\mathrm{PSyn}_{\N}^{\M}(s) = \mathrm{PSem}_{\N}^{\M}(s)$ holds for all $s \in S$. 
	
	End-components and runs are defined for products just like for MDPs. A run of $\M \times \N$ is accepting if it satisfies the     product’s acceptance condition. 
    We then have $\mathrm{PSyn}_{\N}^{\M}( (s, q), \mu^{\times}) = \mathrm{Pr}_{s, q}^{\mu^{\times}}\{ r \in \Omega_{s, q}^{\mu^{\times}}: r \text{ is an accepting run}\}$.
    An accepting end-component of $\M \times \N$ is an end-component which contains some states in $F^{\times}$. 
	
	An example of a product MDP is presented in \cref{fig:product-MDP}. It is the product of the MDP in \cref{fig:example-MC-tree}(a) and the NBW in \cref{fig:NBW}. Since $\ell(r)$ is in the language of the NBW for every run $r$ of the MDP,  we have $\mathrm{PSem}_{\N}^{\M}(s_0) = 1$. However, the syntactic satisfaction probability $\mathrm{PSyn}_{\N}^{\M}(s_0) = \frac{2}{3}$ is witnessed by the memoryless strategy which chooses the action $(\m,q_2)$ at the initial state. We do not need to specify the strategy for the other states since there is only one available action for any remaining state. According to the following definition, the NBW in \cref{fig:NBW} is not GFM as witnessed by the MDP in \cref{fig:example-MC-tree}(a). 

	\begin{defi}
		\cite{HPSSTWBP2020}
		\label{def:GFM}
		An NBW $\N$ is good-for-MDPs (GFM) if, for all finite MDPs $\M$ with initial state $s_0$, $\mathrm{PSyn}_{\N}^{\M}(s_0) = \mathrm{PSem}_{\N}^{\M}(s_0)$ holds.  
	\end{defi}
    
	\section{{\sf PSPACE}-Hardness}\label{section:PSPACE-hardness}
	We show that the problem of checking whether or not a given NBW is GFM is {\sf PSPACE}-hard.
	The reduction is from the NFA universality problem, which is known to be {\sf PSPACE}-complete~\cite{StockmeyerM73}. Given an NFA $\A$, the NFA universality problem asks whether $\A$ accepts every string, that is, whether $\lang(\A) = \Sigma^{*}$. 
	We additionally require the NFA $\A$ to be non-empty. 
	We first give an overview of how this reduction works. Given a non-empty complete NFA $\A$, we construct an NBW $\A_f$ (\cref{def:NFA-to-NBA}) which can be shown to be GFM (\cref{lemma:NBW-GFM}). Using this NBW $\A_f$, we then construct another NBW $\fork(\A_f)$ (\cref{def:fork-NBA}). We complete the argument by showing in \cref{lemma:reduction-from-NFA-universality} that the NBW $\fork(\A_f)$ is GFM if, and only if, $\A$ accepts the universal language.
	
	We start with the small observation that `for all finite MDPs' in \cref{def:GFM} 
	can be replaced by `for all finite MCs'.
	\begin{thm}\label{theorem:MC-MDP-interchangable}
		An NBW $\N$ is GFM iff, for all finite MCs $\M$ with initial state $s_0$, $\mathrm{PSyn}_{\N}^{\M}(s_0) = \mathrm{PSem}_{\N}^{\M}(s_0)$ holds.
	\end{thm}
	
	\begin{proof}
		\textbf{`if':} This is the case because there is an optimal finite memory control for an MDP $\M$, e.g.\ by using a language equivalent DPW $\mathcal P$ \cite{Piterman07} and using its memory structure as finite memory.
		That is, we obtain an MC $\M'$ by applying an optimal memoryless strategy for $\M \times \mathcal P$ \cite{BA1995}.
		Naturally, if $\N$ satisfies the condition for $\mathcal \M'$, then it also satisfies it for $\M$.
		
		\textbf{`only if':} MCs are just special cases of MDPs.
	\end{proof}  

	\begin{figure}[t]
		\begin{minipage}{0.45\textwidth}
			\centering
			\begin{tikzpicture} [->, node distance = 2cm, auto,initial text = {}]
				\node[initial left, accepting,state] (q0) at (1,2) {$q_0$};
				
				\path (q0) edge [loop right] node {$a,b$} (q0);
				\node at (1, .8) {(a) A universal NFA $\B$};
			\end{tikzpicture}
		\end{minipage}
		\begin{minipage}{0.45\textwidth}
			\centering
			\begin{tikzpicture} [->, node distance = 2cm, auto,initial text = {}]
				\node[initial left,state] (q0) at (0,2) {$q_0'$};
				\node[accepting, right of = q0, state] (f)  {$f'$};
				
				\path (q0) edge [loop above] node {$a,b$} (q0)
				(f) edge [loop above, color = red] node {$\$$} (f)
				(q0) edge [color = red] node {$\$$} (f)
				(f) edge [bend left, color = red] node {$a,b$} (q0);
				\node at (1, .8) {(b) An NBW $\B_f$};
			\end{tikzpicture}
		\end{minipage}
		\caption{(a) $\B$ is a complete universal NFA. Let $\Sigma = \{a, b\}$. We have $\lang(\B)=\Sigma^*$. (b) On the right is the corresponding complete NBW $\B_f$. The new final state $f$ and the added transitions are highlighted in red. We have $\lang(\B_f)= \{ w_1\$w_2\$w_3\$\ldots\}$ where  $w_i \in \Sigma^*$.  }
		\label{fig:B-and-Bf}
	\end{figure}
	
	Given a complete NFA $\A$, we construct an NBW $\A_f$ by introducing a new letter $\$$ and a new state. As an example, given an NFA (DFA) $\B$ in \cref{fig:B-and-Bf}(a), we obtain an NBW $\B_{f}$ in \cref{fig:B-and-Bf}(b). It is easy to see that $\lang(\B) = \Sigma^{*}$ where $\Sigma = \{a, b\}$.
	
	\begin{defi}\label{def:NFA-to-NBA}
		Given a complete NFA $\A = (\Sigma,Q,q_0,\delta,F)$, we define the 
		NBW $\A_f = (\Sigma_\$,Q_f,q_0,\delta_f,\{f\})$ with
		$\Sigma_\$ = \Sigma \cup \{\$\}$ and $Q_f = Q \cup \{f\}$ for a fresh letter $\$\notin\Sigma$ and a fresh state $f\notin Q$, and with
		$\delta_f(q,\sigma) = \delta(q,\sigma)$ for all $q\in Q$ and $\sigma \in \Sigma$,
		$\delta_f(q,\$) = \{f\}$ for all $q \in F$, 
		$\delta_f(q,\$) = \{q_0\}$ for all $q \in Q\setminus F$, and
		$\delta_f(f,\sigma) = \delta_f(q_0, \sigma)$ for all $\sigma \in \Sigma_{\$}$.
	\end{defi}
	
	
	The language of $\A_f$ consists of all words of the form $w_1\$w_1'\$w_2\$w_2'\$w_3\$w_3'\$\ldots$ such that, for all $i\in \nat$, $w_i \in {\Sigma_\$}^*$ and $w_i' \in \lang(\A)$. 
	This provides the following lemma.
	
	\begin{lem}\label{lemma:NFA-languageInclusion}
		Given two NFAs $\A$ and $\B$, $\lang(\B)\subseteq \lang(\A)$ if, and only if, $\lang(\B_f)\subseteq \lang(\A_f)$. 
	\end{lem} 
	
	The following lemma simply states that the automaton $\A_f$ from the above construction is GFM. This lemma is technical and is key to prove~\cref{lemma:reduction-from-NFA-universality}, the main lemma, of this section. 
	
	\begin{lem}\label{lemma:NBW-GFM}
		For every complete NFA $\A$, $\A_f$ is GFM.
	\end{lem}
	
	\begin{proof}
		Consider an arbitrary MC $\M$ with initial state $s_0$. 
        Due to \cref{theorem:MC-MDP-interchangable}, we only need to show that $\A_f$ is good for $\M$, that is, $\mathrm{PSem}_{\A_f}^{\M}(s_0) = \mathrm{PSyn}_{\A_f}^{\M}(s_0)$. It suffices to show $\mathrm{PSyn}_{\A_f}^{\M}(s_0) \ge \mathrm{PSem}_{\A_f}^{\M}(s_0)$ since by definition the converse $\mathrm{PSem}_{\A_f}^{\M}(s_0) \ge \mathrm{PSyn}_{\A_f}^{\M}(s_0)$ always holds.
		
		First, we construct a language equivalent deterministic B\"uchi automaton (DBW) $\D_f$ by first determinising the NFA $\A$ to a DFA $\D$ by a standard subset construction and then obtain $\D_f$ by \cref{def:NFA-to-NBA}. Since $\lang(\A_f) = \lang(\D_f)$, we have that $\mathrm{PSem}_{\A_f}^{\M}(s_0) = \mathrm{PSem}_{\D_f}^{\M}(s_0)$. In addition, since $\D_f$ is deterministic, we have $\mathrm{PSem}_{\D_f}^{\M}(s_0) = \mathrm{PSyn}_{\D_f}^{\M}(s_0)$.
        
		It remains to show $\mathrm{PSyn}_{\A_f}^{\M}(s_0) \ge \mathrm{PSyn}_{\D_f}^{\M}(s_0)$. 
        We consider the MC $\M' = \M \times \D_f$; naturally, $\M$ and $\M'$ have the same probability of success,
        that is, $\mathrm{PSyn}_{\A_f}^{\M}(s_0) = \mathrm{PSyn}_{\A_f}^{\M'}(s_0')$, where $s_0'$ is the initial state of $\M'$.

        It suffices to show that there exists a strategy for~$\M' \times \A_f$ such that, for every accepting run~$r$ of~$\M \times \D_f = \M'$, the corresponding run induced by this strategy in~$\M' \times \A_f$ is accepting.

		Consider an accepting run of $\M' = \M \times \D_f$. Before entering an accepting end-component of $\M'$, any strategy to resolve the nondeterminism in $\M' \times \A_f$ (thus $\A_f$) can be used.
		This will not block $\A_f$, as it is a complete automaton, and $\A_f$ is essentially re-set whenever it reads a $\$$. 
		Once an accepting end-component of $\M'$ is entered,
		there must exist a word of the form $\$w\$$, where $w \in \lang(\D)$ (and thus $w \in \lang(\A)$), which is a possible initial sequence of letters produced from some state $m$ of $\M'$ in this end-component.
		We fix such a word $\$w\$$; such a state $m$ of the end-component in $\M'$ from which this word $\$w\$$ can be produced; and strategy of the NBW $\A_f$ to follow the word $w\$$ from $q_0$ (and $f$) to the accepting state $f$.
		(Note that the first $\$$ always leads to $q_0$ or $f$.) 
		
		In an accepting end-component we can be in two modes: \emph{tracking}, or \emph{not tracking}.
		If we are \emph{not tracking} and reach $m$, we start to \emph{track} $\$w\$$:
		we use $\A_f$ to reach an accepting state after reading $\$w\$$ (ignoring what happens in any other case) with the strategy we have fixed. Note that after reading the first $\$$, we are in either $q_0$ or $f$, such that when starting from $m$, it is always possible, with a fixed probability $p>0$, to read $\$w\$$ and thus accept.
		If we read an unexpected letter (where the `expected' letter is always the next letter from $\$w\$$) or the end of the word $\$w\$$ is reached, we move to \emph{not tracking}.
		
		The choices of the automata when \emph{not tracking} can be resolved arbitrarily.
		
		Once in an accepting end-component of $\M'$, tracking is almost surely started infinitely often, and it is thus almost surely successful infinitely often.
		Thus, we have
		$\mathrm{PSyn}_{\A_f}^{\M}(s_0) = \mathrm{PSyn}_{\A_f}^{\M'}(s_0')\geq \mathrm{PSyn}_{\D_f}^{\M}(s_0)$.
	\end{proof}
	
	
	Let $\B$ be a universal NFA in \cref{fig:B-and-Bf}(a) and $\B_f = (\Sigma_\$,Q_f',q_0',\delta_f',\{f'\})$ be the NBW in \cref{fig:B-and-Bf}(b). Assume without loss of generality that the state space of $\A_f$, $\B_f$, and $\{q_{F}, q_{\A}, q_{\B}\}$ are pairwise disjoint. We now define the $\fork$ operation. An example of how to construct an NBW $\fork(\A_f)$ is shown in \cref{fig:fork}.
	
	\begin{defi}\label{def:fork-NBA}
		Given an NBW $\A_f = (\Sigma_\$,Q_f,q_0,\delta_f,\{f\})$, we define the NBW $\fork(\A_{f}) = (\Sigma_\$, Q_F, q_{F}, \delta_F,\{f,f'\})$ with 	
		\begin{itemize}
			\item $Q_F = Q_f \cup Q_{f'}'\cup\{q_{F}, q_{\A}, q_{\B}\}$;
			\item $\delta_F (q,\sigma) = \delta_f(q,\sigma)$ for all $q \in Q_f$ and $\sigma \in \Sigma_\$$;
			\item $\delta_F (q,\sigma) = \delta_{f'}(q,\sigma)$ for all $q \in Q_{f'}'$ and $\sigma \in \Sigma_\$$;
			\item $\delta_F (q_F,\sigma) = \{q_{\A},q_{\B}\}$ for all $\sigma \in \Sigma_\$$;
			\item $\delta_F (q_{\A},\sigma) = \{q_0\}$ for all $\sigma \in \Sigma_\$$;
			\item  $\delta_F (q_{\B},\$) = \{q_0'\}$, and
			$\delta_F (q_{\B},\sigma) = \emptyset$ for all $\sigma \in \Sigma$.
		\end{itemize}
	\end{defi}

	\begin{figure}[t]
		\begin{minipage}[c]{0.45\linewidth}
			\centering
			\begin{tikzpicture} [->, node distance = 2cm, auto]
				\usetikzlibrary{positioning,arrows,automata}
				\node[state] (s) at (2.5,2) {$q_{F}$};
				\node[state] (sa) at (1,0) {$q_{\A}$};
				\node[state] (sb) at (4,0) {$q_{\B}$};
				\node[state] (aq0) at (1,-2) {$q_0$};
				\node[state] (bq0) at (4,-2) {$q_0'$};
				
				\node[accepting, right of = bq0, state] (f)  {$f'$};
				
				\path (bq0) edge [loop below] node {$a,b$} (bq0)
				(f) edge [loop below, color = red] node {$\$$} (f)
				(bq0) edge [color = red] node {$\$$} (f)
				(f) edge [bend left, color = red] node {$a,b$} (bq0);       
				\node[rectangle,draw,dashed, minimum width = 2cm, minimum height = 2.2cm] (r1) at (1,-2.5) {};
				\node  at (1, -4) {$\A_f$};
				\node[rectangle,draw,dashed, minimum width = 3.2cm, minimum height = 2.2cm] (r2) at (5,-2.5) {};
				\node at (5, -4) {$\B_f$};
				
				\draw (s) edge [left]  node {$a,b,\$$} (sa)
				(s) edge  [right] node {$a,b,\$$} (sb)
				(sa) edge  [left] node  {$a,b,\$$} (aq0)
				(sb) edge  node {$\$$}(4,-1.5) ;
			\end{tikzpicture}
			\caption{Given a non-empty complete NFA $\A$, an NBW $\A_f$ and an NBW $\fork(\A_f)$ are constructed. In this example, $\Sigma = \{a, b\}$ and $\Sigma_{\$} = \{a, b, \$\}$. From the state $q_{\A}$ (resp. $q_{\B}$), the NBW $\fork(\A_f)$ transitions to the initial state of $\A_f$ (resp. $\B_f$).}
			\label{fig:fork}
		\end{minipage}\hfill
		\begin{minipage}[c]{0.5\linewidth}
			\centering
			\begin{tikzpicture} [->, node distance = 2cm, auto]
				\usetikzlibrary{positioning,arrows,automata}
				\tikzstyle{BoxStyle} = [draw, circle, fill=black, scale=0.4,minimum width = 1pt, minimum height = 1pt]
				
				\node[state,fill=orange!40] (s) at (2.5,2) {$s_0$};
				\node[BoxStyle] (bs) at (2.5, 1) {};
				\node[state,fill=orange!40] (sa) at (0.5,0) {$s_1$};
				\node[BoxStyle] (bsa) at (0.5, -1) {};
				\node[state,fill=blue!30] (sb) at (4.5,0) {$s_2$};
				\node[BoxStyle] (bsb) at (4.5, -1) {};
				\node[rectangle,draw,dashed, minimum width = 1.5cm, minimum height = 2cm] (r1) at (0.5,-2.5) {};
				\node at (1, -4.4) {\parbox[t]{1in}{generate $w_a{\cdot}\$$ repeatedly with probability one}};
				\node[rectangle,draw,dashed, minimum width = 1.5cm, minimum height = 2cm] (r2) at (4.5,-2.5) {};
				\node at (4.3, -4.4) {\parbox[t]{1in}{generate $w_b{\cdot}\$$ repeatedly with probability one}};
				
				\path [-] (s) edge [above] node {} (bs);
				
				\path [->] (bs) edge [above] node {$\frac{1}{2}$} (sa)
				(bs) edge [above] node {$\frac{1}{2}$} (sb)
				(sa) edge [left] node {$\frac{1}{2}$} (r1)
				(bsa) edge [above] node {$\frac{1}{2}$} (r2)
				(sb) edge [right] node {$\frac{1}{2}$} (r2)
				(bsb) edge [above] node {$\frac{1}{2}$} (r1);
				
			\end{tikzpicture}
			\caption{An example MC in the proof of \cref{lemma:reduction-from-NFA-universality}. Assume $\Sigma_{\$} = \{a, b, \$\}$. The labelling of the MC is as follows: $\ell(s_0) = \ell(s_1) = a$ and $\ell(s_2) = \$$.}
			\label{fig:fork-MC}            
		\end{minipage}
	\end{figure}
	
	Following from \cref{lemma:NFA-languageInclusion} and \cref{lemma:NBW-GFM}, we have:
	
	\begin{lem}\label{lemma:reduction-from-NFA-universality}
		The NBW $\fork(\mathcal A_f)$ is GFM if, and only if, $\lang(\A) = \Sigma^{*}$.
	\end{lem}
	
	
	\begin{proof}
		We first observe that $\lang\big(\fork(\A_f)\big) = \{\sigma \sigma' w \mid \sigma,\sigma' \in \Sigma_\$,\ w\in \lang(\A_f)\} \cup 
		\{\sigma \$ w \mid \sigma \in \Sigma_\$,\ w\in \lang(\B_f)\}$.
		
		\textbf{`if':} 
		When $\lang(\A) = \Sigma^{*} = \lang(\B)$ holds, \cref{lemma:NFA-languageInclusion} provides
		$\{\sigma \sigma' w \mid \sigma,\sigma' \in \Sigma_\$,\ w\in \lang(\A_f)\} \supset 
		\{\sigma \$ w \mid \sigma \in \Sigma_\$,\ w\in \lang(\B_f)\}$, and therefore
		$\lang(\fork(\A_f)) = \{\sigma \sigma' w \mid \sigma,\sigma' \in \Sigma_\$,\ w\in \lang(\A_f)\}$.
		
		As $\A_f$ is GFM by \cref{lemma:NBW-GFM}, this provides the GFM strategy ``move first to $q_{\A}$, then to $q_0$, and henceforth follow the GFM strategy of $\A_f$ for $\fork(\A_f)$''.
		Thus, $\fork(\A_f)$ is GFM in this case.
		
		\textbf{`only if':} Assume $\lang(\A)\neq \Sigma^{*} = \lang(\B)$, that is, $\lang(\A) \subset \lang(\B)$. There must exist words $w_a \in \lang(\A)$ and $w_b\in \lang(\B) \setminus \lang(\A)$. We now construct an MC which witnesses that $\fork(\A_f)$ is not GFM. 
		
		The MC emits an $a$ at the first step and then either an $a$ or a $\$$ with a chance of $\frac{1}{2}$ at the second step. An example is provided in \cref{fig:fork-MC}. 
		
		After these two letters, it then moves to one of two cycles (independent of the first two chosen letters) with equal chance of $\frac{1}{2}$;
		one of these cycles repeats a word $w_a {\cdot} \$$ infinitely often, while the other repeats a word $w_b {\cdot} \$$ infinitely often.
		
		It is easy to see that the semantic chance of acceptance is $\frac{3}{4}$  -- failing if, and only if, the second letter is $a$ and the word $w_b\$$ is subsequently repeated infinitely often -- whereas the syntactic chance of satisfaction is $\frac{1}{2}$:
		when the automaton first moves to $q_{\A}$, it accepts if, and only if, the word $w_a\$$ is later repeated infinitely often, which happens with a chance of $\frac{1}{2}$; when the automaton first moves to $q_{\B}$, it will reject when $\$$ is not the second letter, which happens with a chance of $\frac{1}{2}$. 
	\end{proof}
	
	It follows from \cref{lemma:reduction-from-NFA-universality} that the NFA universality problem is polynomial-time 
	reducible to the problem of whether or not a given NBW is GFM.
	\begin{thm}\label{theorem:PSPACE-harness}
		The problem of whether or not a given NBW is GFM is {\sf PSPACE}-hard.
	\end{thm}
	
	Using the same construction of \cref{def:NFA-to-NBA}, we can show that the problem of minimising a GFM NBW is {\sf PSPACE}-hard. 
	
	\begin{restatable}{thm}{theoremMinimisation}\label{theorem:minimisation-GFM}
		Given a GFM NBW and a bound $k$, the problem whether there is a language equivalent GFM NBW with at most $k$ states is {\sf PSPACE}-hard. It is {\sf PSPACE}-hard even for (fixed) $k = 2$.
	\end{restatable}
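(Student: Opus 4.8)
The plan is to reduce from the NFA universality problem \cite{StockmeyerM73}, reusing the construction of \cref{def:NFA-to-NBA}, but first \emph{bracketing} the words that have to be tested so that non-universality cannot be witnessed by a merely ``local'' defect. Concretely, given a complete NFA $\A$ over $\Sigma=\{a,b\}$, I first build in polynomial time a complete NFA $\A'$ over the same alphabet with $\lang(\A') = (\Sigma^*\setminus b\Sigma^*b)\cup\{b\,w\,b \mid w\in\lang(\A)\}$. This is the ``problem similar to NFA universality'': since $bvb\notin\lang(\A')$ exactly for the missing words $v\notin\lang(\A)$, we have $\lang(\A')=\Sigma^*$ iff $\lang(\A)=\Sigma^*$, so deciding $\lang(\A')=\Sigma^*$ remains {\sf PSPACE}-hard. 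I then output the NBW $\A'_f$ of \cref{def:NFA-to-NBA} together with the bound $k=2$. By \cref{lemma:NBW-GFM}, $\A'_f$ is GFM, so this is a legal instance of the minimisation problem.

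For \textbf{completeness}, suppose $\lang(\A)=\Sigma^*$, hence $\lang(\A')=\Sigma^*=\lang(\B)$ for the universal NFA $\B$ of \cref{fig:B-and-Bf}. Then \cref{lemma:NFA-languageInclusion} gives $\lang(\A'_f)=\lang(\B_f)$, and $\lang(\B_f)$ is exactly the set of $\omega$-words with infinitely many $\$$. This language is recognised by the \emph{two}-state NBW $\B_f$ of \cref{fig:B-and-Bf}(b), which is GFM by \cref{lemma:NBW-GFM}. Hence a language-equivalent GFM NBW with $k=2$ states exists.

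For \textbf{soundness}, I must show that if $\lang(\A)\neq\Sigma^*$ then no NBW with at most two states recognises $\lang(\A'_f)$ (a fortiori no such GFM NBW). Since $\lang(\A'_f)\subseteq\lang(\B_f)$ always, and equality holds precisely in the universal case, the non-universal case yields a \emph{proper} sublanguage, and the task is to rule out any two-state recogniser for it. Fixing a witness $v\notin\lang(\A)$, the bracketed block $bvb\notin\lang(\A')$ is ``bad'', whereas $a\in\lang(\A')$ is ``good''. I use this to separate ultimately periodic words: $(a\$)^\omega\in\lang(\A'_f)$ and $(bvb\$)^n(a\$)^\omega\in\lang(\A'_f)$, against $(bvb\$)^\omega\notin\lang(\A'_f)$ and $(a\$)^n(bvb\$)^\omega\notin\lang(\A'_f)$. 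A fooling argument on the accepting loops of a hypothetical two-state NBW then forces a third state: two states suffice only for the frequency condition ``infinitely many $\$$'', but confirming a \emph{bracketed} good block infinitely often requires, while reading inside a single $\$$-delimited block, first detecting the leading delimiter $b$, then simulating $\A$ on the enclosed word, and finally checking the trailing $b$; this cannot be encoded in two states once $\lang(\A)\neq\Sigma^*$.

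The bracketing is exactly what makes this step go through. Without it, a non-universal $\A$ such as $\lang(\A)=\Sigma^+$ would give $\lang(\A_f)=\{\,\text{infinitely many }\$\text{ and infinitely many letters}\,\}$, which a two-state deterministic (hence GFM) B\"uchi automaton already recognises, collapsing the reduction; the $b\cdots b$ delimiters make every non-universality defect non-local and thus unrecognisable with two states. The \textbf{main obstacle} is precisely this soundness lower bound: turning the above separation into a rigorous ``at least three states'' argument uniformly for every non-universal $\A$, whereas completeness and the {\sf PSPACE}-hardness of the bracketed universality question are routine. Combining the two directions, a language-equivalent GFM NBW with at most $k=2$ states exists iff $\lang(\A)=\Sigma^*$, which proves {\sf PSPACE}-hardness even for the fixed bound $k=2$.
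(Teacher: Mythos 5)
Your overall architecture matches the paper's: reduce (a variant of) NFA universality through the construction of \cref{def:NFA-to-NBA} with the fixed bound $k=2$, and your completeness direction is correct --- when $\lang(\A')=\Sigma^*$ the language $\lang(\A'_f)$ collapses to ``infinitely many $\$$'', which the two-state GFM automaton $\B_f$ recognises. The bracketing is a legitimate way to reduce full universality to the condition that a two-state analysis can actually test (namely $\Sigma^+\subseteq\lang(\A')$, with $\varepsilon\in\lang(\A')$ holding for free); the paper instead changes the source problem to ``does a nonempty complete NFA accept all nonempty words''.

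The genuine gap is the soundness direction, which you yourself flag as the main obstacle and then only gesture at. The ``fooling argument on the accepting loops'' is not carried out, and the words you list do not obviously yield it: an accepting run on $(bvb\$)^n(a\$)^\omega$ places its final-state visits in the $(a\$)^\omega$ tail, so pumping the $(bvb\$)^n$ prefix between repeated states only produces more words of the same accepted shape and never an accepting run on $(bvb\$)^\omega$. The argument that actually works (and is the one the paper gives) is structural rather than word-based: any two-state NBW for $\lang(\A'_f)$ has exactly one final state $f$ and one nonfinal state $n$; no word in $\Sigma^+$ can loop at $f$, since pumping such a loop after any reachable visit to $f$ yields an accepting run on a word with finitely many $\$$'s, whereas every word of $\lang(\A'_f)$ contains infinitely many $\$$'s; hence every letter of $\Sigma$ leads from either state only to $n$, and for any accepted word beginning with a letter of $\Sigma$ there must be a $\$$-transition from $n$ to $f$; but then $(u\$)^\omega$ is accepted for \emph{every} $u\in\Sigma^+$, forcing $\Sigma^+\subseteq\lang(\A')$ and hence $\lang(\A)=\Sigma^*$. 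Your intuition that the automaton must ``detect the leading $b$, simulate $\A$, and check the trailing $b$'' is not what the proof needs. Separately, the stated motivation for the bracketing is incorrect: for $\lang(\A)=\Sigma^+$ the language ``infinitely many $\$$ and infinitely many $\Sigma$-letters'' is \emph{not} recognised by any two-state deterministic (or complete nondeterministic) B\"uchi automaton with state-based acceptance --- the same no-loop-at-$f$ analysis rules it out --- so the bracketing is not needed for the reason you give, although it does no harm and conveniently disposes of the empty-word edge case.
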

	\begin{proof}
		Using the construction of \cref{def:NFA-to-NBA}, {\sf PSPACE}-hardness follows from a reduction from the problem whether a non-empty complete NFA is universal. 

       We show that for any non-empty complete NFA~$\A$, a minimal GfM NBW equivalent to~$\A_f$ has exactly two states if and only if~$\A$ is universal.
        To this end, let $\A$ be a non-empty complete NFA.
        The following observations hold.
        
        \begin{itemize}
        \item
        Any GfM NBW equivalent to $\A_f$ must have at least two states: one accepting and one non-accepting.
        Indeed, since $\A$ is non-empty, $\A_f$ accepts some infinite word and therefore requires an accepting state.
        At the same time, words containing only finitely many occurrences of $\$$ must be rejected, we therefore require a non-accepting state.
        
        \item
        Consider a two-state GfM NBW equivalent to $\A_f$.
        There cannot exist a word $w \in \Sigma^{+}$ that has a run from the accepting state back to itself.
        Otherwise, this would yield an accepting run that visits the accepting state infinitely often while containing only finitely many occurrences of $\$$, contradicting the definition of $\A_f$.
        
        Hence, starting from the accepting state, any finite word over $\Sigma^{+}$ must lead to the non-accepting state and remain there.
        Blocking is not possible, since every accepting prefix admits an accepting infinite continuation.
        It follows that either state transitions to the non-accepting state on every letter in $\Sigma$.
        
        Furthermore, for a word beginning with a letter in $\Sigma$ to be accepted, there must be a transition on $\$$ from the non-accepting state to the accepting state.
        
        \item
        Finally, in any two-state GfM NBW equivalent to $\A_f$, the accepting state must have some transitions on $\$$, either to itself or to the non-accepting state.
        Indeed, any run that reaches the accepting state must not block upon reading $\$$.
        
        \end{itemize}
                \begin{figure}[h]
            \centering
            \begin{tikzpicture}[xscale=.7,shorten >=1pt,node distance = 1cm, on grid, auto]
            \usetikzlibrary{positioning,arrows,automata}
            
            \node[state] (A) at (0,0) {$q_0$};
            \node[state, accepting] (B) [right of = A]  at (3,0) {$f$};
            \path[->] (A) edge [out=30,in=150,looseness=1] node [midway, above]{$\$$} (B);
            \path[->] (B) edge [out=-150,in=-30,looseness=1] node [midway, above]{$\Sigma$} (A);    
            \path[->] (A) edge [loop left] node [midway, left] {$\Sigma$}  (A);
            \path[->,dashed] (A) edge [loop below] node [midway, below] {$\$ \,$ \textcolor{red}{\textcircled{\raisebox{-0.9pt}{3}}}}  (A);
            \path[->, dashed] (B) edge [out=-120,in=-60,looseness=1] node [midway, below]{$\$\,$ \textcolor{red}{\textcircled{\raisebox{-0.9pt}{2}}}} (A);     
            \path[->,dashed] (B) edge [loop right] node [midway, right] {$\$\,$ \textcolor{red}{\textcircled{\raisebox{-0.9pt}{1}}}}  (B);
            \end{tikzpicture}
            \caption{ There are six language equivalent NBWs with two states. All six NBWs have three common transitions (shown as the solid paths). Each NBW has different transitions shown as the dashed paths: \textcircled{\raisebox{-0.9pt}{1}}, \textcircled{\raisebox{-0.9pt}{1}}\textcircled{\raisebox{-0.9pt}{2}}, 
            \textcircled{\raisebox{-0.9pt}{1}}\textcircled{\raisebox{-0.9pt}{3}}, 
            \textcircled{\raisebox{-0.9pt}{1}}\textcircled{\raisebox{-0.9pt}{2}}\textcircled{\raisebox{-0.9pt}{3}}, 
            \textcircled{\raisebox{-0.9pt}{2}},
            \textcircled{\raisebox{-0.9pt}{2}}\textcircled{\raisebox{-0.9pt}{3}}. 
            For example, \textcircled{\raisebox{-0.9pt}{1}} represents the NBW with the dashed transition \textcircled{\raisebox{-0.9pt}{1}} only, and \textcircled{\raisebox{-0.9pt}{2}}\textcircled{\raisebox{-0.9pt}{3}} represents the NBW with both dashed transitions \textcircled{\raisebox{-0.9pt}{2}} and \textcircled{\raisebox{-0.9pt}{3}}.  } 
            \label{fig:reduction-NFA-universality}
        \end{figure}
        
        Taken together, these constraints imply that any two-state GfM NBW equivalent to $\A_f$ must coincide with one of the six automata shown in \cref{fig:reduction-NFA-universality}.
        Moreover, all these automata recognise the same language as $\A_f$ precisely when $\A$ is universal.
	\end{proof}

	\section{Decision Procedure for Qualitative GFM}\label{section:QGFM-procedure}
	In this section, we first define the notion of qualitative GFM (QGFM) and then provide an {\sf EXPTIME} procedure that decides QGFM-ness.
	
	The definition of QGFM is similar to GFM but we only need to consider MCs with which the semantic chance of success is one:
	\begin{defi}\label{def:QGFM}
		An NBW $\N$ is qualitative good-for-MDPs (QGFM) if, for all finite MDPs $\M$ with initial state $s_0$ and $\mathrm{PSem}_{\N}^{\M}(s_0) = 1$, $\mathrm{PSyn}_{\N}^{\M}(s_0) = 1$ holds.  
	\end{defi}
	
	Similar to \cref{theorem:MC-MDP-interchangable}, we can also replace `MDPs' by `MCs' in the definition of QGFM:
	\begin{thm}\label{theorem:MC-MDP-interchangable-QGFM}
		An NBW $\N$ is QGFM iff, for all finite MCs $\M$ with initial state $s_0$ and $\mathrm{PSem}_{\N}^{\M}(s_0) = 1$, $\mathrm{PSyn}_{\N}^{\M}(s_0) = 1$ holds.
	\end{thm}
	
	To decide QGFM-ness, we make use of the well known fact that qualitative acceptance, such as $\mathrm{PSem}_{\N}^{\M}(s_0) = 1$,
	does not depend on the probabilities for an MC $\M$. This can, for example, be seen by considering the syntactic product with a DPW $\mathcal D$, as $\mathrm{PSem}_{\mathcal D}^{\M}(s_0) = \mathrm{PSyn}_{\mathcal D}^{\M}(s_0)$ trivially holds for a deterministic automaton $\D$,
	where changing the probabilities does not change the end-components of the product MC $\M \times \D$, and the acceptance of these end-components is solely determined by the highest colour of the states (or transitions) occurring in it, and thus also independent of the probabilities:
	the probability is one if, and only if, an accepting end-component can be reached almost surely, which is also independent of the probabilities. As a result, we can search for the (regular) tree that stems from the unravelling of an MC, while disregarding probabilities. See \cref{fig:example-MC-tree}(b) for an example of such a tree. 
	
	This observation has been used in the synthesis of probabilistic systems before \cite{Schewe2006}.
	The set of directions (of a tree) $\Upsilon$ could then, for example, be chosen to be the set of states of the unravelled finite MC; this would not normally be a full tree.
	
	In the following, we show an exponential-time algorithm to decide whether a given NBW is QGFM or not. This procedure involves transformations of tree automata with different acceptance conditions.
	Because this is quite technical, we only provide an outline in the main paper. 
	More notations and details of the constructions are provided in \cref{app:def-tree-automata} and \cref{app:constructions}, respectively.
	
	
	
	\begin{figure*}[t]
		\centering
		\scalebox{1}{
			\begin{tikzpicture}[xscale=.7,>=latex',shorten >=1pt,node distance=3cm,on grid,auto]
				
				\node[label] (C) at (0.2,0) {$\C$};
				\node[label] (TC) at (1.8,0) {$\T_{\C}$};
				\node[label] (WTC) at (5.5,0) {$\widetilde{\T}_{\C}$};
				\node[label] (UC) at (8.5,0) {$\U_{\C}$};
				\node[label] (UCw) at (11.05,  0) {$\U_{\C}^{w}$};
				\node[label] (UCp) at (13.6, 0) {$\U_{\C}^{p}$};
				\node[label] (UCP) at (15.6, 0) {$\U_{\C}'$};
				
				\node[label] (DPC) at (19.3,0) {$\D_{\C}$};
				
				\path[->] (C) edge node [midway, above] {} (TC);
				\path[->] (C) edge node [midway, below] {} (TC);	
				\path[->] (TC) edge node [midway, above] {{ dualise \&}} (WTC);
				\path[->] (TC) edge node [midway, below] {{complement}} (WTC);	
				\path[->] (WTC) edge node [midway, above] {strategy} (UC);	
				\path[->] (WTC) edge node [midway, below] {explicit} (UC);				
				\path[->] (UC) edge node [midway, above] {widen} (UCw);	
				\path[->] (UCw) edge node [midway, above] {prune} (UCp);
				\path[->] (UCp) edge node [midway, above] {$\Upsilon$} (UCP);
				\path[->] (UCp) edge node [midway, below] {free} (UCP);
				
				\path[->, red] (UCP) edge node [midway, above] { determinise} (DPC);
				
				\node[label] (G) at (0.2,-2.5) {$\G$}; 
				\node[label] (NG) at (5.5,-2.5) {$\T_{\G}$};
				\node[label] (NGD) at (8.5,-2.5) {$\D_{\G}$};
				\node[label] (DGw) at (11.05,-2.5) {$\D_{\G}^w$};
				\node[label] (DGp) at (13.6,-2.5) {$\D_{\G}^p$};
				
				\node[label] (DGP) at (19.3, -2.5) {$\D_{\G}'$};
				
				
				\path[->] (G) edge node [midway, above] {} (NG);
				\path[->] (NG) edge node [midway, above] {strategy} (NGD);	
				\path[->] (NG) edge node [midway, below] {explicit} (NGD);
				\path[->] (NGD) edge node [midway, above] {widen} (DGw);
				\path[->] (DGw) edge node [midway, above] {prune} (DGp);
				
				\path[->, red] (C) edge node [midway, right, xshift=0.0cm, yshift=0.2cm] {$\lang(\C) = \lang(\G)$,} (G);
				\path[->, red] (C) edge node [midway, right, xshift=0.0cm, yshift=-0.3cm] {$\G$ GFM} (G);
				\path[->, red] (C) edge node [midway, left] {} (G);
				
				\path[->] (DGp) edge node [midway, above] {$\Upsilon$} (DGP);
				\path[->] (DGp) edge node [midway, below] {free} (DGP);
				
				\path[<->, dashed, red] (WTC) edge node [midway, right, xshift=0.0cm, yshift=-0.3cm] {\color{gray}would be \sf{2-EXPTIME}} (NG);	
				\path[<->, dashed, red] (WTC) edge node [midway, right, xshift=0.0cm, yshift=0.2cm] {$\lang(\widetilde{\T}_{\C}) \cap \lang(\T_{\G}) \stackrel{?}{=} \emptyset$} (NG);	
				\path[<->, dashed] (DPC) edge node [midway, left, xshift=0.0cm, yshift=0.2cm] {$\lang(\D_{\C}) \cap \lang(\D_{\G}') \stackrel{?}{=} \emptyset$} (DGP);	
				\path[<->, dashed] (DPC) edge node [midway, left, xshift=-0.3cm, yshift=-0.3cm] {overall \sf{EXPTIME}} (DGP);	
				
				
				
			\end{tikzpicture}
		}
		\caption{\label{fig:algorithms}%
			Flowchart of the algorithms. The first algorithm is in {\sf 2-EXPTIME}, which is to check the non-emptiness of the intersection of $\widetilde{\T}_{\C}$ and $\T_{\G}$. The second algorithm is in {\sf EXPTIME}, which is to check the non-emptiness of the intersection of $\D_{\C}$ and $\D_{\G}'$. The steps that have exponential blow-up are highlighted in red.}
	\end{figure*}
	

	For a given candidate NBW $\C$, we first construct a language equivalent NBW $\mathcal G$ that we know to be GFM, such as a slim automaton from \cite{HPSSTWBP2020} or a suitable limit deterministic automaton \cite{Vardi85,Courco95,Hahn15,Sicker16b,HahnPSSTW19}.
	For all known constructions, $\G$ can be exponentially larger than $\C$.
	We use the slim automata from \cite{HPSSTWBP2020}; they have $O(3^{|Q|})$ states and transitions.

	We then construct a number of tree automata as outlined in \cref{fig:algorithms}.
	In a first construction, we discuss in \cref{app:constructions}
    how to build, for an NBW $\mathcal N$, a (symmetric) alternating B\"uchi tree automaton $\mathcal T_{\mathcal N}$ that accepts (the unravelling of) an MC (without probabilities, as discussed above) $\mathcal M$ if, and only if, the syntactic product of $\mathcal N$ and $\M$ almost surely accepts. This construction is used twice: once to produce $\T_{\C}$ for our candidate automaton $\mathcal C$, and once to produce $\T_{\G}$ for the GFM automaton $\G$ we have constructed. Since $\G$ is QGFM, $\T_{\G}$ accepts all the MCs $\mathcal M$ that almost surely produce a run in $\lang(\G)=\lang(\C)$, that is,  $\mathrm{PSem}_{\G}^{\M}(s_0) = \mathrm{PSem}_{\C}^{\M}(s_0) = 1$. 
	
	Therefore, to check whether or not our candidate NBW $\mathcal C$ is QGFM, we can test language equivalence of $\T_\C$ and $\T_\G$, e.g.\ by first complementing $\T_\C$ to $\widetilde{\T}_\C$ and then checking whether or not $\lang(\widetilde{\T}_{\C}) \cap \lang(\T_{\G}) = \emptyset$ holds: the MCs in the intersection of the languages witness that $\C$ is not QGFM. Thus, 
	$\C$ is QGFM if, and only if, these languages do not intersect, that is, $\lang(\widetilde{\T}_{\C}) \cap \lang(\T_{\G}) = \emptyset$. 
    This construction yields a {\sf 2-EXPTIME} procedure for deciding QGFM:
    it depends on the size of the larger automaton~$\G$ and on the complexity of the smaller automaton~$\C$. 
    The purpose of our delicate construction is to contain the exponential cost to the syntactic material of the smaller automaton, while still obtaining the required level of entanglement between the structures and retaining the size advantage from the GFM property of $\G$.


    Starting from $\widetilde{\T}_{\C}$, we make a few transformations, mainly by controlling
    the number of directions the alternating tree automaton needs to consider and the set of
    decisions that player {\it accept} has to make.
    The acceptance of a tree by a tree automaton can be viewed as the outcome of a game played
    by player {\it accept} and player {\it reject}.
    We refer to the appendix for details.
    This restricts the scope in such a way that the resulting intersection may shrink, but cannot become empty unless it is empty to begin with.

	
    We rein in the number of directions in two steps. In the first step, we \emph{increase} the number of directions by widening the run tree to have one more direction than the size of the state space of the candidate automaton $\C$. 
    This larger set of directions allows us to untangle the decisions of player {\it accept}, both within and across $\widetilde{\T}_{\C}$ and $\mathcal T_{\mathcal G}$. 
    Intuitively, this creates one distinguished direction for each state $q$ of $\widetilde{\T}_{\C}$ that may be selected by player {\it accept}, and one additional, distinct direction reserved for $\mathcal T_{\mathcal G}$. 
    
    In the second step, we retain only these distinguished directions. The resulting automaton therefore has a \emph{fixed} branching degree—namely, one greater than the size of the state space of $\C$—which can be analysed using standard techniques.


    
    By \emph{standard techniques}, we mean first making the remaining choices of player {\it accept} in $\widetilde{\T}_{\C}$ explicit, which turns it into a universal co-B\"uchi automaton $\mathcal U_\C$. 
    This automaton is then simplified to a universal co-B\"uchi automaton $\mathcal U_\C'$, which can be determinised into a deterministic parity automaton $\D_\C$.
    
    For $\T_\G$, a similar sequence of transformations is applied. However, since complementation is not required in this case, the automaton obtained after making the choices explicit is already deterministic. This avoids the exponential blow-up incurred when determinising a universal automaton, as in the transformation from $\mathcal U_\C'$ to $\D_\C$.
    
    As a result, both $\D_\C$ and $\D_\G'$ can be constructed from $\C$ in time exponential in the size of $\C$, and checking the emptiness of their intersection can likewise be performed in exponential time. We thus obtain membership in $\sf EXPTIME$ for QGFM-ness:	
    \begin{restatable}[QGFM is in {\sf EXPTIME}]{thm}{theoremQGFMIntersectionMain}\label{theorem:QGFM-exptime}
    For an NBW $\C$ with $n$ states over an alphabet $\Sigma$, we can decide in time polynomial in $\max\{|\Sigma|,n!\}$ whether or not $\C$ is QGFM and, if $\C$ is not QGFM, construct an MC $\M$ with initial state $s_0$ and $\mathrm{PSyn}_{\C}^{\M}(s_0) \neq \mathrm{PSem}_{\C}^{\M}(s_0) = 1$.
    \end{restatable}
    
	

	\section{Membership in EXPTIME for GFM}\label{section:EXPTIME-algorithm}
	In this section, we start out with showing a sufficient (\cref{lemma:GFM-NBW-if}) and necessary (\cref{lemma:GFM-NBW-onlyif}) criterion for a candidate NBW to be GFM in \cref{ssec:decidingGFM}.
	
	We show in \cref{ssec:QGFMisGFM} that this criterion is also sufficient and necessary for QGFM-ness.
	This implies that GFM-ness and QGFM-ness collapse, so that the {\sf EXPTIME} decision procedure from \cref{section:QGFM-procedure} can be used to decide GFM-ness, and the {\sf PSPACE}-hardness from \cref{section:PSPACE-hardness} extends to QGFM.
	
	\subsection{Key Criterion for GFM-ness}\label{ssec:decidingGFM}
	
	To establish a necessary and sufficient criterion for GFM-ness, we construct two safety automata, $\mathcal S$ and $\mathcal T$, where a safety automaton is one where all states are final. These automata can be viewed as NFAs where convenient.
	
	
	Given a candidate NBW $\C$, we define some notions for the states and transitions. We say a state $q$ of $\C$ is productive if $\lang(\C_q) \neq \emptyset$ where $\C_q$ is the automaton obtained from $\C$ by making $q$ the initial state. A state $q$ of the NBW $\C$ is called QGFM if the automaton $\mathcal C_q$ is QGFM. A transition $(q, \sigma, r)$ is called residual if $\lang(\C_{r}) = \sigma^{-1}\lang(\C_{q})$ 
	\cite{KuperbergS15,BagnolKuperberg2018}.  
	In general, $\lang(\C_{r}) \subseteq \sigma^{-1}\lang(\C_{q})$ holds.
	See \cref{fig:NBW} for an example of non-residual transitions. Selecting either of the two transitions from $q_0$ will lose language: when selecting the transition to the left, the word $a\cdot b^\omega$ is no longer accepted. Likewise, when selecting the transition to the right, the word $a^\omega$ is no longer accepted. Thus, this automaton cannot make the decision to choose the left or the right transition, and neither $(q_0, a, q_1)$ nor $(q_0, a, q_2)$ is a residual transition.

	Now we are ready to define $\mathcal S$ and $\mathcal T$. In the NBW $\mathcal S$, we include the states from the candidate NBW $\C$ that are productive and QGFM at the same time. We only include the residual transitions (in $\C$) between those states. In the NBW $\mathcal T$, we include only the productive states of $\C$ and the transitions between them. 
    We first show that the following criterion is sufficient for $\C$ to be GFM: 
    $\lang(\mathcal S) = \lang(\mathcal T)$ and $\mathcal S$ is GFG. 
    Although GFG as a general property can be subtle, $\mathcal S$ is a safety automaton, and GFG safety automata are determinisable by pruning which can be checked in polynomial time~\cite{Colcombet12}.
    Similar to the proof of \cref{lemma:NBW-GFM}, to show the NBW $\C$ is GFM, we show there exists a strategy for $\C$ such that the syntactic and semantic chance of winning are the same for any MC. 
	
	
	

	\begin{restatable}{lem}{lemmaGFMNBW}\label{lemma:GFM-NBW-if}
		If $\lang(\mathcal S) = \lang(\mathcal T)$ and $\mathcal S$ is GFG, then the candidate NBW $\C$ is GFM.
	\end{restatable}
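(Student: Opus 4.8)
The plan is to mirror the structure of the proof of \cref{lemma:NBW-GFM}: for an arbitrary MC we exhibit a strategy resolving the nondeterminism of $\C$ whose syntactic acceptance probability matches the semantic one. By \cref{theorem:MC-MDP-interchangable} it suffices to treat MCs $\M$ with initial state $s_0$, and since $\mathrm{PSyn}_{\C}^{\M}(s_0)\le\mathrm{PSem}_{\C}^{\M}(s_0)$ always holds, only $\mathrm{PSyn}_{\C}^{\M}(s_0)\ge\mathrm{PSem}_{\C}^{\M}(s_0)$ needs to be shown. If $\lang(\C)=\emptyset$ both sides are $0$; otherwise $q_0$ is productive, and since $\lang(\mathcal T)\neq\emptyset$ the hypothesis $\lang(\mathcal S)=\lang(\mathcal T)$ forces $q_0$ to be a state of $\mathcal S$, hence productive and QGFM. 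First I would exploit that $\mathcal S$ is a GFG safety automaton: such automata are determinisable by pruning, so I prune $\mathcal S$ to a \emph{deterministic} safety automaton $\mathcal S'$ with $\lang(\mathcal S')=\lang(\mathcal S)=\lang(\mathcal T)$ whose transitions are still residual transitions of $\C$ between productive QGFM states. The decisive consequence is that $\mathcal S'$ \emph{tracks residuals exactly}: after reading a finite word $w$ that keeps $\mathcal S'$ alive and leads to state $q$, we have $w^{-1}\lang(\C)=\lang(\C_q)$, because residual transitions compose. Moreover $\lang(\C)\subseteq\lang(\mathcal T)=\lang(\mathcal S')$, since every state on an accepting run of $\C$ has a nonempty residual and is thus productive, so accepting words have runs inside $\mathcal T$; consequently every run of $\M$ whose label word lies in $\lang(\C)$ keeps $\mathcal S'$ alive forever.

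Next I would analyse the MC $\M'=\M\times\mathcal S'$ (a genuine MC, as $\mathcal S'$ is deterministic), whose label word on each run equals that of the underlying $\M$-run and whose runs either block (when the label word leaves $\lang(\mathcal T)$) or survive forever. By \cref{theorem:end-component-properties}, almost every surviving run is absorbed by a bottom end-component $C$ of $\M'$; at every state $(s,q)$ of such a $C$ the automaton component $q$ is QGFM and the residual is locked to $\lang(\C_q)$. The key structural claim is that within each bottom end-component $C$ the value $\mathrm{PSem}_{\C_q}^{\M'_C}((s,q))=\Pr_{(s,q)}[\ell(r)\in\lang(\C_q)]$ is independent of the state $(s,q)\in C$ and equals either $0$ or $1$; I would call $C$ \emph{accepting} when it is $1$. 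This $0/1$ dichotomy follows from exact residual tracking: since any return to a state whose $\mathcal S'$-component is $q$ re-establishes the residual $\lang(\C_q)$, the event $\{\ell(r)\in\lang(\C_q)\}$ is invariant under the regenerative shift at that state, hence a tail event of an irreducible finite MC and thus almost sure or almost impossible (equivalently, this is the standard end-component characterisation of qualitative acceptance, applied to a language-equivalent deterministic automaton that tracks the same residuals). Locking of residuals then yields the decomposition $\mathrm{PSem}_{\C}^{\M}(s_0)=\Pr_{\M'}[\text{reach an accepting bottom end-component}]$.

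Finally I would invoke QGFM-ness to win inside accepting end-components. For an accepting bottom end-component $C$ and any of its states $(s,q)$ we have $\mathrm{PSem}_{\C_q}^{\M'_C}((s,q))=1$; since $q$ is QGFM, \cref{def:QGFM} yields $\mathrm{PSyn}_{\C_q}^{\M'_C}((s,q))=1$, and because achieving B\"uchi acceptance in a finite product MDP reduces to reaching an accepting end-component, this value is \emph{attained} by an actual finite-memory strategy $\tau_{C,q}$ on $\M'_C\times\C$ that resolves the nondeterminism of $\C$ and accepts with probability $1$ when started with $\C$ in state $q$. The global strategy on $\M\times\C$ then carries $\mathcal S'$ as finite memory and runs in two modes: while $\M'$ is in its transient part it keeps $\C$ \emph{synchronised} with $\mathcal S'$ by following the residual transitions of $\mathcal S'$ (legal in $\C$, and keeping $\C$ productive and QGFM), and upon first entering the absorbing bottom end-component $C$ at a state $(s,q)$, at which moment $\C$ is exactly in state $q$, it switches to $\tau_{C,q}$. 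On the event of reaching an accepting end-component, which has probability $\mathrm{PSem}_{\C}^{\M}(s_0)$, this strategy accepts with probability $1$; hence $\mathrm{PSyn}_{\C}^{\M}(s_0)\ge\mathrm{PSem}_{\C}^{\M}(s_0)$, and as $\M$ was arbitrary, $\C$ is GFM.

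I expect the main obstacle to be the structural claim of the second paragraph, namely that $\mathcal S'$ tracks residuals exactly and that semantic acceptance is therefore a $0/1$ (tail) event, uniform over each bottom end-component, since this is precisely what licenses the application of the QGFM hypothesis, which only constrains MCs of semantic value exactly $1$. The synchronise-then-switch assembly of the strategy and the determinisation-by-pruning of the GFG safety automaton are comparatively routine once this dichotomy and the residual-locking are established.
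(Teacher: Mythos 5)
Your proof is correct, and it shares the paper's two central ingredients: the induction showing that residual transitions compose, so that steering $\C$ along $\mathcal S$ keeps it in a state $q$ with $\lang(\C_q)=w^{-1}\lang(\C)$, and the switch to a QGFM strategy of $\C_q$ once the semantic value of the current residual is $1$. Where you differ is in how the switching point is identified and why the semantic value there is $1$. The paper determinises $\C$ to a DPW $\D$, works in $\M\times\D$, and switches upon entry into an \emph{accepting end-component of $\M\times\D$}, reading off $\mathrm{PSem}_{\C_{q}}^{\M}(s)=1$ from \cref{theorem:end-component-properties} together with $\lang(\C_q)=\lang(\D_{q^\D})$. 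You instead determinise $\mathcal S$ by pruning, pass to the bottom end-components of $\M\times\mathcal S'$, and establish a $0/1$ law for semantic acceptance there via a tail-event/regeneration argument, which is legitimate precisely because residual locking makes $\{\ell(r)\in\lang(\C_q)\}$ invariant under the shift at returns to $(s,q)$ (without that locking the claim would be false, as acceptance probabilities from a state of a product with an arbitrary deterministic automaton need not be $0$ or $1$). Your route avoids any determinisation of $\C$ inside the proof, at the price of a measure-theoretic $0$--$1$ argument, and it produces a single uniformly defined finite-memory strategy, whereas the paper's run-by-run phrasing (``for an arbitrary accepting run $r$ of $\M\times\D$ there is a strategy\dots'') leaves that uniformity implicit. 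Two small points to make explicit in a full write-up: on runs whose label blocks $\mathcal S'$ the synchronised strategy is undefined, but such labels lie outside $\lang(\T)\supseteq\lang(\C)$ and contribute nothing to $\mathrm{PSem}$, so any completion suffices; and the bottom end-components of $\M\times\mathcal S'$ that are not sinks are closed under all transitions of $\M$, which is what lets you apply the QGFM hypothesis to the restricted chain $\M'_C$ and transfer the resulting strategy back to $\M\times\C$.
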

	
	\begin{proof}
		As $\mathcal T$ contains all states and transitions from $\mathcal S$, $\lang(\mathcal S) \subseteq \lang(\mathcal T)$ always holds.
		We assume that $\lang(\mathcal S) \supseteq \lang(\mathcal T)$ holds and $\mathcal S$ is GFG.
		
		By \cref{theorem:MC-MDP-interchangable}, to show $\C$ is GFM, it suffices to show that $\C$ is good for an arbitrary MC $\M$ with initial state $s_0$. 
		We first determinise $\C$ to a DPW $\D$ \cite{Piterman07}. Since $\D$ is deterministic and $\lang(\D) = \lang(\C)$, we have $\mathrm{PSyn}_{\D}^{\M}(s_0) = \mathrm{PSem}_{\D}^{\M}(s_0) = \mathrm{PSem}_{\C}^{\M}(s_0)$. Since $\mathrm{PSem}_{\C}^{\M}(s_0) \ge \mathrm{PSyn}_{\C}^{\M}(s_0)$ always holds, we establish the equivalence of syntactic and semantic chance of winning for $\M \times \C$ by proving $\mathrm{PSyn}_{\C}^{\M}(s_0) \ge \mathrm{PSyn}_{\D}^{\M}(s_0)$.

        We consider the MC $\M' = \M \times \D$; naturally, $\M$ and $\M'$ have the same probability of success,
        that is, $\mathrm{PSyn}_{\C}^{\M}(s_0) = \mathrm{PSyn}_{\C}^{\M'}(s_0')$, where $s_0'$ is the initial state of $\M'$.
        It suffices to show that there exists a strategy for~$\M' \times \C$ such that, for every accepting run~$r$ of~$\M \times \D = \M'$, the corresponding run induced by this strategy in~$\M' \times \C$ is accepting.
		
		
		Consider an accepting run $r$ of $\M \times \D = \M'$. 
        It must enter an accepting end-component of $\M'$ eventually. 
		Let the run $r$ be $ (s_0, q_0^{\D}) (s_1, q_1^{\D}) \cdots$ and it enters the accepting end-component on reaching the state $(s_n, q_n^{\D})$.  
		Before $r$ enters an accepting end-component of $\M'$, $\C$ follows the GFG strategy for $\mathcal S$ to stay within the states that are productive and QGFM. 
		Upon reaching an accepting end-component of $\M'$, the run $r$ is in state $(s_n, q_n^{\D})$, assume the run for $\M' \times \C$ is in state $(s_n, q_n^{\D}, q_n^{\C})$ at this point. We then use the QGFM strategy of $\C_{q_n^{\C}}$ from here since $q_n^{\C}$ is QGFM. 
		
		We briefly explain why this strategy for $\M' \times \C$ would lead to $\mathrm{PSyn}_{\C}^{\M'}(s_0) \ge \mathrm{PSyn}_{\D}^{\M}(s_0)$. Since $(s_n, q_n^{\D})$ is in the accepting end-component of $\M' = \M \times \D$, we have $\mathrm{PSem}_{\D_{q_n^{\D}}}^{\M}(s_n) = \mathrm{PSyn}_{\D_{q_n^{\D}}}^{\M}(s_n) = 1$ by \cref{theorem:end-component-properties}. 
		We also have $\mathrm{PSem}_{\C_{q_n^{\C}}}^{\M'}(s_n) = 1$ so that $\mathrm{PSyn}_{\C_{q_n^{\C}}}^{\M'}(s_n) = \mathrm{PSem}_{\C_{q_n^{\C}}}^{\M'}(s_n) = 1$ as $q_n^{\C}$ is QGFM.
		This is because $\lang(\C_{q_n^{\C}}) = \lang(\D_{q_n^{\D}}) = w^{-1}\lang(\C)$ where $w = \ell(s_0s_1\cdots s_{n-1})$, which can be proved by induction 
        over the length of the prefix of words from $\lang(\mathcal S)$ as follows. 
		
		Induction Basis ($w=\varepsilon$): It trivially holds that $\lang(\C)=\lang(\C_{q_0^\C})=\varepsilon^{-1}\lang(\C)$ where $q_0^\C$ is the initial state of $\C$, and $q_0^{\C}$ is productive and QGFM if $\lang(\mathcal S) \neq \emptyset$.
        Otherwise, $\lang(\mathcal S) = \emptyset$ and there is nothing to show.
		
		Induction Step ($w \mapsto w {\cdot} \sigma$):
		If $w {\cdot} \sigma$ is not a prefix of a word in $\lang(\mathcal S)$, there is nothing to show.
		If $w {\cdot} \sigma$ is a prefix of a word in $\lang(\mathcal S)$, then obviously $w$ is a prefix of the same word in $\lang(\mathcal S)$, and $\mathcal S$ was in a state $q$, which is QGFM and with $\lang(\C_q)=w^{-1}\lang(\C)$.
		As $\mathcal S$, with the transition function denoted by $\delta_{\mathcal S}$, is following a GFG strategy, we have that the GFG strategy will select a state $r \in \delta_{\mathcal S}(q,\sigma) \neq \emptyset$.
		By construction, we have 
		$\lang(\C_{r}) = \sigma^{-1}\lang(\C_q)$, 
		which --- together with $\lang(\C_q) = w^{-1}\lang(\C)$, implies $\lang(\C_{r})=(w {\cdot} \sigma)^{-1} \lang(\C)$.
        \qedhere
	\end{proof}

	\begin{figure}[t]
		\centering
		\begin{tikzpicture} [->, node distance = 2cm, auto]
			\usetikzlibrary{positioning,arrows,automata}
			\tikzstyle{BoxStyle} = [draw, circle, fill=black, scale=0.4,minimum width = 1pt, minimum height = 1pt]
			
			\node[rectangle,draw,dashed, minimum width = 6cm, minimum height = 1.7cm] (r2) at (2.8,-0.35) {};
			\node at (5.4, -1) {$\M_{\Sigma}$};
			
			\node[state,fill=orange!40] (sa) at (1,0) {$s_a$};
			\node[BoxStyle] (bsa) at (1, -1) {};
			\node[state,fill=green!30] (sb) at (4,0) {$s_b$};
			\node[BoxStyle] (bsb) at (4, -1) {};
			\node[state] (s0q) at (1,-2.5) {$s_0^{q}$};
			\node[state] (s0t) at (4,-2.5) {$s_0^{t}$};
			\node[rectangle,draw,dashed, minimum width = 1.5cm, minimum height = 2cm] (r1) at (1,-3) {};
			\node at (1, -3.5) {$\M_{q}$};
			\node[rectangle,draw,dashed, minimum width = 1.5cm, minimum height = 2cm] (r2) at (4,-3) {};
			\node at (4, -3.5) {$\M_{t}$};

			\path [->] (sa) edge [near end, left] node {$\frac{1}{4}$} (s0q)
			(bsa) edge [near end, above] node {$\frac{1}{4}$} (sb)
			(bsa) edge [below] node [xshift=-1em,yshift=.7em] {$\frac{1}{4}$} (s0t)
			(sb) edge [near end, right] node {$\frac{1}{4}$} (s0t)
			(bsb) edge [below] node [xshift=1em,yshift=.7em] {$\frac{1}{4}$} (s0q)
			(bsb) edge [near end, above] node {$\frac{1}{4}$} (sa);
			
			\path[->] (bsa) edge [out=180,in=195,looseness=2] node [midway, left] {$\frac{1}{4}$} (sa);
			
			\path[->] (bsb) edge [out=0,in=-15,looseness=2] node [midway, right] {$\frac{1}{4}$} (sb);
			
		\end{tikzpicture}
		\caption{An example MC in the proof of \cref{lemma:GFM-NBW-onlyif}. In this example, $\Sigma = \{a, b\}$. Also, the state $q$ of the candidate NBW $\C$ is the only state that is not QGFM and the transition $t$ of the NBW $\C$ is the only non-residual transition. 
        }
		\label{fig:non-GFM-MC}  
	\end{figure}
	
	In order to show that this requirement is also necessary, we build an MC witnessing that $\C$ is not GFM in case the criterion is not satisfied. An example MC is given in \cref{fig:non-GFM-MC}.
	We produce the MC by parts. It has a central part denoted by $\M_{\Sigma}$. The state space of $\M_{\Sigma}$ is $S_{\Sigma}$ where $S_{\Sigma} = \{s_{\sigma} \mid \sigma \in \Sigma\}$. Each state $s_{\sigma}$ is labelled with $\sigma$ and there is a transition between every state pair.
	
	For every state $q$ that is not QGFM, we construct an MC $\M_q = \langle S_q,P_q,\Sigma,\ell_q\rangle$ 
	from \cref{section:QGFM-procedure} witnessing that $\C_q$ is not QGFM, 
	that is, from a designated initial state $s_0^q$, 
	$\mathrm{PSyn}_{\C_q}^{\M_q}(s_0^q) \neq \mathrm{PSem}_{\C_q}^{\M_q}(s_0^q) = 1$,
	and for every non-residual transition $t=(q,\sigma,r)$ that is not in $\mathcal S$ due to 
	$\lang(\C_r) \neq \sigma^{-1}\lang(\C_q)$,
	we construct an MC $\M_t  = \langle S_t,P_t,\Sigma,\ell_t\rangle$ such that, from an initial state $s_0^t$, there is only one ultimately periodic word $w_t$ produced, such that 
	$w_t \in \sigma^{-1}\lang(\C_q) \setminus \lang(\C_r)$.

	Finally, we produce an MC $\M$, whose states are the disjoint union of the MCs 
	$\M_\Sigma$, $\M_q$ and $\M_t$ from above. 
	The labelling and transitions within the MCs $\M_q$ and $\M_t$ are preserved while, from the states in $S_\Sigma$, $\M$ also transitions to all initial states of the individual $\M_q$ and $\M_t$ from above.
	It remains to specify the probabilities for the transitions from $S_\Sigma$: any state in $S_\Sigma$ transitions to its successors uniformly at random.
	
	\begin{lem}\label{lemma:GFM-NBW-onlyif}
		If $\lang(\mathcal S) \neq \lang(\mathcal T)$ or $\mathcal S$ is not GFG, the candidate NBW $\C$ is not GFM.
	\end{lem}
	
	\begin{proof}
		
		Assume $\lang(\mathcal S) \neq \lang(\mathcal T)$. There must exist a word $w = \sigma_0, \sigma_1, \ldots \in \lang(\T) \setminus \lang(S)$. 
		
		Let us use $\M$ with initial state  $s_{\sigma_0}$ as the MC which witnesses that $\C$ is not GFM. We first build the product MDP $\M \times \C$. There is a non-zero chance that, no matter how the choices of $\C$ (thus, the product MDP $\M \times \C$) are resolved, a state sequence $(s_{\sigma_0},q_0),(s_{\sigma_1},q_1),\ldots,(s_{\sigma_k},q_k)$ with $k \geq 0$ is seen,  and $\C$ selects a successor $q$ such that $(q_k,\sigma_k,q)$ is not a transition in $\mathcal S$.
		
		For the case that this is because $\C_q$ is not QGFM, we observe that there is a non-zero chance that the product MDP moves to $(s_0^q,q)$, such that $\mathrm{PSyn}_{\C}^{\M}(s_{\sigma_0}) < \mathrm{PSem}_{\C}^{\M}(s_{\sigma_0})$ follows.
		
		For the other case that this is because the transition $t = (q_k,\sigma_k,q)$ is non-residual, that is, 
		$\lang(\C_q) \neq {\sigma_{k}}^{-1}\lang(\C_{q_k})$,
		we observe that there is a non-zero chance that the product MDP moves to $(s_0^t,q)$, such that $\mathrm{PSyn}_{\C}^{\M}(s_{\sigma_0}) < \mathrm{PSem}_{\C}^{\M}(s_{\sigma_0})$ follows.
		
		
		For the case that $\mathcal S$ is not GFG, no matter how the nondeterminism of $\C$ is resolved, there must be a shortest word $w = \sigma_0,\ldots,\sigma_k$ ($k \ge 0$) such that $w$ is a prefix of a word in  $\lang(\mathcal S)$, but the selected control leaves $\mathcal S$. For this word, we can argue in the same way as above.
	\end{proof}
	
	\cref{lemma:GFM-NBW-if} and \cref{lemma:GFM-NBW-onlyif} suggest that GFM-ness of a NBW can be decided in {\sf EXPTIME} by checking whether the criterion holds or not. However, as shown in the next section that QGFM = GFM, we can apply the {\sf EXPTIME} procedure from \cref{section:EXPTIME-algorithm} to check QGFM-ness, and thus, GFM-ness. 
	
	\subsection{QGFM = GFM}\label{ssec:QGFMisGFM}
	
	
	To show that QGFM = GFM, we show that the same criterion from the previous section is also sufficient and necessary for QGFM. By definition, if an NBW is GFM then it is QGFM. Thus, the sufficiency of the criterion follows from \cref{lemma:GFM-NBW-if}. We are left to show the necessity of the criterion. To do that, we build an MC $\M'$ witnessing that $\C$ is not QGFM in case the criterion of \cref{lemma:GFM-NBW-onlyif} is not satisfied. We sketch in \cref{fig:non-QGFM-MC} the construction of $\M'$.
	
	\begin{figure}[t]
		\centering
		\begin{tikzpicture} [->, node distance = 2cm, auto]
			\usetikzlibrary{positioning,arrows,automata}
			\tikzstyle{BoxStyle} = [draw, circle, fill=black, scale=0.4,minimum width = 1pt, minimum height = 1pt]
			
			\node[rectangle,draw,dashed, minimum width = 3.5cm, minimum height = 2cm] (r3) at (1,-0.3) {};
			\node at (1, -1) {$\M_{\Sigma}'$};
			\node[state,fill=orange!40] (sa) at (-0.1,0) {$s_a, p$};
			\node[state,fill=green!30] (sb) at (1.8,0) {$s_b, p$};

			\node[rectangle,draw,dashed, minimum width = 1.5cm, minimum height = 2cm] (r4) at (4,-0.3) {};
			\node at (4, -1) {$\M_{p, a}$};
			
			\node[rectangle,draw,dashed, minimum width = 1.5cm, minimum height = 2cm] (r5) at (6,-0.3) {};
			\node at (6, -1) {$\M_{p, b}$};
			\node at (8, -0.3) {$\cdots$};
			
			\node[state] (s0q) at (1,-2) {$s_0^{q}$};
			\node[state] (s0t) at (4,-2) {$s_0^{t}$};
			\node[rectangle,draw,dashed, minimum width = 1.5cm, minimum height = 2cm] (r1) at (1,-2.5) {};
			\node at (1, -3) {$\M_{q}$};
			\node at (2.5, -2.5) {$\cdots$};
			
			\node[rectangle,draw,dashed, minimum width = 1.5cm, minimum height = 2cm] (r2) at (4,-2.5) {};
			\node at (4, -3) {$\M_{t}$};
			\node at (5.5, -2.5) {$\cdots$};
			
			\path[->] (sa) edge [out=30,in=150,looseness=1] node [midway, left] {} (r4);
			\path[->] (sb) edge [out=30,in=150,looseness=1] node [midway, left] {} (r5);
		\end{tikzpicture}
		\caption{An illustration of the MC in the proof of \cref{lemma:QGFM-NBW-onlyif}. In this example, we have $\Sigma = \{a, b\}$. The new central part $\M_{\Sigma}'$ is obtained by removing the states that have no outgoing transitions in the cross product of $\M_{\Sigma}$ and $\D'$. 
        For each state $(s_{\sigma},p)$ of the central part, we construct an MC $\M_{p, \sigma}$ using $\D'$. There is a transition from each state $(s_{\sigma}, p)$ of $\M_{\Sigma}'$ to the initial state of $\M_{p, \sigma}$. The MCs $\M_q$ and $\M_t$ are as before. Whether there is a transition from a state from $\M_{\Sigma}'$ to the MCs $\M_q$ and $\M_t$ is determined by the overestimation provided by $\D'$. }
		\label{fig:non-QGFM-MC}  
	\end{figure}

	The principle difference between the MC $\M'$ constructed in this section and $\M$ from the previous section is that the new MC $\M'$ needs to satisfy that $\mathrm{PSem}_{\C}^{\M'}(s_0) = 1$ ($s_0$ is the initial state of $\M'$), while still forcing the candidate NBW $\mathcal C$ to make decisions that lose probability of success, leading to  $\mathrm{PSyn}_{\C}^{\M'}(s_0) < 1$.
	This makes the construction of $\M'$ more complex, 
	but establishes that qualitative and full GFM are equivalent properties.
	
	The MC will also be constructed by parts and it has a central part. It will also have the MCs $\M_q$ for each non-QGFM state $q$ and $\M_t$ for each non-residual transition $t$ from the previous section. We now describe the three potential problems of $\M$ of the previous section before presenting the possible remedies.

    \begin{enumerate}
        \item 	The first potential problem is in the central part as it might contain prefixes that cannot be extended to words in $\lang(\C)$. Such prefixes should be excluded. 
        
        \item 	The second problem is caused by the transitions to all MCs $\M_q$ and $\M_t$ from every state in the central part.  
    	
        \item 	Removing the transitions to some of the $\M_q$ and $\M_t$ can potentially create a third problem, namely that no transitions to $\M_q$ and $\M_t$ are left so that there is no way leaving the central part of the MC. 
        
    \end{enumerate}


	
	
	

    We construct two automata, $\mathcal D$ and $\mathcal D'$, to address all the problems we have identified.
    For the construction of the MC, it is required that all finite runs starting from the initial state $s_0$ --- before transitioning to $\M_q$ or $\M_t$ --- that can be extended to a word in the language of $\mathcal{C}$ are retained.
	The language of all such initial sequences is a safety language, and it is easy to construct an automaton that (1) recognises this safety language and (2) retains the knowledge of how to complete each word in the language of $\C$.
	To this end, we first determinise $\C$ into a DPW $\mathcal D$ \cite{Piterman07}. 
    We then remove all non-productive states from $\mathcal D$ and mark all remaining states as final, yielding the safety automaton $\mathcal D'$. 	
    We now use these two automata to address each problem in turn:
    \begin{enumerate}
        \item To address the first problem, we build a cross product MC of $\M_{\Sigma}$ (the central part of $\M$ in \cref{ssec:decidingGFM}) and $\D'$. 
        We then remove all the states in the product MC without any outgoing transitions and make the resulting MC the new central part denoted by $\M_{\Sigma}'$. 
        Every state in $\M_{\Sigma}'$ is of the form $(s_\sigma, p)$ where $s_{\sigma} \in S_{\Sigma}$ is from $\M_{\Sigma}$ and $p \in \D'$.  
        \item  
        The states of the deterministic automaton $\D$ (and thus those of $\D'$) also provide information about the possible states of $\C$ that could be after the prefix we have seen so far. 
        To address the second problem, we use this information to overestimate whether $\C$ could be in some state $q$, or use a transition $t$, which in turn is good enough for deciding whether or not to transition to the initial states of $\M_q$ (resp.\ $\M_t$) from every state of the new central part.
        \item 	To address the third problem, we build, for every state $p$ of $\mathcal D'$ and every letter $\sigma \in \Sigma$ such that $\sigma$ can be extended to an accepted word from state $p$, an MC $\M_{p, \sigma}$ that produces a single lasso word $w_{p, \sigma}$ with probability one. 
        The word $\sigma {\cdot} w_{p, \sigma}$ 
        will be accepted from state $p$, that is, $\sigma {\cdot} w_{p, \sigma} \in \lang(\D_p)$. 
	    From every state $(s_\sigma, p)$ of the central part, there is a transition to the initial state of $\M_{p,\sigma}$.
    \end{enumerate}
	    
		
	
	The final MC $\M'$ transitions uniformly at random, from a state $(s_{\sigma}, p)$ in $\M_{\Sigma}'$, to one of its successor states, which comprise 
	the initial state of $\M_{p, \sigma}$ and the initial states of some of the individual MCs $\M_q$ and $\M_t$. 
	
	\begin{lem}\label{lemma:QGFM-NBW-onlyif}  
		If $\lang(\mathcal S) \neq \lang(\mathcal T)$ or $\mathcal S$ is not GFG, the candidate NBW $\C$ is not QGFM.
	\end{lem}
	
	\begin{proof}
		The proof of the difference in the probability of winning in case $\lang(\mathcal S) \neq \lang(\mathcal T)$ or in case $\mathcal S$ is not GFG are the same as in \cref{lemma:GFM-NBW-onlyif}.
		
		We additionally have to show that $\mathrm{PSem}_{\C}^{\M'}(s_0) = 1$.
		But this is easily provided by the construction: when we move on to some $\M_{p, \sigma}$, $\M_q$, or $\M_t$, we have sure, almost sure, and sure satisfaction, respectively, of the property by construction, while staying forever in the central part of the new MC happens with probability zero.
	\end{proof}
	
	By definition, if a candidate NBW $\C$ is GFM, it is QGFM. Together with \cref{lemma:GFM-NBW-if} and \cref{lemma:QGFM-NBW-onlyif}, we have that $\lang(\mathcal S) = \lang(\mathcal T)$ and $\mathcal S$ is GFG iff the candidate NBW $\C$ is QGFM. Thus, we have
	\begin{thm}\label{theorem:QGFM-GFM-equivalence}
		The candidate NBW $\C$ is GFM if, and only if, $\C$ is QGFM.
	\end{thm}
	
		
		

	By~\cref{theorem:QGFM-exptime} and \cref{theorem:QGFM-GFM-equivalence}, we have:

	\begin{cor}\label{corollary:GFM-EXPTIME}
		The problem of whether or not a given NBW is GFM is in {\sf EXPTIME}.
	\end{cor}
	
	Likewise, by \cref{theorem:PSPACE-harness}, \cref{theorem:minimisation-GFM}, and \cref{theorem:QGFM-GFM-equivalence}, we have:
	
	\begin{cor}\label{corollary:PSPACE-harness}
		The problem of whether or not a given NBW is QGFM is {\sf PSPACE}-hard.
		Given a QGFM NBW and a bound $k$, the problem whether there is a language equivalent QGFM NBW with at most $k$ states is {\sf PSPACE}-hard. It is {\sf PSPACE}-hard even for (fixed) $k = 2$.
	\end{cor}

    
\section{Succinctness of GFM Automata}\label{sec:succinct}

In this section, we show that GFM automata are exponentially more succinct than GFG automata.
We also show that general nondeterministic automata are exponentially more succinct than GFM automata.
This even holds even when we restrict the class of general nondeterministic automata to run-unambiguous reachability or safety automata, run-unambiguous separating safety automata, or separating automata recognising a reachability language.

\subsection{GFM vs GFG} \label{subsec:GfMvsGfG}


We first recall the standard result that NFAs are exponentially more succinct than deterministic finite automata (DFAs), and use a textbook example of a family of languages that witnesses this, namely
\[\lang_n = \{0, 1\}^{*}1\{0, 1\}^{n-1}\$\ ,\]
the language of those words $w \in \{0, 1\}^{*}\$$, where the $n$-th letter before $\$$ in $w$ is $1$.
NFAs recognising this language require $n+2$ states; see \cref{fig:NFA} for the NFA $\A_n$ defined below.
Any DFA recognising the same language requires more than $2^n$ states.
The intuition is that a DFA needs to store the previous $n$ letters in their order of occurrence in order to recognise this language.
Let $\Sigma = \{0, 1, \$\}$. Define the NFA $\A_n = (\Sigma, Q, q_0, \delta_\A, \{f\})$ where $Q = \{q_0, \ldots, q_n, f\}$ and $\delta_\A$ is defined as follows:
$\delta_\A (q_0, 0) = \{q_0\}$; 
$\delta_\A (q_0, 1) = \{q_0, q_1\}$; 
$\delta_\A (q_i, \sigma) = \{q_{i+1}\}$ for all $i \in \{1, \ldots, n-1\}$ and $\sigma \in \{0, 1\}$;
$\delta_\A (q_n, \$) = \{f\}$.

\begin{figure}[t]
\centering
\begin{tikzpicture} [->, node distance = 2cm, auto,initial text = {}]
    \node[initial left,state] (q0) at (0,1) {$q_0$};
    \node[state] (q1) at (2,1) {$q_1$};
    \node[state] (q2) at (4,1) {$q_2$};
    \node[state] (q3) at (6,1) {$q_3$};
    \node (qi) at (7,1) {$\ldots$};
    \node[state] (qn) at (8,1) {$q_n$};
    \node[state, accepting] (qf) at (10,1) {$f$};

    \path (q0) edge [loop above] node {$0,1$} (q0);
    \path (q0) edge node {$1$} (q1);
    \path (q1) edge node {$0,1$} (q2);
    \path (q2) edge node {$0,1$} (q3);
    \path (qn) edge node {$\$$} (qf);
\end{tikzpicture}
\caption{The NFA $\A_n$ recognising $\lang_n$.}
\label{fig:NFA}
\end{figure}

\begin{figure}[t]
\centering
\begin{tikzpicture} [->, node distance = 2cm, auto,initial text = {}]
    \node[initial left,state] (q0) at (0,1) {$q_0$};
    \node[state] (q1) at (2,1) {$q_1$};
    \node[state] (q2) at (4,1) {$q_2$};
    \node[state] (q3) at (6,1) {$q_3$};
    \node (qi) at (7,1) {$\ldots$};
    \node[state] (qn) at (8,1) {$q_n$};
    \node[state, accepting] (qf) at (10,1) {$f$};

    \path (q0) edge [loop above] node {$0,1,\$$} (q0);
    \path (q0) edge [bend left] node [above] {$1$} (q1);
    \path (q1) edge [bend left] node [below]{$0,1$} (q2);
    \path (q1) edge [bend left] node [above] {$\$$} (q0);
    \path (q2) edge [bend left] node [below] {$0,1$} (q3);
    \path (q2) edge [bend left] node [below, near start, yshift=0.2em]{$\$$} (q0);
    \path (q3) edge [bend left]  node [below, near start, yshift=0.2em]{$\$$} (q0);
    \path (qn) edge node {$\$$} (qf);
    \path (qn) edge [bend left] node [below, near start, yshift=0.1em] {$0,1$}  (q0);
    \path (qf) edge [bend left] node [below, near start] {$0,1,\$$}  (q0);

\end{tikzpicture}
\caption{The NBW $\G_n$ recognising $\lang_n^{\omega}$.}
\label{fig:NBA}
\end{figure}




We then build an NBW from the standard NFA for this language that recognises the $\omega$-language of infinite words that contain infinitely many sequences of the form $1\, \{0,1\}^{n-1} \$$,
such that the resulting automaton has the same states as the NFA, and almost the same structure (cf.\ \cref{fig:NFA,fig:NBA}).
We call this language $\lang_n^\omega$ and show that the resulting NBW that recognises it is GFM.
Any DBW or GFG NBW recognising the same language would require exponentially more states than this GFM NBW.

For any $n$, there is an NBW with $n+2$ states recognising $\lang_{n}^\omega$. 
Define the NBW $\G_n = (\Sigma, Q, q_0, \delta, \{f\})$ where $Q = \{q_0, \ldots, q_n, f\}$ and  $\delta$ is defined as follows:
$\delta (q_0, 0) = \{q_0\}$; 
$\delta (q_0, 1) = \{q_0, q_1\}$; 
$\delta (q_i, \sigma) = \{q_{i+1}\}$ for all $i \in \{1, \ldots, n-1\}$ and $\sigma \in \{0, 1\}$;
$\delta (q_n, \$) = \{f\}$;
$\delta (q_n, \sigma) = \{q_0\}$ for $\sigma \in \{0, 1\}$; 
$\delta (f, 0) = \{q_0\}$; 
$\delta (f, 1) = \{q_0\}$; 
$\delta (q, \$) = \{q_0\}$ for all $q_n \neq q \in Q$.

\begin{lem}\label{lem:gn_correct}
$\G_n$ recognises $\lang_n^\omega$.
\end{lem}
\begin{proof}
\textbf{`$\lang(\G_n) \subseteq \lang_n^\omega$':} 
It is easy to see that by the construction of $\G_n$, to visit $f$ infinitely often, $\G_n$ has to take the path $q_0, 1, q_1, \sigma_1, q_2, \ldots, q_{n}, \$, f$ on the word $1\sigma_1\cdots\sigma_{n-1}\$ \in \lang_n$ where $\sigma_i \in \{0,1\}$ for all $i \in \{1, \ldots, n-1\}$ infinitely often.

\textbf{`$\lang_n^\omega \subseteq \lang(\G_n)$':} 
We only need to show there is an accepting run of $\G_n$ on any word in $\lang_n^\omega$.
A word $w \in \lang_n^\omega$ is of the form $w_1w_1'w_2w_2'w_3w_3'\ldots$ such that, for all $i\in \nat$, $w_i \in \{0, 1, \$\}^*$ and $w_i' \in \lang_n$.
Let $w_i' = u_i1v_i$ where $u_i \in \{0, 1\}^*$ and $v_i \in \{0, 1\}^{n-1}\$$ for all $i > 0$. We can build the run as follows.

For odd indices $i$, $\G_n$ always stays in $q_0$ on $w_iu_i$, transitions to $q_1$ on $1$ and to $f$ on $v_i$;
(at the beginning, $\G_n$ starts from $q_0$ and stays in $q_0$ on $w_1u_1$, transitions to $q_1$ on $1$ and to $f$ on $v_1$.)
For even indices $i$, $\G_n$ starts from the state $f$, then traverses to $q_0$ and stays in $q_0$ on $w_iw_i'$. 


For every $w_i' \in \lang_n$ such that $i$ is odd, the run we build will visit the accepting state $f$. 
As there are infinitely many such $w_i'$s, the run will visit the accepting state $f$ infinitely often, and thus, is accepting.  
\end{proof}

$\G_n$ is obtained from $\A_n$ by 
returning to $q_0$ whenever the resulting automaton would block.
The intuition why $\G_n$ is GFM is that, if there is any sequence in $\lang_n$ in a BSCC\footnote{A strongly connected component (SCC) is a maximally strongly connected set of states. 
A bottom SCC (BSCC) is an SCC from which no state outside it is reachable.} of an MC, then it can almost surely try infinitely often to facilitate it for acceptance.
Each of these attempts will succeed with a positive probability $p>0$, so that almost surely infinitely many of these infinite attempts will succeed.
The formal proof is similar to that of \cref{lemma:NBW-GFM} and \cref{lemma:GFM-NBW-if}.

\begin{restatable}{lem}{lemGnGFM}\label{lem:gfm}
$\G_n$ is GFM.
\end{restatable}
\begin{proof}

Consider an arbitrary MC $\M$ with initial state $s_0$. We show that $\G_n$ is good for $\M$, that is, $\mathrm{PSem}_{\G_n}^{\M}(s_0) = \mathrm{PSyn}_{\G_n}^{\M}(s_0)$. It suffices to show $\mathrm{PSyn}_{\G_n}^{\M}(s_0) \ge \mathrm{PSem}_{\G_n}^{\M}(s_0)$ since by definition the converse $\mathrm{PSem}_{\G_n}^{\M}(s_0) \ge \mathrm{PSyn}_{\G_n}^{\M}(s_0)$ always holds.

First, we construct a language equivalent deterministic B\"uchi automaton (DBW) $\D_n$. 
We start with the NFA $\A_n'$, see \cref{fig:NFA2}, which recognises the language $\{w \in \{0, 1\}^{*} \mid \text { the $n$-th to the last letter in $w$ is $1$}\}$. 
The NFA $\A_n'$ is very similar to $\A_n$: it has the same states $q_i$ for $i \in \{0, \ldots, n\}$ and the same transitions between them; 
the only difference is that $q_n$ is accepting without any outgoing transitions as there is no $f$ state in $\A_n'$. 
We then determinise $\A_n'$ to a DFA $\B_n'$ by a standard subset construction and then obtain $\B_n$ from $\B_n'$ by adding a fresh $f$ state, adding a $\$$ transition from all accepting states to $f$ and making $f$ the only accepting state. 
It is easy to see that $\B_n$ is deterministic and $\lang(\A_n) = \lang(\B_n)$.  
Now we obtain the DBW $\D_n$ from the DFA $\B_n$ in the same way as we obtain the NBW $\G_n$ from the NFA $\A_n$.

\begin{figure}[ht]
\centering
\begin{tikzpicture} [->, node distance = 2cm, auto,initial text = {}]
    \node[initial left,state] (q0) at (0,1) {$q_0$};
    \node[state] (q1) at (2,1) {$q_1$};
    \node[state] (q2) at (4,1) {$q_2$};
    \node[state] (q3) at (6,1) {$q_3$};
    \node (qi) at (7,1) {$\ldots$};
    \node[state, accepting] (qn) at (8,1) {$q_n$};

    \path (q0) edge [loop above] node {$0,1$} (q0);
    \path (q0) edge node {$1$} (q1);
    \path (q1) edge node {$0,1$} (q2);
    \path (q2) edge node {$0,1$} (q3);
\end{tikzpicture}
\caption{The NFA $\A_n'$, which recognises the language $\{w \in \{0, 1\}^{*} \mid \text { the $n$-th to the last letter in $w$ is $1$}\}$.}
\label{fig:NFA2}
\end{figure}

Since $\lang(\G_n) = \lang(\D_n)$, we have that $\mathrm{PSem}_{\G_n}^{\M}(s_0) = \mathrm{PSem}_{\D_n}^{\M}(s_0)$. In addition, since $\D_n$ is deterministic, we have $\mathrm{PSem}_{\D_n}^{\M}(s_0) = \mathrm{PSyn}_{\D_n}^{\M}(s_0)$.
  
It remains to show $\mathrm{PSyn}_{\G_n}^{\M}(s_0) \ge \mathrm{PSyn}_{\D_n}^{\M}(s_0)$. 
We consider the MC $\M' = \M \times \D_n$; naturally, $\M$ and $\M'$ have the same probability of success,
that is, $\mathrm{PSyn}_{\G_n}^{\M}(s_0) = \mathrm{PSyn}_{\G_n}^{\M'}(s_0')$, where $s_0'$ is the initial state of $\M'$.

It suffices to show that there exists a strategy for~$\M' \times \G_n$ such that, for every accepting run~$r$ of~$\M \times \D_n = \M'$, the corresponding run induced by this strategy in~$\M' \times \G_n$ is accepting.

Similar to the proofs of \cref{lemma:NBW-GFM} and \cref{lemma:GFM-NBW-if},
this strategy for $\M' \times \G_n$ is to track a word $w \in \lang(\B_n)$ (and thus $w \in \lang(\A_n)$) once an accepting end-component of $\M' = \M \times \D_n$ is entered.
Once in an accepting end-component of $\M \times \D_n$, tracking is almost surely started infinitely often, and it is thus almost surely successful infinitely often.
Thus, we have
$\mathrm{PSyn}_{\G_n}^{\M}(s_0) = \mathrm{PSyn}_{\G_n}^{\M'}(s_0')\geq \mathrm{PSyn}_{\D_n}^{\M}(s_0)$.
\end{proof}

We obtain a lower bound of $2^{n-1}$ states for DBWs recognising the same language:
\begin{thm}\label{cor:Gn-exponential-Dn}
GFM B\"uchi automata are exponentially more succinct than DBWs.   
\end{thm}
\begin{proof}
We show that any DBW recognising $L_n^\omega$ must have at least $2^{n-1}$ states.

Assume, towards a contradiction, that there exists a DBW $\B$ recognising $L_n^\omega$ with strictly fewer than $2^{n-1}$ states.
Consider the $2^n$ distinct words in $\{0,1\}^n$.
Starting from the initial state $q_0$ of $\B$, by the pigeonhole principle, at least three of these words lead to the same state.

Among these three words, there exist two—denoted $u_1$ and $v_1$—whose runs from $q_0$ either both visit an accepting state or both avoid accepting states entirely.
Since $u_1 \neq v_1$, let $i \ge 0$ be the first position at which they differ, and let $w_1 = 0^{i}\$$.
By construction, exactly one of the words $u_1 w_1$ and $v_1 w_1$ belongs to $L_n$.
Without loss of generality, assume $u_1 w_1 \in L_n$ and $v_1 w_1 \notin L_n$.

Both $u_1 w_1$ and $v_1 w_1$ reach the same state of $\B$.
From this state, the argument can be iterated inductively.
At each step $k \ge 1$, we select two distinct words $u_k$ and $v_k$ that reach the same state of $\B$ and whose runs either both visit some accepting states or both avoid accepting states, together with a finite word $w_k$ such that
$u_k w_k \in L_n$ and $v_k w_k \notin L_n$.

This yields two infinite words
\[
u_1 w_1 u_2 w_2 u_3 w_3 \cdots
\quad\text{and}\quad
v_1 w_1 v_2 w_2 v_3 w_3 \cdots
\]
such that exactly one of them belongs to $L_n^\omega$.
By construction, after reading each finite sequence
$u_k w_k$ and $v_k w_k$,
the automaton $\B$ reaches the same state and the two corresponding runs have either both visited some accepting states or both not visited any accepting state.
Hence, the two infinite runs either both visit accepting states infinitely often or both visit them only finitely often -- a contradiction.
\end{proof}

Noting that the DBW can be chosen minimal with at least $2^{n-1}$ states, it is now easy to infer that GFM B\"uchi automata are exponentially more succinct than GFG B\"uchi automata by using the result that GFG B\"uchi automata are only up to quadratically more succinct than DBW for the same language \cite{KuperbergS15}.
The size of GFG automata is therefore in $\Omega(\sqrt{2^n})=\Omega(2^{n/2})$.

\begin{cor}\label{cor:Gn-exponential}
GFM B\"uchi automata are exponentially more succinct than GFG B\"uchi automata.   
\end{cor}

\subsection{GFM vs General Nondeterminism}
\label{section:GfMvsNondeterminism}

We show that general nondeterministic automata are exponentially more succinct than GFM automata.
As we can see in the following, this even holds when the languages are restricted to reachability or safety languages.
Moreover, using the same languages, we observe that unambiguous automata --- and even more restricted subclasses thereof --- can be exponentially more succinct than GFM automata.



A nondeterministic word automaton is called a \emph{safety automaton} if all states are final (and therefore all runs accepting), and a \emph{reachability automaton} if there is only a single final state, say $f$, and that final state is a sink, i.e.\ $\delta(f,\sigma) = \{f\}$ for all $\sigma \in \Sigma$.
An automaton is \emph{unambiguous} if every word has at most one accepting run, and \emph{run-unambiguous} if every word has at most one run. 
An automaton is \emph{separated/separating} if different states have disjoint languages; separating automata are therefore unambiguous.

The reachability languages we use are the extension of $\lang_n$ to reachability:
\[\lang_{n}^r = \{w \{0,1,\$\}^{\omega} \mid w \in \lang_n\}\ ,\]
the language of infinite words that contain at least one $\$$, as well as a $1$ that occurs exactly $n$ letters before this first $\$$.
In particular, words where the first $\$$ is one of the first $n$ letters is not part of $\lang_n^r$.



An NBW $\mathcal R_n$ that recognises $\lang_n^r$ is shown in \cref{fig:NRA}.
The automaton is a reachability automaton as there is only one accepting state, which is a sink. 

\begin{figure}[ht]
\centering
\begin{tikzpicture} [->, node distance = 2cm, auto,initial text = {}]
    \node[initial left,state] (q0) at (0,1) {$q_0$};
    \node[state] (q1) at (2,1) {$q_1$};
    \node[state] (q2) at (4,1) {$q_2$};
    \node[state] (q3) at (6,1) {$q_3$};
    \node (qi) at (7,1) {$\ldots$};
    \node[state] (qn) at (8,1) {$q_n$};
    \node[state, accepting] (qf) at (10,1) {$f$};

    \path (q0) edge [loop above] node {$0,1$} (q0);
    \path (q0) edge node {$1$} (q1);
    \path (q1) edge node {$0,1$} (q2);
    \path (q2) edge node {$0,1$} (q3);
    \path (qn) edge node {$\$$} (qf);
    \path (qf) edge [loop above] node {$0,1,\$$} (qf);

\end{tikzpicture}
\caption{The reachability NBW $\mathcal R_n$ recognising $\lang_n^r$.}
\label{fig:NRA}
\end{figure}

The safety languages we use are the safety closure of an extension of words in $\lang_n$ by infinitely many $\$$s: 
\[\lang_{n}^s = \{w \$^{\omega} \mid w \in \lang_n\} \cup  \{0, 1\}^{\omega}\ , \]

An NBW $\mathcal S_n$ that recognises $\lang_n^s$ is shown in \cref{fig:NSA}.
The automaton is a safety automaton as all states are accepting. 


\begin{figure}[ht]
\centering
\begin{tikzpicture} [->, node distance = 2cm, auto,initial text = {}]
    \node[initial left,state, accepting] (q0) at (0,1) {$q_0$};
    \node[state, accepting] (q1) at (2,1) {$q_1$};
    \node[state, accepting] (q2) at (4,1) {$q_2$};
    \node[state, accepting] (q3) at (6,1) {$q_3$};
    \node (qi) at (7,1) {$\ldots$};
    \node[state, accepting] (qn) at (8,1) {$q_n$};

    \path (q0) edge [loop above] node {$0,1$} (q0);
    \path (q0) edge node {$1$} (q1);
    \path (q1) edge node {$0,1$} (q2);
    \path (q2) edge node {$0,1$} (q3);
    \path (qn) edge [loop above] node {$\$$} (qn);

\end{tikzpicture}
\caption{The safety NBW $\mathcal S_n$ recognising $\lang_n^s$.}
\label{fig:NSA}
\end{figure}

Next, we show that a GFM automaton for these languages needs to have $2^n$ states, not counting accepting or rejecting sinks.
For that, we construct a family of MCs as shown in \cref{fig:MC-big-GfM}.
We show that for an automaton, which recognises $\lang_n^r$ or $\lang_n^s$, to be good for this family of MCs, it needs at least $2^n$ states.

\begin{figure}[ht]
\centering
\begin{tikzpicture} [->, node distance = 2cm, auto,initial text = {}]
    \node[initial left,state] (q0) at (0,1) {$\sigma_1$};
    \node[state] (q1) at (2,1) {$\sigma_2$};
    \node[state] (q2) at (4,1) {$\sigma_3$};
    \node (qi) at (5,1) {$\ldots$};
    \node[state] (q3) at (6,1) {$\sigma_n$};
    \node[state] (q00) at (8,2.45) {$0$};
    \node[state] (q11) at (8,-.45) {$1$};

    \node[state] (qf) at (8,1) {$\$$};

    \path (q0) edge node {$1$} (q1);
    \path (q1) edge node {$1$} (q2);
    \path (q3) edge [above] node {$\frac{1}{4}$} (q00);
    \path (q3) edge [below] node {$\frac{1}{4}$} (q11);

    \path (q3) edge node {$\frac{1}{2}$} (qf);
    \path (q00) edge node {$\frac{1}{2}$} (qf);
    \path (q00) edge [loop above] node {$\frac{1}{4}$} (q00);
    \path (q11) edge [right] node {$\frac{1}{2}$} (qf);
    \path (q11) edge [loop below] node {$\frac{1}{4}$} (q11);
    \path[<->] (q00) edge [out=5, in=-5,looseness=1.5] node {$\frac{1}{4}$} (q11);

    \path (qf) edge [loop right] node {$1$} (qf);

\end{tikzpicture}
\caption{A family of MCs.
A state $s$ is labelled with $s$ itself; in particular, a state $\sigma_i$ is labelled with $\sigma_i \in \{0,1\}$.
}
\label{fig:MC-big-GfM}
\end{figure}

\begin{thm}\label{theorem:GFM-exponential}
 A GFM automaton that recognises $\lang_n^r$ or $\lang_n^s$ has at least $2^n$ states.   
\end{thm}

\begin{proof}
Let $n > 0$ and $\sigma_1,\ldots,\sigma_n \in \{0,1\}$. 
The family of MCs is shown in \cref{fig:MC-big-GfM}.
Since there are $2^n$ combinations of $\sigma_1,\ldots,\sigma_n$, we have $2^n$ MCs in this family. 

The chance that an MC in this family produces a word in $\lang_n^r$ (and $\lang_n^s$) is $2^{-(n+1)}+ \sum_{i=1}^{n} \sigma_i 2^{-i}$.
We briefly explain how we obtain this value.
The chance that the first $\$$ is $i \leq n$ steps after reaching $\sigma_n$ is $2^{-i}$, and the relative chance of winning then is $\sigma_i 2^{-i}$; this is the $\sum_{i=1}^{n} \sigma_i 2^{-i}$ part.
The chance that there never is a $\$$ is $0$.
The chance that there is a $\$$ with more than $n$ steps after reaching $\sigma_n$ is $2^{-n}$. The chance of winning in this case is $\frac{1}{2}$ (as a $0$ and $1$ exactly $n$ steps earlier are equally likely); this is the $2^{-(n+1)}$ part.

Let $\G$ be a GFM automaton that recognises $\lang_n^r$ (or $\lang_n^s$).
Now consider its product with one of the MCs $\M$. 
The chance that $\M$ produces a word in $\lang_n^r$ (and $\lang_n^s$) should also be the value of the state (a pair of $\sigma_n$ and a state in $\G$) in the product $\M \times \G$ when $\sigma_n$ is first reached.
Since the sub Markov chain that contains $\sigma_n$ and $\$$ is the same for all MCs in this family, 
in the product MDP $\M \times \G$,
the state (a pair of $s_n$ and a state in $\G$) when $\sigma_n$ is first reached must be different. 
That is, $\sigma_n$ must be paired with different states of $\G$ for different MCs in this family.
As there are $2^n$ different MCs,
there are at least $2^n$ states in $\G$.
\end{proof}

It can be seen from \cref{fig:NRA} (resp.\ \cref{fig:NSA}) that $\lang_n^r$ (resp.\ $\lang_n^s$) is recognised by a nondeterministic reachability (resp.\ safety) automaton with $n+2$ (resp.\ $n+1$) states. 
With \cref{theorem:GFM-exponential}, we have:
\begin{cor}\label{cor:nba-succinct}
General nondeterministic automata are exponentially more succinct than GFM automata.    
\end{cor}

We now consider the $\mathcal R_n$ and $\mathcal S_n$ more closely to see that we have used small subclasses of nondeterministic B\"uchi automata: $\mathcal R_n$ are unambiguous reachability automata, while $\mathcal S_n$ are run-unambiguous separating safety automata. 
Let us start with the $\mathcal R_n$. 
Note that it is not separating as $f$ is a sink accepting state recognising all $\omega$-words.

\begin{restatable}{lem}{lemRnUnambiguous}
    For each $n\in \mathbb N$, $\mathcal R_n$ is 
    a run-unambiguous reachability automaton.
\end{restatable}
\begin{proof}
It is obviously a reachability automaton, as it has only a single accepting state, which is a sink.

To see that $\mathcal R_n$ is unambiguous, we look at a word in the language $\lang_n^r$ of the automaton. A word in $\lang_n^r$ has the form $\{0,1\}^m1\{0,1\}^{n-1}\}\$\{0,1,\$\}^\omega$, and its only accepting run is $q_0^{m+1},q_1,q_2,\ldots,q_n,f^\omega$:
any run that would move to $q_1$ earlier would block on reading a $0$ or $1$ while in $q_n$, while any run that would move to $q_1$ either later or not at all would block when reading a $\$$ in a state $q_i$ with $i<n$.

To see that $\mathcal R_n$ is run-unambiguous, we first observe that it only has two types of runs:
$q_0^\omega$ and $q_0^{m+1},q_1,q_2,\ldots,q_n,f^\omega$.
$q_0^\omega$ is a run of any word that does not contain a $\$$, while $q_0^{m+1},q_1,q_2,\ldots,q_n,f^\omega$ is a run of a word of which the first $\$$ is after exactly $m+n$ letters of $0$ or $1$, i.e., that the word is of the form $\{0,1\}^{n+m}\$\{0,1,\$\}^\omega$.
These sets of words are pairwise disjoint, so that no word has more than one run.
\end{proof}

\begin{restatable}{lem}{lemSnStrongUnambiguous}
For each $n\in \mathbb N$, $\mathcal S_n$ is a run-unambiguous separating safety automaton.
\end{restatable}
\begin{proof}
Let $n\in \mathbb N$.
$\mathcal S_n$ is obviously a safety automaton, as all states are final.

To see that $\mathcal S_n$ is separating, for all $i>0$ the language accepted from state $q_i$ is $\{0,1\}^{i-1}\$^\omega$,
while the language from state $q_0$ contains only words, whose first $n$ letters are in ${0,1}$ (as it would block on any word containing a $\$$ in its first $n$ letters).

To see that $\mathcal S_n$ is run-unambiguous, we distinguish two types of words: those that do, and those that do not contain a $\$$. For the latter, $q_0^\omega$ is the only run, as all other runs that leave $q_0$ continue with $q_1,q_2,\ldots,q_n$ and then block.
For the former, as the automaton blocks when reading a $\$$ from every state except $q_n$, the only possible run is to move to $q_1$ exactly $n$ steps before the first $\$$ and then continue with $q_2,q_3,\ldots,q_{n-1},q_n^\omega$. (This will only be a run if the word is in $\lang_n^s$, but we only have to show that there is no other run.)
\end{proof}

With \cref{theorem:GFM-exponential}, this lemma provides:

\begin{cor}\label{cor:separated-succinct}
Run-unambiguous separating safety automata are exponentially more succinct than GFM automata.    
\end{cor}

As both $\mathcal R_n$ and $\mathcal S_n$ are run-unambiguous automata by the previous lemmas, \cref{theorem:GFM-exponential} provides:

\begin{cor}\label{cor:unambiguous-succinct}
Run-unambiguous reachability and safety automata are exponentially more succinct than GFM automata.    
\end{cor}

Finally, we can create a separating automaton with $n+2$ states that recognises the reachability language $\lang_n^r$, shown in \cref{fig:SRA}. However, one cannot expect such automata to be (syntactic) reachability automata, because all words are accepted from an accepting sink state.

\begin{figure}[ht]
\centering
\begin{tikzpicture} [->, node distance = 2cm, auto,initial text = {}]
    \node[initial left,state] (q0) at (0,1) {$q_0$};
    \node[state] (q1) at (2,1) {$q_1$};
    \node[state] (q2) at (4,1) {$q_2$};
    \node[state] (q3) at (6,1) {$q_3$};
    \node (qi) at (7,1) {$\ldots$};
    \node[state, accepting] (qn) at (8,1) {$q_n$};
    \node[state, accepting] (qf) at (10,1) {$f$};

    \path (q0) edge [loop above] node {$0,1$} (q0);
    \path (q0) edge [below] node {$1$} (q1);
    \path (q1) edge [below] node {$0,1$} (q2);
    \path (q2) edge [below] node {$0,1$} (q3);
    \path (qn) edge node {$\$$} (qf);
    \path (qn) edge [bend right=40] node [near end,below] {$\$$} (q3);
    \path (qn) edge [bend right=40] node [near end,below] {$\$$} (q2);
    \path (qn) edge [bend right=40] node [near end,below] {$\$$} (q1);
    \path (qn) edge [bend right=40] node [near end,below] {$\$$} (q0);
    \path (qf) edge [loop above] node {$0,1$} (qf);
    \path (qn) edge [loop above] node {$\$$} (qn);
    \path (qf) edge [bend left=27] node [below] {$0$} (q1);

\end{tikzpicture}
\caption{The separating NBW $\mathcal R_n'$ recognising the reachability language $\lang_n^r$.}
\label{fig:SRA}
\end{figure}

\begin{restatable}{lem}{lemRnpSeparating}
\label{lem:Rn'-separating}
For each $n\in \mathbb N$, $\mathcal R_n'$ is a separating automaton that recognises $\lang_n^r$.
\end{restatable}
\begin{proof}
To see that $\mathcal R_n'$ is separating, we note that
\begin{itemize}
    \item words starting with $\$$ can only be accepted from $q_n$, as the automaton blocks when reading $\$$ from any other state,
    \item words starting with $\{0,1\}^i\$$, with $0 < i < n$, can only be accepted from $q_{n-i}$, because
    \begin{itemize}
        \item when starting at $q_j$ with $j > n-i$, the automaton would block when reading a $0$ or $1$ after reaching $q_n$, while
        \item for all other states, the automaton would block when reading the first $\$$ while not having reached $q_n$,
    \end{itemize}
        \item words starting with $\{0,1\}^i1\{0,1\}^{n-1}\$$ can only be accepted from $q_0$, because
        \begin{itemize}
            \item from $q_i$ with $0<i \leq n$, the automaton would block when reading the $(n+1-i)$-th letter which is a $0$ or a $1$, while
            \item from $f$, it would block when moving upon reading the $j$-th letter with $j\leq i$ to $q_1$  when, reading the $(n+j)$-th letter (a $0$ or $1$) in state $q_n$ or, where $j>i+1$ (or where the automaton never moves to $q_1$), when reading the $(1+i+n)$-th letter, a $\$$, in a state in $q_k$ with $k<n$ (or $f$), 
        \end{itemize} 
        \item words starting with $\{0,1\}^i0\{0,1\}^{n-1}\$$ can only be accepted from $f$, using a similar argument, and 
        \item finally, words containing no $\$$ can only be accepted from $f$, because the only final state reachable from any other state of the automaton is $q_n$, and from $q_n$ the automaton would block, as it only accepts a $\$$ from there.
\end{itemize}
This partitions the set of words.
To see that $\mathcal R_n'$ recognises $\lang_n^r$, we note that, for all of the words that do contain a $\$$, this provides a recipe to be in $q_n$ when this $\$$ arrives from \emph{some} state, while for a word that does not contain a $\$$, we accept from $f$ by simply staying there for ever.
For a word in $\lang_n^r$, we can therefore follow this recipe until a $\$$ arrives, and then pick a successor state $q$ so that the tail of the word (after that $\$$) is in the language that can only be accepted from $q$, and so forth.
This constructs an accepting run.
(Note that we have already shown that words not in $\lang_n^r$ cannot be accepted from the only initial state, $q_0$.)
\end{proof}

With \cref{theorem:GFM-exponential}, this lemma provides:

\begin{cor}\label{cor:separated-reach-succinct}
Separating automata that recognise reachability languages are exponentially more succinct than GFM automata.    
\end{cor}
	

\section{Discussion}

We first established that deciding GFM-ness is {\sf PSPACE}-hard via a reduction from the NFA universality problem. 
Next, we introduced the seemingly simpler auxiliary concept of \emph{qualitative} GFM (QGFM) and developed an algorithm to check QGFM-ness in {\sf EXPTIME}. 
We then characterised GFM-ness using extensive QGFM tests, only to discover that this characterisation also forms a necessary condition for QGFM-ness itself, resulting in a collapse of the qualitative and full notions of GFM-ness. 
Consequently, the hardness results for GFM-ness carry over to QGFM-ness, and the {\sf EXPTIME} decision procedure for QGFM-ness serves as a decision procedure for GFM-ness.



We have established that GFM automata are exponentially more succinct than GFG automata, while general nondeterministic automata are exponentially more succinct than GFM automata. For the latter, we have shown that this is still the case when we restrict the class of general nondeterministic automata further, requiring them to be run-unambiguous reachability or safety automata, run-unambiguous separating safety automata, or separating automata that recognise a reachability language.
For all of these automata, we have provided very simple families of automata that witness the exponentially succinctness advantage.

Closing the remaining gap between the lower and upper complexity bounds remains an intriguing challenge for future work.

\section*{Acknowledgment}
  \noindent
  We thank an anonymous reviewer for raising the excellent question (to the original version) of whether or not QGFM and GFM are different.	They proved not to be. Without their clever question, we would not have considered this question, and thus not strengthened this paper accordingly.
  We also thank an anonymous reviewer for pointing out a simplification of the proof that GFM NBWs are exponentially more succinct than DBWs.
  
  This work was supported by the EPSRC through grants EP/X03688X/1 and EP/X042596/1.
  This project has received funding from the European Union’s Horizon 2020 research and innovation programme under grant agreement 956123 (FOCETA).

\bibliographystyle{alphaurl}
\bibliography{paper}

@inproceedings{Hahn15,
  author    = {E. M. Hahn and G. Li and S. Schewe and A. Turrini and L. Zhang},
  title     = {Lazy Probabilistic Model Checking without Determinisation},
  booktitle = {Concurrency Theory},
  pages     = {354-367},
  year      = {2015}
}

@inproceedings{Courco90,
  author = {C. Courcoubetis and M. Yannakakis},
  title = {Markov decision processes and regular events},
  booktitle = {Automata, Languages and Programming},
  pages = {336-349},
  note = {{LNCS} 443},
  year = {1990}
}

@inproceedings{Sicker16b,
  author = {S. Sickert and J. Esparza and S. Jaax and J. K\v{r}et\'insk\'y},
  title  = {Limit-Deterministic {B\"uchi} Automata for Linear Temporal Logic},
  booktitle = {Computer Aided Verification},
  note = {{LNCS} 9780},
  pages = {312-332},
  year = {2016}
}

@inproceedings{KuperbergS15,
    author       = {Denis Kuperberg and
                  Michal Skrzypczak},
    editor       = {Magn{\'{u}}s M. Halld{\'{o}}rsson and
                  Kazuo Iwama and
                  Naoki Kobayashi and
                  Bettina Speckmann},
    title        = {On Determinisation of Good-for-Games Automata},
    booktitle    = {Automata, Languages, and Programming - 42nd International Colloquium,
                  {ICALP} 2015, Kyoto, Japan, July 6-10, 2015, Proceedings, Part {II}},
    series       = {Lecture Notes in Computer Science},
    volume       = {9135},
    pages        = {299--310},
    publisher    = {Springer},
    year         = {2015},
    url          = {https://doi.org/10.1007/978-3-662-47666-6\_24},
    doi          = {10.1007/978-3-662-47666-6\_24},
}

@InProceedings{KarmarkarJC13,
author="Karmarkar, Hrishikesh
and Joglekar, Manas
and Chakraborty, Supratik",
editor="Van Hung, Dang
and Ogawa, Mizuhito",
title="Improved Upper and Lower Bounds for B{\"u}chi Disambiguation",
booktitle="Automated Technology for Verification and Analysis",
year="2013",
publisher="Springer International Publishing",
address="Cham",
pages="40--54",
abstract="We present a new ranking based construction for disambiguating non-deterministic B{\"u}chi automata and show that the state complexity tradeoff of the translation is in O(n{\textperiodcentered}(0.76n)n). This exponentially improves the best upper bound (i.e., 4 {\textperiodcentered}(3n)n) known earlier for B{\"u}chi disambiguation. We also show that the state complexity tradeoff of translating non-deterministic B{\"u}chi automata to strongly unambiguous B{\"u}chi automata is in $\Omega$((n{\thinspace}−{\thinspace}1)!). This exponentially improves the previously known lower bound (i.e. $\Omega$(2n)). Finally, we present a new tec{\'u}hnique to prove the already known exponential lower bound for disambiguating automata over finite or infinite words. Our technique is significantly simpler than earlier techniques based on ranks of matrices used for proving disambiguation lower bounds.",
isbn="978-3-319-02444-8"
}

@inproceedings{HahnPSSTW19,
  author       = {Ernst Moritz Hahn and
                  Mateo Perez and
                  Sven Schewe and
                  Fabio Somenzi and
                  Ashutosh Trivedi and
                  Dominik Wojtczak},
  editor       = {Tom{\'{a}}s Vojnar and
                  Lijun Zhang},
  title        = {Omega-Regular Objectives in Model-Free Reinforcement Learning},
  booktitle    = {Tools and Algorithms for the Construction and Analysis of Systems
                  - 25th International Conference, {TACAS} 2019, Held as Part of the
                  European Joint Conferences on Theory and Practice of Software, {ETAPS}
                  2019, Prague, Czech Republic, April 6-11, 2019, Proceedings, Part
                  {I}},
  series       = {Lecture Notes in Computer Science},
  volume       = {11427},
  pages        = {395--412},
  publisher    = {Springer},
  year         = {2019},
  url          = {https://doi.org/10.1007/978-3-030-17462-0\_27},
  doi          = {10.1007/978-3-030-17462-0\_27},
  timestamp    = {Tue, 29 Dec 2020 18:34:46 +0100},
  biburl       = {https://dblp.org/rec/conf/tacas/HahnPSSTW19.bib},
  bibsource    = {dblp computer science bibliography, https://dblp.org}
}

@article{BaierKKMW23,
  author       = {Christel Baier and
                  Stefan Kiefer and
                  Joachim Klein and
                  David M{\"{u}}ller and
                  James Worrell},
  title        = {Markov chains and unambiguous automata},
  journal      = {J. Comput. Syst. Sci.},
  volume       = {136},
  pages        = {113--134},
  year         = {2023},
  url          = {https://doi.org/10.1016/j.jcss.2023.03.005},
  doi          = {10.1016/j.jcss.2023.03.005},
  timestamp    = {Thu, 25 May 2023 08:23:55 +0200},
  biburl       = {https://dblp.org/rec/journals/jcss/BaierK00023.bib},
  bibsource    = {dblp computer science bibliography, https://dblp.org}
}

@InProceedings{HPSSTWBP2020,
    author="Hahn, Ernst Moritz
    and Perez, Mateo
    and Schewe, Sven
    and Somenzi, Fabio
    and Trivedi, Ashutosh
    and Wojtczak, Dominik",
    editor="Biere, Armin
    and Parker, David",
    title="Good-for-{MDP}s Automata for Probabilistic Analysis and Reinforcement Learning",
    booktitle="Tools and Algorithms for the Construction and Analysis of Systems",
    year="2020",
    publisher="Springer International Publishing",
    address="Cham",
    pages="306--323",
}

@book{Baier08,
    author = {Christel Baier and Joost-Pieter Katoen},
    title = {Principles of Model Checking},
    publisher = {MIT Press},
    year = {2008}
}

@phdthesis{Alfaro1998,
    author = {Alfaro, Luca},
    title = {Formal Verification of Probabilistic Systems},
    school  = "Stanford University",
    year    = "1998",
    type    = "",
    address = "Stanford, CA, USA"
}

@inproceedings{Colcombet12,
  author    = {Thomas Colcombet},
  editor    = {Christoph D{\"{u}}rr and
               Thomas Wilke},
  title     = {Forms of Determinism for Automata (Invited Talk)},
  booktitle = {29th International Symposium on Theoretical Aspects of Computer Science,
               {STACS} 2012, February 29th - March 3rd, 2012, Paris, France},
  series    = {LIPIcs},
  volume    = {14},
  pages     = {1--23},
  publisher = {Schloss Dagstuhl - Leibniz-Zentrum f{\"{u}}r Informatik},
  year      = {2012},

}

@inproceedings{CaludeJKL017,
  author    = {Cristian S. Calude and
               Sanjay Jain and
               Bakhadyr Khoussainov and
               Wei Li and
               Frank Stephan},
  editor    = {Hamed Hatami and
               Pierre McKenzie and
               Valerie King},
  title     = {Deciding parity games in quasipolynomial time},
  booktitle = {Proceedings of the 49th Annual {ACM} {SIGACT} Symposium on Theory
               of Computing, {STOC} 2017, Montreal, QC, Canada, June 19-23, 2017},
  pages     = {252--263},
  publisher = {{ACM}},
  year      = {2017},
  
}

@inproceedings{Schewe09,
  author    = {Sven Schewe},
  editor    = {Luca de Alfaro},
  title     = {Tighter Bounds for the Determinisation of {B{\"{u}}chi} Automata},
  booktitle = {Foundations of Software Science and Computational Structures, 12th
               International Conference, {FOSSACS} 2009, Held as Part of the Joint
               European Conferences on Theory and Practice of Software, {ETAPS} 2009,
               York, UK, March 22-29, 2009. Proceedings},
  series    = {Lecture Notes in Computer Science},
  volume    = {5504},
  pages     = {167--181},
  publisher = {Springer},
  year      = {2009},
 
}

@article{Piterman07,
  author    = {Nir Piterman},
  title     = {{From Nondeterministic B{\"{u}}chi and Streett Automata to Deterministic Parity Automata}},
  journal   = {Log. Methods Comput. Sci.},
  volume    = {3},
  number    = {3},
  year      = {2007},

}

@inproceedings{StockmeyerM73,
  author    = {Larry J. Stockmeyer and
               Albert R. Meyer},
  editor    = {Alfred V. Aho and
               Allan Borodin and
               Robert L. Constable and
               Robert W. Floyd and
               Michael A. Harrison and
               Richard M. Karp and
               H. Raymond Strong},
  title     = {Word Problems Requiring Exponential Time: Preliminary Report},
  booktitle = {Proceedings of the 5th Annual {ACM} Symposium on Theory of Computing,
               April 30 - May 2, 1973, Austin, Texas, {USA}},
  pages     = {1--9},
  publisher = {{ACM}},
  year      = {1973},

}

@inbook{Thomas90,
  author = {Wolfgang Thomas},
  title = {Handbook of Theoretical Computer Science},
  chapter = {Automata on Infinite Objects},
  pages = {133-191},
  publisher = {The MIT Press/\linebreak[0]Elsevier} ,
  year = {1990}
}

@book{Perrin04,
  author = {Dominique Perrin and Jean-{\'E}ric Pin},
  title = {Infinite Words: Automata, Semigroups, Logic and Games},
  publisher = {Elsevier},
  year = {2004}
}

@inproceedings{Pnueli77,
  author = {Amir Pnueli},
  booktitle = {IEEE Symposium on Foundations of Computer Science},
  pages = {46-57},
  title = {The Temporal Logic of Programs},
  year = {1977}
}

@book{Sutton18,
  author = {Richard S. Sutton and Andrew G. Barto},
  title = {Reinforcement Learning: An Introduction},
  edition = {second},
  publisher = {MIT Press},
  year = {2018}
}

@inproceedings{Vardi85,
  author = {Moshe Y. Vardi},
  booktitle = {Foundations of Computer Science},
  pages = {327-338},
  title = {Automatic Verification of Probabilistic Concurrent Finite State Programs},
  year = {1985}
}

@article{Courco95,
  author = {Courcoubetis, Costas and Yannakakis, Mihalis},
  title = {The Complexity of Probabilistic Verification},
  journal = {J. ACM},
  volume = {42},
  number = {4},
  month = jul,
  pages = {857-907},
  year = {1995}
}

@inproceedings{Henzin06,
  author = {Thomas A. Henzinger and Nir Piterman},
  booktitle = {Computer Science Logic},
  month = sep,
  note = {{LNCS} 4207},
  pages = {394-409},
  title = {Solving Games without Determinization},
  year = {2006}
}

@InProceedings{Schewe2006,
  author	= "Sven Schewe",
  title		= "Synthesis for Probabilistic Environments",
  booktitle	= "4th International Symposium on Automated Technology for Verification and Analysis (ATVA 2006)",
  year		= 2006,
  publisher 	= "Springer Verlag",
  pages = "245--259",
}

@inproceedings{BagnolKuperberg2018,
  title={{B{\"u}chi} Good-for-Games Automata Are Efficiently Recognizable},
  author={Marc Bagnol and Denis Kuperberg},
  booktitle={FSTTCS},
  year={2018}
}

@inproceedings{BA1995,
  author    = {Andrea Bianco and
               Luca de Alfaro},
  editor    = {P. S. Thiagarajan},
  title     = {Model Checking of Probabalistic and Nondeterministic Systems},
  booktitle = {Foundations of Software Technology and Theoretical Computer Science,
               15th Conference, Bangalore, India, December 18-20, 1995, Proceedings},
  series    = {Lecture Notes in Computer Science},
  volume    = {1026},
  pages     = {499--513},
  publisher = {Springer},
  year      = {1995},
  
}
\newpage
\appendix

\section{Detailed Construction of \cref{section:QGFM-procedure}}\label{appendix:QGFM-procedure}
\subsection{Tree Automata}\label{app:def-tree-automata}
\paragraph{\textbf{Trees.}}
Consider a finite set $\Upsilon = \{\upsilon_1, \ldots, \upsilon_n\}$ of directions. An $\Upsilon$-tree is a prefixed closed set $T \subseteq \Upsilon^{*}$.
A \emph{run} or \emph{path} $r$ of a tree is a set $r \subseteq T$ such that $\epsilon \in T$ and, for every $w \in r$, there exists a unique $\upsilon \in \Upsilon$ such that $w {\cdot} \upsilon \in r$.
For a node $w \in T$, $\suc(w) \subseteq \Upsilon$ denotes exactly those directions $\upsilon$, such that $w {\cdot} \upsilon \in T$.
A tree is called \emph{closed} if it is non-empty (i.e.\ if it contains the empty word $\varepsilon$) and every node of the tree has at least one successor.

Given a finite alphabet $\Sigma$, a $\Sigma$-labelled $\Upsilon$-tree is a pair $\langle T, V \rangle$ where $T$ is a closed $\Upsilon$-tree and $V:T \to \Sigma$ maps each node of $T$ to a letter in $\Sigma$.

An $\Upsilon$-tree $T$ and a labelled $\Upsilon$-tree $\langle T, V\rangle$ are called \emph{full} if $T=\Upsilon^*$.
\smallskip

\paragraph{\textbf{Tree automata.}}
%
For the infinite tree automata in this paper, we are using transition-based acceptance conditions.
An alternating parity tree automaton is a tuple $\A = (\Sigma, Q, q_0, \delta, \alpha)$, where $Q$ denotes a finite set of states, $q_0 \in Q$ denotes an initial state, $\delta: Q \times \Sigma \to \mathbb{B}^{+}\big( \Upsilon \times Q \times C \big)$ denotes%
\footnote{%
	$\mathbb{B}^{+}\big( \Upsilon \times Q \times C \big)$ are positive Boolean formulas: formulas built from elements in $\Upsilon \times Q \times C$ using $\land$ and $\lor$. 
	For a set $S \subseteq \Upsilon \times Q \times C$ and a formula $\theta \in \mathbb{B}^{+}\big( \Upsilon \times Q \times C \big)$, we say that $S$ satisfies $\theta$ iff assigning \textbf{true} to elements in $S$ and assigning \textbf{false} to elements not in $S$ makes $\theta$ true.
	For technical convenience, we use \emph{complete} automata, i.e.\ $\mathbb B^+$ does not contain \emph{true} or \emph{false}, so that our run trees are closed.},
for a finite set $C \subset \nat$ of colours, a transition function and the acceptance condition $\alpha$ is a transition-based parity condition.
An alternating automaton runs on 
$\Sigma$-labelled $\Upsilon$-trees.
The acceptance mechanism is defined in terms of run trees.

A run of an alternating automaton $\A$ on a $\Sigma$-labelled $\Upsilon$-tree $\langle T,V \rangle $ is a tree $ \langle T_r, r \rangle$ in which each node is labelled with an element of $T \times Q$. Unlike $T$, in which each node has at most $|\Upsilon|$ children, the tree $T_r$ may have nodes with many children and may also have leaves (nodes with no children). Thus, $T_r \subset \nat^{*}$ and a path in $T_r$ may be either finite, in which case it ends in a leaf, or infinite. Formally, $\langle T_r, r \rangle$ is the $(T \times Q)$-labelled tree such that:
\begin{itemize}
	\item $\epsilon \in T_{r}$ and $r(\epsilon) = (\epsilon, q_0)$;
	\item Consider a node $w_r \in T_r$ with $r(w_r) = (w, q)$ and $\theta = \delta\big(q, V(w)\big)$. Then there is a (possibly empty) set $S \subseteq \Upsilon \times Q \times C$ such that $S$ satisfies $\theta$, and for all $(c, q', i) \in S$, the following hold: 
	if $c \in \suc(w)$, then there is a $j \in \nat$ such that $w_r {\cdot} j \in T_r$, $r(w_r {\cdot} j) = (w {\cdot} c, q')$ and the transition in $T_r$ between $w_r$ and $w_r {\cdot} j$ is assigned colour $i$.
\end{itemize}  
An infinite path of the run tree fulfils the parity condition, if the highest colour of the transitions (edges) appearing infinitely often on the path is even. A run tree is accepting if all infinite paths fulfil the parity condition. A tree $T$ is accepted by an automaton $\A$ if there is an accepting run tree of $\A$ over $T$. We denote by $\lang(\A)$ the language of the automaton $\A$; i.e., the set of all labelled trees that $\A$ accepts. 

The acceptance of a tree can also be viewed as the outcome of
a game, where player {\it accept} chooses, for a pair $(q, \sigma) \in Q \times \Sigma$, a set of atoms satisfying $\delta(q, \sigma)$, and player {\it reject} chooses one of these atoms, which is executed. The tree is accepted iff player {\it accept} has a strategy enforcing a path that fulfils the parity condition. One of the players has a memoryless winning strategy, that is, a strategy where the moves only depend on the state of the automaton and the node of the tree, and, for player {\it reject}, on the choice of player {\it accept} in the same move.



A \emph{nondeterministic} tree automaton is a special alternating tree automaton, where the image of $\delta$ consists only of such formulas that, when rewritten in disjunctive normal form, contains exactly one element of $\{ \upsilon \} \times Q \times C$ for every $\upsilon \in \Upsilon$ in every disjunct.
A nondeterministic tree automaton is \emph{deterministic} if the image of $\delta$ consists only of such formulas that, when rewritten in disjunctive normal form, contain only one disjunct.
An automaton is \emph{universal} when the image of $\delta$ consists only of formulas that can be rewritten as conjunctions.

An automaton is
called a B\"uchi automaton if $C = \{1, 2\}$ and a co-B\"uchi automaton if $C = \{0, 1\}$.

\paragraph{\textbf{Symmetric Automata.}}
Symmetric automata are alternating automata that abstract from the concrete directions, replacing them by operators $\Omega = \{\Box, \Diamond, \boxtimes_i\}$, where $\Box$ and $\Diamond$ are standard operators that mean `send to all successors ($\Box$)' and `send to some successor ($\Diamond$)', respectively.

A symmetric (alternating) parity tree automaton is a tuple $\mathcal S = (\Sigma, Q, q_0, \delta, \alpha)$ that differs from ordinary alternating automata only in the transition function. Here, $\delta: Q \times \Sigma \to \mathbb{B}^{+}\big( \Omega \times Q \times C \big)$,
where, for a node $w$ in a labelled tree $\langle T,V\rangle$,
\begin{itemize}
	\item $(\Box,q,i)$ can be replaced by $\bigwedge\limits_{c \in \suc(w)}(c,q,i)$;
	$(\Diamond,q,i)$ can be replaced by $\bigvee\limits_{c \in \suc(w)}(c,q,i)$ 
	\item
	$(\boxtimes_j,q,i)$ can be replaced by $\bigvee\limits_{\upsilon \in \suc(w)}(\boxtimes_j^\upsilon,q,i)$, where
	
	$(\boxtimes_j^\upsilon,q,i)$ can (subsequently) be replaced by $(\upsilon,q,j) \wedge \bigwedge\limits_{\upsilon \neq c \in \suc(w)}(c,q,i)$.
\end{itemize}

An alternating parity tree automaton is \emph{nearly} symmetric if there is $\boxtimes_j^{\upsilon}$ in the transition function, that is, $\delta: Q \times \Sigma \to \mathbb{B}^{+}\big( (\Omega \cup \{\boxtimes_j^{\upsilon} \suchthat j \in C \land \upsilon \in \Upsilon\} ) \times Q \times C \big)$. 

While the $\boxtimes_j$ and $\boxtimes_j^v$ operators are less standard, they have a simple semantics: $(\boxtimes_j,q,i)$ sends $q$ in every direction (much like $\Box$), but uses different colours: $i$ in all directions but one, and $j$ into one (arbitrary) direction; $\boxtimes_j^v$ is similar, but fixes the direction to $v$ into which $q$ is sent with a different colour.

So, while nondeterministic tree automata send exactly one copy to each successor, symmetric automata can send several copies to the same successor.
On the other hand, symmetric automata cannot distinguish between left and right and can send copies to successor nodes only in either a universal or an existential manner.

We use common three-letter abbreviations to distinguish types of automata. The first (D, N, U, A) tells the automaton is deterministic, nondeterministic, universal or alternating if it does not belong to any of the previous types. The second denotes the acceptance condition (B for B\"uchi, C for co-B\"uchi, P for parity if it is neither B\"uchi nor co-B\"uchi). The third letter (W, T, S, or N) says that the automaton is an $\omega$-word automaton (W) or runs on infinite trees (T, S or N);
S is only used when the automaton is symmetric, N is only used when the automaton is nearly symmetric, while T is used when it is neither symmetric nor nearly symmetric.
For example, an NBW is a nondeterministic B\"uchi $\omega$-word automaton, and a DCT is a deterministic co-B\"uchi tree automaton.

\subsection{Detailed Constructions}\label{app:constructions}


An MC with an initial state induces a (state-labelled) \emph{tree} by unravelling and disregarding the probabilities. See \cref{fig:example-MC-tree} for an example. We say an MC with an initial state is accepted by a tree automaton if the tree induced by the MC and the initial state is accepted by that tree automaton.

\paragraph{\textbf{From Word Automata to Tree Automata.}}  From the candidate NBW $\C = \langle \Sigma, Q, \delta, q_0, F \rangle$, we build a symmetric alternating B\"uchi tree automaton (ABS) $\T_{\C} = \langle \Sigma, Q, \delta_{\T}, q_0, \alpha \rangle$ where $\delta_{\T}(q, \sigma) = \bigvee_{q' \in F \cap \delta(q, \sigma) } (\Box, q', 2) \lor \bigvee_{q' \in (Q \setminus F) \cap \delta(q, \sigma) } (\Box, q', 2) \land (\Diamond, q', 1)$ for $(q,\sigma) \in \supp(\delta)$ and the accepting condition $\alpha$ is a B\"uchi one (and thus a parity condition with two colours), which specifies that a run is accepting if, and only if, the highest colour that occurs infinitely often is even. 
In fact, all other accepting conditions in this section are parity ones.
A run tree is accepted by the ABS $\T_{\C}$ if colour $2$ is seen infinitely often on all infinite paths. Similar constructions can be found in \cite[Section~3]{Schewe2006}.

\begin{restatable}{lem}{lemmaWordToTreeCandidate}\label{lemma:word-to-tree-candidate}
The ABS $\T_{\C}$ accepts an MC $\M$ with initial state $s_0$ if, and only if, the syntactic product of $\M$ and $\C$ almost surely accepts, that is, $\mathrm{PSyn}_{\C}^{\M}(s_0) = 1$. 
\end{restatable}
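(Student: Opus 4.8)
The plan is to characterise acceptance of $\T_{\C}$ by a game on the \emph{finite} product $\M\times\C$ and then identify the winning condition of that game with almost-sure acceptance. Since $T_{\M}$, the unravelling of $\M$, is a regular tree generated by the finite MC $\M$, acceptance of $T_{\M}$ by the alternating automaton $\T_{\C}$ is, by the standard reduction (cf.\ \cite{Schewe2006}), equivalent to player \emph{accept} winning the finite parity game $G$ obtained as the product of $\T_{\C}$ with $\M$; this game is positionally determined, and its positions are essentially the product states $(s,q)$. A move of player \emph{accept} at $(s,q)$ is a choice of an automaton successor $q'\in\delta(q,\ell(s))$ which, because of the $\Box$ operator, is committed to \emph{all} successors $s'$ of $s$ simultaneously (this is exactly the `resolve the nondeterminism before seeing the next letter' constraint); in addition, when $q'\notin F$ the $\Diamond$ operator forces \emph{accept} to nominate one witness successor $s^{*}$. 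Player \emph{reject} then picks an MC-successor $s'$ of $s$ and, when $q'\notin F$ and $s'=s^{*}$, may take the witness atom of colour $1$; every other atom carries colour $2$. Thus colour $2$ is charged whenever a final state is chosen or \emph{reject} steps off the witness, and a play is losing for \emph{accept} iff it ultimately charges colour $1$ forever.

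I would then prove that \emph{accept} wins $G$ if, and only if, there is a positional resolution $\mu\colon(s,q)\mapsto q'\in\delta(q,\ell(s))$ of $\C$ such that, in the finite graph $\M\times_{\mu}\C$, every state reachable from $(s_{0},q_{0})$ can reach $F^{\times}$. This condition is exactly $\mathrm{PSyn}_{\C}^{\M}(s_0)=1$: since qualitative acceptance does not depend on the probabilities and, by \cref{theorem:end-component-properties}, a bottom end-component is reached almost surely, `every reachable state can reach $F^{\times}$' is equivalent (using that bottom end-components are closed under successors) to `every reachable bottom end-component contains a state of $F^{\times}$', which in a finite MDP holds for some memoryless $\mu$ precisely when $\mathrm{PSyn}_{\C}^{\M}(s_0)=1$.

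For the direction `$\mathrm{PSyn}=1\Rightarrow$ accept wins', I would fix such a $\mu$ and let \emph{accept} play $\mu$ together with the witness rule `nominate the successor that strictly decreases the distance to $F^{\times}$ in $\M\times_{\mu}\C$', which is well defined because that distance is finite from every reachable state. Along any play a colour-$1$ (witness) segment then strictly decreases this distance, so after boundedly many steps \emph{accept} is forced to pick a final successor and charge colour $2$; hence every infinite path sees colour $2$ infinitely often and \emph{accept} wins. For the converse I would argue contrapositively: if $\mathrm{PSyn}<1$ then, for every positional $\mu$, the reachable part of $\M\times_{\mu}\C$ contains a nonempty set $U$ of states that cannot reach $F^{\times}$, and $U$ is closed and $F$-free. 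Player \emph{reject} wins by first steering the play into $U$ using finitely many colour-$2$ moves -- this is possible precisely because the $\Box$ operator forces \emph{accept} to commit one successor to all branches, so \emph{reject} can select the branch on which the run enters $U$ -- and then, inside the closed $F$-free set $U$, by following \emph{accept}'s witness forever and charging colour $1$ at every step.

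The main obstacle is the passage from the infinite tree game to a finite, positionally determined game, i.e.\ the claim that a winning strategy may be taken to depend only on the product state $(s,q)$: a naive winning strategy for \emph{accept} could be history dependent, and `every reachable configuration can reach $F^{\times}$' does \emph{not} by itself force probability-one acceptance when the resolution carries unbounded memory. I would resolve this by invoking the regularity of $T_{\M}$ (it is generated by the finite $\M$), so that the acceptance problem is literally the product parity game with vertex set $\approx S\times Q$, to which positional determinacy of parity games applies; the finite-graph reasoning with bottom end-components above is then legitimate. The remaining ingredients -- the distance-based witness strategy and the navigate-then-trap strategy -- are routine once this finite game is in place.
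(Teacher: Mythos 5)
Your proposal is correct, and both directions rest on the same two key ideas as the paper's proof: for ``$\mathrm{PSyn}_{\C}^{\M}(s_0)=1$ implies acceptance'' you take a memoryless optimal strategy for the B\"uchi product MDP and resolve each $\Diamond$ along a shortest path to $F^{\times}$, so that no path of the run tree has an overly long colour-$2$-free segment; for the converse you exploit that a $\Diamond$-trail confined to a region from which $F^{\times}$ is unreachable yields a path that eventually sees only colour $1$. Where you genuinely diverge is in how you license the restriction to \emph{positional} resolutions of $\C$. You reduce acceptance of the regular tree to the finite membership parity game on (essentially) $S\times Q$ and invoke positional determinacy, then argue contrapositively against every positional strategy via the trap set $U$. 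The paper never appeals to determinacy of the acceptance game: it starts from an \emph{arbitrary} (possibly history-dependent) accepting run tree, collects for each product state $(s,q)$ the set $Q_{s,q}$ of automaton successors ever selected there, observes that in the resulting sub-MDP $\M'\subseteq\M\times\C$ a final state must be reachable from every state (otherwise the $\Diamond$-trail from a counterexample state would be a rejecting path), and only then extracts a positional strategy by picking successors on shortest paths. Your route is shorter and leans on standard machinery, but that machinery is exactly where the burden sits: you must set up (or cite precisely) the membership-game reduction, including the step from strategies depending on the tree node $w$ to strategies depending only on $(\last(w),q)$, which holds because subtrees of the unravelling rooted at nodes with the same last state are isomorphic. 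The paper's route is more self-contained and directly produces the positional MDP strategy witnessing $\mathrm{PSyn}_{\C}^{\M}(s_0)=1$ without a determinacy detour. Either argument is acceptable.
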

\begin{proof}
Consider an arbitrary MC $\M$ with initial state $s_0$. Let $\Upsilon$ be the set of states in the MC $\M$.
Let $\langle T, V \rangle$ be the $\Sigma$-labelled $\Upsilon$-tree induced by $\M$ with $s_0$. 

\textbf{`if:'}
We have that $\mathrm{PSyn}_{\C}^{\M}(s_0) = 1$.
As B\"uchi MDPs have positional optimal strategies~\cite{Alfaro1998}, there is a positional strategy $\mu^{\times}$ for the product MDP, such that $\mathrm{PSyn}_{\C}^{\M}(s_0) = \mathrm{PSyn}_{\C}^{\M}\big( (s_0,q_0), \mu^{\times}\big) = 1$ holds.

For every reachable state in the MC $(\M\times\C)_{\mu^{\times}}$, there is a (not necessarily unique) shortest path to the next final state, and for each state $(s,q)$ of $(\M\times\C)_{\mu^{\times}}$, we pick one such direction $d_{(s,q)} \in \Upsilon$ of $\M$ (note that the direction in $\C$ is determined by $\mu^{\times}$).

The length of every path that follows these directions to the next final state is at most the number of states of $\M \times \C$.

Thus, we choose the direction $d_{s,q}$ when the $\Diamond$ is resolved in a state $s$ of the (unravelled) MC $\M$ and $\T_\C$ is in state $q$, there is no consecutive section in any path of the run tree longer than this without a colour 2. Thus, the dominating colour of every path for this strategy is $2$, and (the unravelling of) $\M$ is accepted by $\T_\C$.

\textbf{`only if':}
We consider the set of states of $(s,q)$ such that the automaton can be in a state $q$ and in a node that ends in a state $s$ of the MC, and denote by $Q_{{s,q}} \subseteq \delta(q,\ell(s))$ the set of successor states selected at any such state reachable in a given accepting run.

For every run, this defines an MDP $\M' \subseteq \M \times \C$, and as the run we use is accepting, a final state in $\M'$ is reachable from every state in $\M'$.
(To see this, assume for contradiction that $\M'$ contains a state, from which no final state is reachable. Then we can henceforth follow the path from any such state defined by the $\Diamond$ choices, this path has henceforth only colour $1$, and is therefore rejecting. This contradicts the assumption that our run is accepting.)

Therefore, there is a shortest path, and we can fix, for every state $(s,q)$ a successor in $q_{(s,q)} \in Q_{s, q}$ such that a minimal path from ${(s, q)}$ takes this decision.
For the resulting positional strategy $\mu$,
the minimal distance to the next final state for $\M_\mu'$ is the same as for $\M'$, and thus finite for every reachable state.
Thus, all reachable end-components in $\M_\mu'$---and thus all reachable end-components in $(\M\times \C)_\mu$---contain a final state.
We therefore have $\mathrm{PSyn}_{\C}^{\M}\big( (s_0,q_0), \mu\big) = 1$.
\end{proof}

Next, we dualise $\T_\C$ into a symmetric alternating co-B\"uchi tree automaton (ACS), which is a complement of the ABS $\T_{\C}$. In other words, this automaton accepts an MC $\M$ if, and only if, the syntactical product of $\C$ and $\M$ does \emph{not} almost surely accept.
Dualising the ABS $\T_{\C}$ consists of switching the roles of player {\it accept} and player {\it reject}, complementing the transition function and decreasing the colours by one.
Thus, we define the ACS $\widetilde{\T}_{\C} = \langle \Sigma, Q, \delta_{\widetilde{\T}}, q_0, \widetilde{\alpha} \rangle$ where $\delta_{\widetilde{\T}}(q, \sigma) = \bigwedge_{q' \in F \cap \delta(q, \sigma) } (\Diamond, q', 1) \land \bigwedge_{q' \in (Q \setminus F) \cap \delta(q, \sigma) } (\Diamond, q', 1) \lor (\Box, q', 0)$ for $(q,\sigma) \in \supp(\delta)$.
Since $0$ and $1$ are the only colours, $\widetilde{\T}_{\C}$ is a co-B\"uchi automaton, and a run tree is accepted when colour $1$ is seen only finitely often on all infinite paths.

\begin{restatable}{lem}{lemmaWordToTreeCandidateComplement}\label{lemma:word-to-tree-candidate-complement}
The ACS 
$\widetilde{T}_{\C}$ 
accepts the complement of $\lang(\T_\C)$. 
\end{restatable}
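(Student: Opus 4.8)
The plan is to prove this as an instance of the standard dualisation lemma for alternating parity tree automata: the automaton obtained by exchanging $\bigvee$ with $\bigwedge$, swapping $\Box$ with $\Diamond$, decreasing every colour by one, and switching the roles of player \emph{accept} and player \emph{reject} recognises the complement language. I would carry this out through the game-theoretic characterisation of acceptance described in the preliminaries, together with the determinacy of parity games.

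First I would fix an arbitrary $\Sigma$-labelled $\Upsilon$-tree $\langle T, V\rangle$ and consider two acceptance games on it: the game $G$ played with $\T_{\C}$ and the game $\widetilde G$ played with $\widetilde{\T}_{\C}$. In both games a position consists of a node $w\in T$ together with a state $q\in Q$ and a partially resolved transition formula, and a play traces an infinite path of the run tree, labelled along the way by colours. The syntactic observation is that $\delta_{\widetilde{\T}}$ arises from $\delta_{\T}$ precisely by replacing $\bigvee$ with $\bigwedge$, each atom $(\Box,q',2)$ by $(\Diamond,q',1)$, and each atom $(\Diamond,q',1)$ by $(\Box,q',0)$ — that is, by swapping $\Box$ and $\Diamond$, swapping $\land$ and $\lor$, and decreasing each colour by one. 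Under the game semantics the disjunctions and the $\Diamond$-directions are resolved by player \emph{accept}, while the conjunctions and the $\Box$-directions are resolved by player \emph{reject}; hence this exchange means that $G$ and $\widetilde G$ have the same arena, the same available moves at each node (the directions in $\suc(w)$ and the successor states permitted by $\C$), and the same set of plays, but with the ownership of every position swapped: a choice belonging to player \emph{accept} in $G$ belongs to player \emph{reject} in $\widetilde G$, and vice versa.

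Next I would match the winning conditions. A play of $G$ is won by player \emph{accept} exactly when colour $2$ occurs infinitely often along it. After the uniform colour shift the corresponding play of $\widetilde G$ carries colour $1$ wherever the $G$-play carried colour $2$, and colour $0$ wherever it carried colour $1$; the co-B\"uchi condition of $\widetilde{\T}_{\C}$ declares player \emph{accept} the winner exactly when colour $1$ occurs only finitely often, i.e.\ when colour $2$ occurred only finitely often in $G$. Thus every play won by player \emph{accept} in $G$ is won by player \emph{reject} in $\widetilde G$ and conversely, so $\widetilde G$ is the genuine dual of $G$: player \emph{accept} wins $\widetilde G$ from the initial position if, and only if, player \emph{reject} wins $G$ from it.

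Finally, since both games have parity (here B\"uchi, resp.\ co-B\"uchi) winning conditions, they are determined, so player \emph{reject} wins $G$ if, and only if, player \emph{accept} does \emph{not} win $G$. Chaining the equivalences, player \emph{accept} wins $\widetilde G$ if, and only if, player \emph{accept} does not win $G$, which by the game characterisation of acceptance says exactly that $\langle T,V\rangle\in\lang(\widetilde{\T}_{\C})$ if, and only if, $\langle T,V\rangle\notin\lang(\T_{\C})$; as the tree was arbitrary, $\widetilde{\T}_{\C}$ accepts the complement of $\lang(\T_{\C})$. The only points needing care are verifying that the syntactic dualisation of the transition formulas really corresponds to swapping player ownership — where completeness of $\C$ and closedness of the trees ensure that no empty conjunction or disjunction (hence no \emph{true}/\emph{false}) arises — and invoking parity determinacy at the right step. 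I expect the player-swap and colour-shift bookkeeping, rather than any deep argument, to be the main thing to get exactly right.
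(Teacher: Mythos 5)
Your proof is correct: it is the standard Muller--Schupp dualisation argument (swap $\land/\lor$ and $\Box/\Diamond$, shift every colour down by one, exchange the roles of player \emph{accept} and player \emph{reject}, and conclude via determinacy of the parity acceptance game), which is exactly the construction the paper describes in the paragraph introducing $\widetilde{\T}_{\C}$. Note that the proof printed under this lemma in the paper is in fact a misplaced proof of a different statement (the language equivalence of $\T_{\G}$ and $\T_{\G}'$ from Lemma~\ref{lemma:language-equivalent-NBT-ABT}), so your argument supplies the justification the paper itself only sketches; your flagged points of care (that dualising the transition formulas really swaps ownership of every choice, and that completeness of $\C$ keeps the formulas free of \emph{true}/\emph{false}) are precisely the right ones.
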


From an NBW $\G = \langle \Sigma, Q_{\G}, \delta_{\G}, q_0^{\G}, F_{\G} \rangle$ that is known to be GFM  and language equivalent to $\C$ ($\lang(\G) = \lang(\C)$), similar to $\C$, we can construct an ABS $\T_{\G}'$. Instead, we construct a symmetric nondeterministic B\"uchi tree automaton (NBS) $\T_{\G} = \langle \Sigma, Q_{\G}, \delta_{\T_{\G}}, q_0^{\G}, \alpha_{\G} \rangle$ where $\delta_{\T_{\G}}(q, \sigma) = \bigvee_{q' \in F_{\G} \cap \delta_{\G}(q, \sigma) } (\Box, q', 2) \lor \bigvee_{q' \in (Q_{\G} \setminus F_{\G}) \cap \delta_{\G}(q, \sigma) } (\boxtimes_1, q', 2)$ for all $(q,\sigma) \in \supp(\delta_{\G})$. The automaton $\T_{\G}$ differs from $\T_{\G}'$ only in the transition functions for the successor states that are non-accepting: in $\T_{\G}$, a copy of the automaton in a non-accepting state $q'$ is sent to some successor with colour $1$ and all the other successors with colour $2$; in $\T_{\G}'$, a non-accepting state $q'$ is sent to some successor with both colour $1$ and~$2$. 
The following lemma claims that these two tree automata are language equivalent:

\begin{restatable}{lem}{lemmaLanguageEquivalentNBTABT}\label{lemma:language-equivalent-NBT-ABT}
$\lang(\T_{\G}) = \lang(\T_{\G}')$.
\end{restatable}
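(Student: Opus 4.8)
The plan is to prove the two inclusions separately, in each case by transferring or lightly editing a winning strategy of player \emph{reject} between the two acceptance games. The first thing I would record is a purely structural comparison of $\T_{\G}$ and $\T_{\G}'$: they share the same state space, initial state and colour set $\{1,2\}$, and they offer player \emph{accept} literally the same moves. For an accepting successor $q'$, accept sends $(\Box,q',2)$ in both; for a non-accepting successor $q'$, accept picks in both automata a state $q'$ together with one distinguished direction $\upsilon$ (the $\boxtimes_1$-direction in $\T_{\G}$, the $\Diamond$-direction in $\T_{\G}'$). Expanding the operators shows that, once accept has fixed $q'$ and $\upsilon$ at a node $w$, the set of atoms that player \emph{reject} may then execute is $S_{\T_{\G}} = \{(\upsilon,q',1)\}\cup\{(c,q',2)\mid c\in\suc(w),\ c\neq\upsilon\}$ in $\T_{\G}$ and exactly $S_{\T_{\G}'} = S_{\T_{\G}}\cup\{(\upsilon,q',2)\}$ in $\T_{\G}'$. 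So the games differ only in that, in $\T_{\G}'$, reject has the single extra option of moving into the distinguished direction $\upsilon$ while keeping the good colour $2$. Since acceptance is decided by a parity game, which (as noted in the preliminaries) is determined with memoryless strategies, I may assume that whenever reject wins it does so memorylessly.

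For $\lang(\T_{\G}')\subseteq\lang(\T_{\G})$ I would argue contrapositively. If a tree $t$ is rejected by $\T_{\G}$, player reject has a winning strategy in the acceptance game on $\T_{\G}$; because the accept-moves coincide and $S_{\T_{\G}}\subseteq S_{\T_{\G}'}$ at every position, this is also a legal reject strategy in the game on $\T_{\G}'$. Since reject never invokes the extra atom, every resulting play visits the same node/state pairs with the same edge colours as in $\T_{\G}$, so it still carries colour $2$ only finitely often and reject wins, i.e.\ $t$ is rejected by $\T_{\G}'$.

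For the converse $\lang(\T_{\G})\subseteq\lang(\T_{\G}')$, which I expect to be the main point, I would start from a memoryless winning strategy of player reject in the game on $\T_{\G}'$ and edit it: whenever it executes the extra atom $(\upsilon,q',2)$, replace that move by $(\upsilon,q',1)$. All edited choices then lie in $S_{\T_{\G}}$, so the result is a legal reject strategy on $\T_{\G}$. The crucial point is that it remains winning. This is because replacing $(\upsilon,q',2)$ by $(\upsilon,q',1)$ leaves the target (state $q'$ at node $w\cdot\upsilon$) unchanged and merely recolours the single traversed edge from $2$ to $1$; hence every play of the edited strategy follows the same sequence of node/state pairs as a play of the original one, with a subset of its colour-$2$ edges. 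Turning colour-$2$ edges into colour-$1$ edges can only decrease the colour-$2$ occurrences along a path, so a play that previously had colour $2$ finitely often keeps that property, and reject still wins. Thus $t$ rejected by $\T_{\G}'$ implies $t$ rejected by $\T_{\G}$, which with the first inclusion yields the desired equality.

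The one step that needs care—and which I would flag as the main obstacle—is precisely this winning-preservation argument in the converse inclusion: I must check that the recolouring touches nothing but the colours on the distinguished edges, so that the reached states and nodes, and therefore player accept's subsequent responses, are unaffected. Only then is the edited reject strategy guaranteed to induce essentially the same family of plays and to be unable to accidentally create an accepting (colour-$2$-infinitely-often) path.
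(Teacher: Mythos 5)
Your proposal is correct and follows essentially the same route as the paper's proof: one inclusion by noting that player \emph{reject}'s options in $\T_{\G}'$ strictly contain those in $\T_{\G}$ (so a rejecting strategy transfers verbatim), and the other by editing a memoryless rejecting strategy of $\T_{\G}'$ to always take the colour-$1$ atom on the distinguished direction, observing that this only lowers colours along unchanged plays and hence preserves winning for \emph{reject}. Your explicit computation of the atom sets $S_{\T_{\G}}$ and $S_{\T_{\G}'}$ and the remark that the recolouring does not affect the reached node/state pairs make the argument slightly more detailed than the paper's, but it is the same proof.
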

\begin{proof}
To prove $\lang(\T_{\G}') \subseteq \lang(\T_{\G})$, we show that any tree rejected by $\T_{\G}$ is rejected by $\T_{\G}'$, that is, for a tree $t$, if player {\it reject} in $\T_{\G}$ has a winning strategy, player {\it reject} in $\T_{\G}'$ also has a winning strategy. It is trivially true as player {\it reject} in $\T_{\G}'$ can just use the winning strategy by player {\it reject} in $\T_{\G}$, since the set of choices of player {\it reject} in $\T_{\G}'$ is a superset of that in $\T_{\G}$.

Assume that player {\it reject} in $\T_{\G}'$ has a memoryless winning strategy over an input tree.
We modify this winning strategy such that player {\it reject} sends a copy of the automaton in state $q$ with colour $1$ regardless of which colour is chosen in the original winning strategy, whenever player {\it reject} has the choice of sending with either colour $1$ or $2$. 
We argue that this modified strategy is also a winning strategy for player {\it reject}.
The run tree induced by the modified strategy remains the same with possibly more colour $1$'s on some edges of the run tree. For any tree rejected by $\T_{\G}'$, player {\it reject} in $\T_{\G}$ can use this modified winning strategy to win the rejection game.
It follows that $\lang(\T_{\G}) \subseteq \lang(\T_{\G}')$.
\end{proof}

By \cref{lemma:word-to-tree-candidate} and 
\cref{lemma:language-equivalent-NBT-ABT}, 
we have:
\begin{restatable}{lem}{lemmaWordToTreeGFM}\label{lemma:word-to-tree-GFM}
The NBS $\T_{\G}$ accepts an MC $\M$ with initial state $s_0$ if, and only if, the syntactic product of $\G$ and $\M$ almost surely accepts, that is, $\mathrm{PSyn}_{\G}^{\M}(s_0) = 1$. 
\end{restatable}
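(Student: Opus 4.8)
The plan is to obtain \cref{lemma:word-to-tree-GFM} as a direct corollary of the two preceding results, chaining them through the auxiliary automaton $\T_{\G}'$ rather than re-running any probabilistic argument.

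First I would observe that the construction of the ABS $\T_{\C}$ from the candidate NBW $\C$, together with the correctness argument of \cref{lemma:word-to-tree-candidate}, makes no use of any special property of $\C$: it is a purely syntactic recipe $\N \mapsto \T_{\N}$ that turns an arbitrary NBW into a symmetric alternating B\"uchi tree automaton, and both directions of that proof go through verbatim for any input automaton. The ``if'' direction builds a positional optimal strategy of the product MDP and picks, for each state, a direction on a shortest path to the next final state, thereby bounding the length of colour-$1$ stretches on every path of the run tree; the ``only if'' direction reads off a winning memoryless strategy from an accepting run. Neither step appeals to anything specific to $\C$. Applying this recipe to $\G$ therefore yields the ABS $\T_{\G}'$ satisfying: $\T_{\G}'$ accepts an MC $\M$ with initial state $s_0$ if, and only if, $\mathrm{PSyn}_{\G}^{\M}(s_0) = 1$.

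Second, I would invoke \cref{lemma:language-equivalent-NBT-ABT}, which asserts $\lang(\T_{\G}) = \lang(\T_{\G}')$, i.e.\ the NBS $\T_{\G}$ and the ABS $\T_{\G}'$ accept exactly the same labelled trees, and hence the same MCs (recall that an MC with an initial state is accepted precisely when its unravelling, disregarding probabilities, is accepted). Combining the two facts gives the chain: $\T_{\G}$ accepts $\M$ with $s_0$ iff $\T_{\G}'$ does iff $\mathrm{PSyn}_{\G}^{\M}(s_0) = 1$, which is exactly the claim.

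The only point that warrants a line of care---and the closest thing to an obstacle---is the transfer of \cref{lemma:word-to-tree-candidate} from $\C$ to $\G$: one must state explicitly that the lemma is really about the generic construction $\N \mapsto \T_{\N}$ and not about $\C$ in particular, so that $\T_{\G}'$ legitimately inherits the syntactic-acceptance characterization. Once this is noted, everything else is bookkeeping, and no new automata-theoretic or measure-theoretic work is needed.
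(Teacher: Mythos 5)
Your proposal matches the paper's own argument: the paper derives this lemma directly by writing ``By \cref{lemma:word-to-tree-candidate} and \cref{lemma:language-equivalent-NBT-ABT}, we have:'' before the statement, i.e.\ it applies the generic construction and characterization of \cref{lemma:word-to-tree-candidate} to $\G$ to obtain $\T_{\G}'$ and then transfers the result to $\T_{\G}$ via the language equivalence $\lang(\T_{\G}) = \lang(\T_{\G}')$. Your explicit remark that the construction and its correctness proof are uniform in the input NBW is exactly the (implicit) justification the paper relies on, so the two arguments coincide.
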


\paragraph{\textbf{Strategy of the ACS $\widetilde{\T}_{\C}$.}} Define a function $\delta': Q \to \{\Box\} \cup \Upsilon$ such that a state $q \in Q$ is mapped to an $\upsilon \in \Upsilon$ if $q \in F$ and is mapped to either $\Box$ or an $\upsilon \in \Upsilon$ if $q \not\in F$. Let $\Delta_{\C}$ be the set of such functions. Intuitively, a function $\delta'$ can be considered as the strategy of player {\it accept} at a tree node. Since an input tree is accepted by a parity automaton iff player {\it accept} has a memoryless winning strategy on the tree, we can define the strategy of $\widetilde{\T}_{\C}$ for a $\Sigma$-labelled $\Upsilon$-tree $\langle T, V\rangle$ as an $\Delta_{\C}$-labelled tree $\langle T,f_{\C} \rangle$ where $f_{\C}: T \to \Delta_{\C}$. A strategy $\langle T, f_{\C} \rangle$ induces a single run $\langle T_r, r \rangle$ of $\widetilde{\T}_{\C}$ on $\langle T,V \rangle$. The strategy is winning if the induced run is accepting. Whenever the run $\langle T_r, r \rangle$ is in state $q$ as it reads a node $w \in T$, it proceeds according to $f_{\C}(w)$. Formally, $\langle T_r, r \rangle$ is the $(T \times Q)$-labelled tree such that:
\begin{itemize}
    \item $\epsilon \in T_{r}$ and $r(\epsilon) = (\epsilon, q_0)$;
    \item Consider a node $w_r \in T_r$ with $r(w_r) = (w, q)$. Let $\sigma = V(w)$. For all $q' \in \delta(q, \sigma)$ and $c = f_{\C}(w)(q')$ we have the following two cases: 
    \begin{enumerate}
        \item if $c = \Box$, then for each successor node $w'$ of $w$, there is a $j \in \nat$ such that $w_r {\cdot} j \in T_r$, $r(w_r {\cdot} j) = (w', q')$, the transition in $T_r$ between $w_r$ and $w_r {\cdot} j$ is assigned colour $0$;
        \item if $c \in \Upsilon$ and $w {\cdot} c \in \suc(w)$, then there is a $j \in \nat$ such that $w_r {\cdot} j \in T_r$, $r(w_r {\cdot} j) = (w {\cdot} c, q')$ and the transition in $T_r$ between $w_r$ and $w_r {\cdot} j$ is assigned colour $1$.
    \end{enumerate}  
\end{itemize}

\begin{restatable}{lem}{lemmaStrategyAndStrategyLabelledTreeCandidate}\label{lemma:strategy-and-strategy-labelled-tree-candidate}
The ACS $\widetilde{\T}_{\C}$ accepts a $\Sigma$-labelled $\Upsilon$-tree $\langle T, V \rangle$ if, and only if, there exists a winning strategy $\langle T, f_{\C} \rangle$ for the tree.
\end{restatable}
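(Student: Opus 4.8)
The plan is to read the statement as the standard game-semantic characterisation of acceptance for $\widetilde{\T}_{\C}$, refined to the observation that the positional strategies of player \emph{accept} are exactly the $\Delta_{\C}$-labellings $f_{\C}$. Recall that acceptance of a tree is \emph{by definition} the existence of a winning strategy for player \emph{accept} in the acceptance game, and that on $\widetilde{\T}_{\C}$ the only colours are $0$ and $1$, so this game is a co-B\"uchi game in which \emph{accept} wins a branch iff colour $1$ occurs only finitely often. As already noted in the excerpt, such games are memoryless determined, so \emph{accept} wins iff it has a positional winning strategy, i.e.\ one whose move depends only on the current pair (tree node $w$, automaton state $q$). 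I would therefore prove the lemma by matching these positional strategies with winning $f_{\C}$'s in both directions.

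The \textbf{`if'} direction is routine. Given a winning strategy $\langle T, f_{\C}\rangle$, I read off the \emph{accept} move at a position $(w,q)$ directly from $f_{\C}(w)$: for every successor $q' \in \delta(q,V(w))$ play $\Box$ (the colour-$0$ atoms $(c,q',0)$, $c \in \suc(w)$) when $f_{\C}(w)(q') = \Box$, and the single colour-$1$ atom $(\upsilon,q',1)$ when $f_{\C}(w)(q')=\upsilon$. Because $\langle T, f_{\C}\rangle$ is winning, its induced run is a genuine accepting run; in particular every direction $f_{\C}(w)(q')=\upsilon$ that is actually used lies in $\suc(w)$ (otherwise the required copy of $q'$ would be dropped and the run would fail the corresponding conjunct), so the move I read off is a legitimate set satisfying $\delta_{\widetilde{\T}}(q,V(w))$. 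The plays of this \emph{accept} strategy are exactly the branches of the $f_{\C}$-induced run tree, all of which satisfy the co-B\"uchi condition; hence \emph{accept} wins and the tree is accepted.

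For the \textbf{`only if'} direction I expect the main obstacle to be an indexing mismatch: a positional strategy obtained from determinacy chooses the mode of a successor $q'$ as a function of the \emph{source} pair $(w,q)$, whereas $f_{\C}(w)$ fixes the mode of $q'$ as a function of the \emph{target} data $(w,q')$ only. The plan is to bypass an arbitrary positional strategy and instead exploit the co-B\"uchi attractor structure, for which target-indexing is natural. I would first define $\mathrm{Safe}$ as the greatest set of positions $(w,q)$ such that $q$ has no final successor under $V(w)$ and $(w{\cdot}c,q') \in \mathrm{Safe}$ for every successor $q'$ and every $c \in \suc(w)$; on $\mathrm{Safe}$ player \emph{accept} can play $\Box$ everywhere, which is colour $0$ forever and already target-indexed. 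The winning region of this co-B\"uchi game is \emph{accept}'s attractor of $\mathrm{Safe}$, carrying a rank $\rho$ with $\rho=0$ exactly on $\mathrm{Safe}$ and such that a position has finite rank iff every successor $q'$ has some child $(w{\cdot}c,q')$ of strictly smaller rank.

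The decisive point is that $\rho$ is a property of a position and hence source-independent. I therefore define the target-indexed $f_{\C}$ as follows: at a rank-$0$ (i.e.\ $\mathrm{Safe}$) node set $f_{\C}(w)(q')=\Box$ for all $q'$; at a positive-rank node set $f_{\C}(w)(q')$ to a direction $\upsilon$ realising $\min_{c}\rho(w{\cdot}c,q')$. Because a finite-rank source $(w,q)$ guarantees $\min_{c}\rho(w{\cdot}c,q') < \rho(w,q)$ for each of its successors, this single choice is rank-reducing simultaneously for all sources, so along every branch of the induced run the rank strictly decreases until $\mathrm{Safe}$ is reached and thereafter only colour $0$ is produced; as ranks are natural numbers, each branch carries finitely many colour-$1$ edges and the run is accepting. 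Since a tree is accepted iff $(\varepsilon,q_0)$ lies in \emph{accept}'s winning region, which equals the finite-rank positions, this yields the winning $f_{\C}$. The points I would be most careful about are justifying the attractor characterisation of the co-B\"uchi winning region (standard, via positional determinacy) and verifying that the rank-reducing direction exists at every finite-rank position, which is precisely the defining property of the attractor layers.
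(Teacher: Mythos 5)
Your `if' direction is sound, and you are right to isolate the real subtlety in the `only if' direction: a positional strategy for player \emph{accept} fixes the mode of a successor state $q'$ as a function of the \emph{source} pair $(w,q)$, whereas $f_{\C}(w)$ must fix it as a function of the target data $(w,q')$ alone. (The paper, for what it is worth, gives no proof of this lemma and simply appeals to memoryless determinacy, which is exactly the step your concern shows is not quite enough on its own.) However, your resolution rests on the claim that the winning region of the co-B\"uchi acceptance game is \emph{accept}'s attractor of $\mathrm{Safe}$, and that claim is false: co-B\"uchi winning regions require the alternating fixpoint $\mu Y.\nu X.(\cdots)$, because a position can be winning when \emph{reject} is merely \emph{allowed} to stall forever along colour-$0$ edges without ever being \emph{forced} into $\mathrm{Safe}$. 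Concretely, take $Q=\{q,p,r\}$ with only $p$ final, $\delta(q,a)=\{q,p\}$, $\delta(p,a)=\{r\}$, $\delta(r,a)=\{r\}$, and the one-branch tree labelled $a^\omega$. Then $\mathrm{Safe}$ consists of the $p$- and $r$-positions, and \emph{accept} cannot force the play out of the $q$-positions (it must offer an atom for the non-final successor $q$, which \emph{reject} may follow forever), so the root has infinite rank under your definition; yet \emph{accept} wins there -- every play either stalls in $q$-positions with colour $0$ forever, or takes a single colour-$1$ edge to $p$ and then stays in $r$ -- and a winning $f_{\C}$ exists ($\Box$ on $q$ and $r$, the unique direction on $p$). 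Your construction therefore produces no winning $f_{\C}$ for some accepted trees, and the `only if' direction is not established.

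The repair is to take the ranks from the correct fixpoint: let $Y_0=\emptyset$ and let $Y_{k+1}$ be the largest set $X$ of positions such that every successor $q'\in\delta(q,V(w))$ of $(w,q)$ either has some child $(w{\cdot}c,q')$ in $Y_k$, or is non-final with all children in $X$; the winning region is $\bigcup_k Y_k$ and the rank of a position is the least $k$ with $(w,q)\in Y_k$. These ranks are still properties of positions, so the target-indexed rule ``if $q'$ is final, or if the children of $q'$ do not all have the same rank, send $q'$ by $\Diamond$ to a child of minimal rank; otherwise send it by $\Box$'' is well defined, and the case analysis you sketched goes through for it: from any finite-rank source, colour-$1$ edges strictly decrease the rank and colour-$0$ edges never increase it, so every branch of the induced run carries finitely many colour-$1$ edges. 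With that substitution (and keeping your `if' direction, including the check that used $\Diamond$-directions lie in $\suc(w)$), the proof is complete.
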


\paragraph{\textbf{Strategy of the NBS $\T_{\G}$.}} Let $\Delta_\G \subseteq (\{\Box\} \cup \Upsilon) \times Q_\G$ such that $(\Box, q) \in \Delta_\G$ if $q \in F_{\G}$ and $(\upsilon, q) \in \Delta_\G$ for some $\upsilon \in \Upsilon$ if $q \not\in F_{\G}$. An item in the set $\Delta_{\G}$ can be considered as the strategy of player {\it accept} at a tree node in the acceptance game of $\T_{\G}$. 
Intuitively, by choosing both, the atoms that are going to be satisfied and the directions in which the existential requirements are going to be satisfied, an item in $\Delta_\G$ resolves all the nondeterminism in $\delta_{\T_{\G}}$.
A tree is accepted if, and only if, player {\it accept} has a memoryless winning strategy. 
Thus, we can define player {\it accept}'s strategy of $\T_{\G}$ on a $\Sigma$-labelled $\Upsilon$-tree $\langle T, V \rangle$ as a $\Delta_{\G}$-labelled tree $\langle T, f_{\G} \rangle$ with $f_{\G}: T \to \Delta_{\G}$. A strategy $\langle T, f_{\G} \rangle$ induces a single run $\langle T_r, r \rangle$ and it is winning if, and only if, the induced run is accepting.   


Whenever the run $\langle T_r, r \rangle$ is in state $q$ as it reads a node $w \in T$, it proceeds according to $f_{\G}(w)$. Formally, $\langle T_r, r \rangle$ is the $(T \times Q_{\G})$-labelled tree such that:
\begin{itemize}
    \item $\epsilon \in T_{r}$ and $r(\epsilon) = (\epsilon, q_0^{\G})$;
    \item Consider a node $w_r \in T_r$ with $r(w_r) = (w, q)$. Let $\sigma = V(w)$ and $(c, q') = f_{\G}(w)$.  
    We have the following two cases: 
    \begin{enumerate}
        \item if $c = \Box$, then $q' {\in} F_\G$ and, for each successor node $w'$ of $w$, there is a $j {\in} \nat$ s.t.\ $w_r {\cdot j} \in T_r$ and $r(w_r {\cdot j}) = (w', q')$, and the transition in $T_r$ between $w_r$ and $w_r {\cdot j}$ is assigned colour~$2$;
        \item if $c \in \Upsilon$ and $w {\cdot} c \in \suc(w)$, then $q'\notin F_\G$ and there is a $j \in \nat$ such that $w_r {\cdot j} \in T_r$, $r(w_r {\cdot j}) = (w {\cdot c}, q')$, and the transition in $T_r$ between $w_r$ and $w_r {\cdot j}$ is assigned colour~$1$.
        Also, for all successor node $w'$ of $w$ with $w' \not= w {\cdot c}$, there is a $j \in \nat$ s.t.\ $w_r {\cdot j} \in T_r$, $r(w_r {\cdot j}) = (w', q')$, and the transition in $T_r$ between $w_r$ and $w_r {\cdot j}$ is assigned the colour~$2$.
    \end{enumerate}  
\end{itemize}

\begin{restatable}{lem}{lemmaStrategyAndStrategyLabelledTreeGFM}\label{lemma:strategy-and-strategy-labelled-tree-GFM}
The NBS $\T_{\G}$ accepts a $\Sigma$-labelled $\Upsilon$-tree $\langle T, V \rangle$ if, and only if, there exists a winning strategy $\langle T, f_{\G} \rangle$ for the tree.
\end{restatable}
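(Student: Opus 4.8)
The plan is to read the claimed equivalence directly off the game-theoretic characterisation of acceptance recalled in the preliminaries: a tree is accepted by a parity tree automaton if, and only if, player \emph{accept} has a \emph{memoryless} winning strategy in the acceptance game. The labelling $\langle T, f_{\G}\rangle$ is designed to be exactly a compact presentation of such a memoryless strategy, so the proof amounts to exhibiting a faithful, run-preserving translation in both directions. I would treat the two directions separately, in the same spirit as the companion \cref{lemma:strategy-and-strategy-labelled-tree-candidate} for $\widetilde{\T}_{\C}$.

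For the \textbf{`if'} direction I would argue by unfolding the definitions. A strategy $\langle T, f_{\G}\rangle$ induces, by the construction preceding the statement, a single run $\langle T_r, r\rangle$ of $\T_{\G}$ on $\langle T, V\rangle$, and `winning' was defined to mean precisely that this induced run is accepting. Since $\T_{\G}$ accepts $\langle T,V\rangle$ exactly when some accepting run tree exists, the existence of a winning $f_{\G}$ immediately yields acceptance. Here I would only need to check that the two cases in the definition of the induced run respect the transition formula $\delta_{\T_{\G}}$: the case $c = \Box$ with $q' \in F_{\G}$ realises a disjunct $(\Box, q', 2)$, and the case $c = \upsilon \in \Upsilon$ with $q' \notin F_{\G}$ realises the chosen direction of the operator $(\boxtimes_1, q', 2)$, namely colour $1$ into $\upsilon$ and colour $2$ elsewhere, so that the set of atoms used at every node satisfies $\delta_{\T_{\G}}(q, V(w))$.

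For the \textbf{`only if'} direction I would start from acceptance, invoke memoryless determinacy to obtain a memoryless winning strategy $\tau$ for player \emph{accept}, and then extract $f_{\G}$ from $\tau$. The key structural observation is that $\T_{\G}$ is \emph{nondeterministic}: at each node every disjunct of $\delta_{\T_{\G}}$ propagates a single automaton state to all successors, so the run determined by $\tau$ visits a uniquely determined state $q_w$ at each node $w$ of $T$. Consequently $\tau$ need only be consulted at the reachable configurations $(w, q_w)$, and a single label per node suffices in place of the full state-indexed strategy. At $(w, q_w)$, player \emph{accept} selects one disjunct of $\delta_{\T_{\G}}(q_w, V(w))$, giving a successor $q'$ together with, in the $(\boxtimes_1, q', 2)$ case, the direction $\upsilon$ receiving colour $1$; I would set $f_{\G}(w) = (\Box, q')$ when $q' \in F_{\G}$ and $f_{\G}(w) = (\upsilon, q')$ when $q' \notin F_{\G}$, which lies in $\Delta_{\G}$ by definition. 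The run induced by $f_{\G}$ then coincides node-for-node, and colour-for-colour, with the run played out by $\tau$, hence is accepting, so $f_{\G}$ is winning.

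The main obstacle I anticipate is bookkeeping rather than conceptual: one must verify that this translation preserves colours along every infinite path, so that the B\"uchi condition (colour $2$ infinitely often) is genuinely inherited, and that restricting from a general memoryless game strategy to a single $\Delta_{\G}$-label per node is sound precisely because nondeterminism forces one reachable state per node. Care is also needed with the $\boxtimes_1$ operator, whose colour-$1$ direction is player \emph{accept}'s free choice and must be recorded faithfully in the $\upsilon$-component of $f_{\G}(w)$; everything else follows the pattern already established for $\widetilde{\T}_{\C}$.
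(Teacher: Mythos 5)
Your proposal is correct and follows exactly the route the paper intends: the paper gives no explicit proof of this lemma, treating it as immediate from the game-theoretic characterisation of acceptance via memoryless strategies together with the remark that an item of $\Delta_{\G}$ resolves all the nondeterminism in $\delta_{\T_{\G}}$. Your elaboration --- in particular the observation that nondeterminism forces a unique reachable state per tree node, so a single $\Delta_{\G}$-label per node suffices in place of a $(w,q)$-indexed strategy, and the colour bookkeeping for $\Box$ versus $\boxtimes_1$ --- fills in precisely the details the paper leaves implicit.
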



\paragraph{\textbf{Making Strategies Explicit.}} Annotating input trees with functions from $\Delta_{\C}$ and $\Delta_{\G}$ enables us to transform the ACS $\widetilde{\T}_{\C}$ to a universal co-B\"uchi tree automaton (UCT) and the NBS $\T_{\G}$ to a nearly symmetric deterministic B\"uchi tree automaton (DBN). 

We first define a UCT that runs on $(\Sigma \times \Delta_{\C} \times \Delta_{\G})$-labelled $\Upsilon$-trees $\mathcal{U}_{\C} = \langle \Sigma \times \Delta_{\C} \times \Delta_{\G}, Q, \delta_{\mathcal{U}_{\C} }, q_0,  \alpha_{\U_{\C}} \rangle$. For all $q \in Q$ and $(\sigma, \delta', g) \in \Sigma \times \Delta_{\C} \times \Delta_{\G}$, define $\delta_{\U_{\C} }(q, \langle \sigma, \delta', g \rangle) = \bigwedge_{q' \in \delta(q, \sigma) \land \delta'(q') = \Box} (\Box, q', 0) \land \bigwedge_{q' \in \delta(q, \sigma) \land \upsilon = \delta'(q') \in \Upsilon} (\upsilon, q', 1)$. Then, we have: 


\begin{restatable}{lem}{lemmaAccStrategyExplicitCandidate}\label{lemma:acc-strategy-explicit-candidate}
The UCT $\mathcal{U}_{\C}$ accepts a tree $\langle T, (V, f_{\C}, f_{\G}) \rangle$ if, and only if, the ACS $\widetilde{\T}_{\C}$ accepts $\langle T, V \rangle$ with $\langle T,  f_{\C} \rangle$ as the winning strategy.
\end{restatable}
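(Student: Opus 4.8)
The plan is to show that the universal automaton $\mathcal{U}_\C$ is nothing more than $\widetilde{\T}_\C$ with the choices of player \emph{accept} hard-wired by the extra label component $f_\C$, so that acceptance by $\mathcal{U}_\C$ coincides, node for node, with acceptance of the strategy-induced run of $\widetilde{\T}_\C$. Concretely, I would first record the correspondence between player \emph{accept}'s options in $\widetilde{\T}_\C$ and the values of a function $\delta' \in \Delta_\C$. In $\delta_{\widetilde{\T}}(q,\sigma)$, a final successor $q' \in F \cap \delta(q,\sigma)$ forces the disjunct $(\Diamond, q', 1)$, i.e.\ the choice of a single direction with colour $1$; this matches $\delta'(q') = \upsilon \in \Upsilon$. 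A non-final successor $q' \in (Q \setminus F)\cap \delta(q,\sigma)$ offers the choice between $(\Diamond, q', 1)$ and $(\Box, q', 0)$, matching $\delta'(q') = \upsilon \in \Upsilon$ (direction $\upsilon$, colour $1$) and $\delta'(q') = \Box$ (all successors, colour $0$) respectively. Thus every $\delta' \in \Delta_\C$ is exactly a choice of player \emph{accept} at a single node, and $\delta_{\U_\C}(q, \langle \sigma, \delta', g\rangle)$ is precisely the conjunctive formula obtained from $\delta_{\widetilde{\T}}(q,\sigma)$ by resolving each disjunct according to $\delta'$.

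The heart of the argument is then to show that the run tree of $\mathcal{U}_\C$ on $\langle T, (V, f_\C, f_\G)\rangle$ and the run $\langle T_r, r\rangle$ of $\widetilde{\T}_\C$ induced by the strategy $\langle T, f_\C \rangle$ on $\langle T, V\rangle$ are the same $(T \times Q)$-labelled, edge-coloured tree. Because $\mathcal{U}_\C$ is universal, its transition formulas are conjunctions, so player \emph{accept} has no real choice: the canonical run tree takes all conjuncts, and since adding further atoms only adds paths and cannot help, $\mathcal{U}_\C$ accepts iff this canonical run tree is accepting. I would prove the coincidence by a straightforward induction over the nodes of $T_r$, using $\sigma = V(w)$ and $\delta' = f_\C(w)$ at a node $w_r$ with $r(w_r) = (w, q)$: for $q' \in \delta(q, \sigma)$ with $\delta'(q') = \Box$, both constructions create a child $(w', q')$ with colour $0$ for every successor $w'$ of $w$, and for $\delta'(q') = \upsilon \in \Upsilon$ both create the single child $(w \cdot \upsilon, q')$ with colour $1$. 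Hence the two trees agree on their node labels and, crucially, on every edge colour.

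Since $\widetilde{\T}_\C$ and $\mathcal{U}_\C$ are both co-B\"uchi over $C = \{0,1\}$ and the two run trees carry identical edge colours, one is accepting (colour $1$ only finitely often on every infinite path) if, and only if, the other is. Chaining the equivalences then yields the lemma: $\mathcal{U}_\C$ accepts $\langle T, (V, f_\C, f_\G)\rangle$ iff its canonical run tree is accepting iff the strategy-induced run $\langle T_r, r\rangle$ is accepting iff $\langle T, f_\C \rangle$ is a winning strategy for $\widetilde{\T}_\C$ on $\langle T, V\rangle$. I would also remark that the third label component $f_\G$ does not occur in $\delta_{\U_\C}$ at all, so it is merely carried along and has no effect on acceptance, consistently with the statement referring only to $f_\C$.

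I do not expect a serious obstacle, as the lemma is essentially bookkeeping: its content is that baking $f_\C$ into the alphabet turns the existential choices of the dualised automaton into a fixed universal transition. The one point requiring genuine care is the colour bookkeeping, namely that the $\Box$/$\Diamond$ resolution of $\widetilde{\T}_\C$ matches the $\Box$/direction case split of $\delta_{\U_\C}$ exactly (colour $0$ for $\Box$, colour $1$ for a chosen direction), and that the constraint "$\delta'(q') = \Box$ only for $q' \notin F$" is compatible with $\delta_{\widetilde{\T}}$ offering the $(\Box, q', 0)$ disjunct precisely for non-final successors; these are exactly the places where a mismatch could have crept in.
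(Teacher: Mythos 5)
Your proof is correct: the paper states this lemma without an explicit proof (treating it as immediate from the definitions of $\Delta_\C$, the induced run, and $\delta_{\U_\C}$), and your argument — identifying the canonical run tree of the universal automaton $\U_\C$ on the annotated tree with the run of $\widetilde{\T}_\C$ induced by $\langle T, f_\C\rangle$, node for node and colour for colour, and noting that both automata share the co-B\"uchi condition over $\{0,1\}$ while $f_\G$ is inert — is exactly the intended bookkeeping argument.
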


Next, we define a DBN that runs on $(\Sigma \times \Delta_{\C} \times \Delta_{\G})$-labelled $\Upsilon$-trees $\D_{\G} = \langle \Sigma \times \Delta_{\C} \times \Delta_{\G}, Q_{\G}, \delta_{\D_{\G}}, q_0^{\G},  \alpha_{\D_{\G}} \rangle$ where
$\delta_{\D_{\G}}\big( q, \langle \sigma, \delta', (c, q') \rangle \big) = \left \{ \begin{array}{ll}
(\Box, q', 2) & \mbox{if $c = \Box$}\\
(\boxtimes_1^{c}, q', 2) & \mbox{if $c \in \Upsilon$}
\end{array}
\right .
$ for all $q \in Q_{\G}$ and $\big( \sigma, \delta', (c, q') \big) \in \Sigma \times \Delta_{\C} \times \Delta_{\G}$. Then, we have:

\begin{restatable}{lem}{lemmaAccStrategyExplicitGFM}\label{lemma:acc-strategy-explicit-GFM}
The DBN $\D_{\G}$ accepts a tree $\langle T, (V, f_{\C}, f_{\G}) \rangle$ if, and only if, the NBS $\T_{\G}$ accepts $\langle T, V \rangle$ with $\langle T, f_{\G} \rangle$ as the winning strategy.
\end{restatable}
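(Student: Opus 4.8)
The plan is to show that $\D_{\G}$, being deterministic, has on the annotated input $\langle T,(V,f_{\C},f_{\G})\rangle$ a single run, and that this run is \emph{literally} the run $\langle T_r,r\rangle$ of the NBS $\T_{\G}$ on $\langle T,V\rangle$ induced by the strategy $\langle T,f_{\G}\rangle$ in the sense defined just above \cref{lemma:strategy-and-strategy-labelled-tree-GFM}. Once the two runs are identified -- the same run tree $T_r$, the same state labels, and the same colours on corresponding edges -- the claimed equivalence is immediate: both automata carry the same B\"uchi (two-colour parity) condition, so an infinite path is accepting exactly when colour~$2$ occurs infinitely often, and since the run trees coincide, $\D_{\G}$ accepts iff the induced run is accepting iff $\langle T,f_{\G}\rangle$ is winning.

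First I would record that $\D_{\G}$ is genuinely deterministic and that its transition depends only on the $f_{\G}$-component of the label: it ignores $\sigma$, $\delta'$, and even the current state $q$, reading off $(c,q')=f_{\G}(w)$ and emitting $(\Box,q',2)$ when $c=\Box$ and $(\boxtimes_1^{c},q',2)$ when $c\in\Upsilon$. Expanding either symmetric operator yields a single conjunct with exactly one atom per direction, so the run of $\D_{\G}$ exists, is unique, and is determined entirely by $f_{\G}$ -- just as the $\T_{\G}$-strategy run is.

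Then I would match the two runs by induction on the depth of a node. Both runs put $q_0^{\G}$ at the root. For the inductive step, at a node $w$ carrying state $q$, both runs consult $f_{\G}(w)=(c,q')$ and propagate the state $q'$ to every successor. If $c=\Box$ (so that $q'\in F_{\G}$, by the membership constraint built into $\Delta_{\G}$), every outgoing edge receives colour~$2$ in both constructions. If $c\in\Upsilon$ (so that $q'\notin F_{\G}$), then expanding $\boxtimes_1^{c}$ gives colour~$1$ on the edge in direction $c$ and colour~$2$ on all other edges, which is exactly the colouring the definition of the $\T_{\G}$-strategy run prescribes in this case. Hence the run trees, their state labels, and their edge colours agree at every node, completing the induction.

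The only point needing care is this colour bookkeeping through the expansions of $\Box$ and $\boxtimes_1^{c}$: one must check that the distinguished direction $c$ is assigned colour~$1$ and every other successor colour~$2$, in perfect agreement with the strategy run, and that the constraint $c=\Box \Leftrightarrow q'\in F_{\G}$ baked into $\Delta_{\G}$ keeps the final/non-final case split consistent between the two automata. With the runs identified, the equivalence of the B\"uchi acceptance condition on identical infinite paths yields the lemma.
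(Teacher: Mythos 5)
Your proof is correct and is exactly the argument the paper relies on: the paper states this lemma without proof, treating it as immediate from the construction, and your unfolding of the definitions (the unique run of the deterministic $\D_{\G}$ coincides node-by-node, state-by-state and colour-by-colour with the strategy-induced run of $\T_{\G}$, after which the identical B\"uchi conditions give the equivalence) is precisely that implicit argument made explicit.
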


\paragraph{\textbf{Tree Widening.}} The operator $\hide_Y$ maps a tree node $w \in (X \times Y)^*$ to tree node $w' \in X^*$ such that $w'$ is obtained by replacing each $\langle x, y \rangle$ in $w$ by $x$. The operator $\wide_Y$ maps a $\Sigma$-labelled $\Upsilon$-tree $\langle T, V \rangle$ to a $\Sigma$-labelled $(\Upsilon \times Y)$-tree $\langle T', V' \rangle$ such that, for every node $w \in T'$, we have $V'(w) = V(\hide_Y(w))$. Let $Q_{\text{\ding{73}}} = Q \cup \{\text{\ding{73}}\}$. We widen the input trees with $\QDiamond$. We then define a UCT and a DBN that accept the widened trees. 

Let a UCT which runs on $(\Sigma \times \Delta_{\C} \times \Delta_{\G})$-labelled $(\Upsilon \times \QDiamond)$-trees $\U_{\C}^{w} = \langle \Sigma \times \Delta_{\C} \times \Delta_{\G}, Q, \delta_{\U_{\C}^{w}}, q_0, \alpha_{\U_{\C}^{w}} \rangle$. For all $q \in Q$ and $(\sigma, \delta', g) \in \Sigma \times \Delta_{\C} \times \Delta_{\G}$, define $\delta_{\U_{\C}^{w}}(q, \langle \sigma, \delta', g\rangle) = \bigwedge_{q' \in \delta(q, \sigma) \land \delta'(q') = \Box} (\Box, q', 0) \land \bigwedge_{q' \in \delta(q, \sigma) \land \upsilon = \delta'(q') \in \Upsilon} \big( (\upsilon, q'), q', 1 \big)$ for all $q \in Q$ and $(\sigma, \delta', g) \in \Sigma \times \Delta_{\C} \times \Delta_{\G}$. 


\begin{restatable}{lem}{lemmaWidenCandidate}\label{lemma:widen-candidate}
The UCT $\mathcal{U}_{\C}$ accepts a $(\Sigma \times \Delta_{\C} \times \Delta_{\G})$-labelled tree $\langle T, (V, f_{\C}, f_{\G}) \rangle$ if, and only if, the UCT $\U_{\C}^{w}$ accepts a $(\Sigma \times \Delta_{\C} \times \Delta_{\G})$-labelled tree $\wide_{\QDiamond}\big( \langle T, (V, f_{\C}, f_{\G}) \rangle \big)$. 
\end{restatable}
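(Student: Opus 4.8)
The plan is to exploit that both $\U_{\C}$ and $\U_{\C}^{w}$ are universal automata, so on their respective inputs each has an essentially canonical run tree (player \emph{accept} has no real choice in a pure conjunction), and acceptance is exactly the all-branches condition ``colour $1$ occurs only finitely often on every infinite branch''. The widened input $\wide_{\QDiamond}\big(\langle T,(V,f_{\C},f_{\G})\rangle\big)$ has node set $T' = \{w' : \hide_{\QDiamond}(w') \in T\}$ and labels $V'(w') = V(\hide_{\QDiamond}(w'))$, so a node $w'$ and its projection $w = \hide_{\QDiamond}(w')$ carry the same $\Sigma\times\Delta_{\C}\times\Delta_{\G}$ label; in particular the transition taken from an automaton state $q$ is governed by the same strategy entry $f_{\C}(w)$ in both automata. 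The only syntactic difference is that a successor state $q'$ selected existentially (i.e.\ with $f_{\C}(w)(q')=\upsilon\in\Upsilon$) is sent by $\U_{\C}$ to direction $\upsilon$ with colour $1$, whereas $\U_{\C}^{w}$ sends it to the refined direction $(\upsilon,q')$ with colour $1$; the universal moves (with $f_{\C}(w)(q')=\Box$) go to \emph{all} successors with colour $0$ in both cases.

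Building on this, I would establish a colour-preserving correspondence between the infinite branches of the two run trees, in both directions. To pass from a branch of $\U_{\C}^{w}$ to one of $\U_{\C}$, project every tree node through $\hide_{\QDiamond}$ while keeping the automaton states and colours; this yields a legal branch of $\U_{\C}$'s run tree because $(\upsilon,y)\in\suc_{T'}(w')$ holds if and only if $\upsilon\in\suc_{T}(w)$, and projection never touches a colour. To pass from a branch of $\U_{\C}$ to one of $\U_{\C}^{w}$, lift it step by step: a colour-$0$ step into some $\upsilon$ is mirrored by a step into $(\upsilon,\text{\ding{73}})$ (any fixed $\QDiamond$-copy will do), and a colour-$1$ step into $\upsilon$ (with $\upsilon=f_{\C}(w)(q')$) is mirrored by the forced step into $(\upsilon,q')$. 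In each case the lifted target exists in $T'$, and the colour of the step is unchanged.

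With the two maps in hand, the conclusion is immediate: every infinite branch of one run tree has a branch of the other with the identical colour sequence, so a branch violating the co-B\"uchi condition on one side produces one on the other side. Hence the all-branches co-B\"uchi condition holds for $\U_{\C}$ on $\langle T,(V,f_{\C},f_{\G})\rangle$ exactly when it holds for $\U_{\C}^{w}$ on the widened tree, which is the claimed equivalence.

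The only genuinely delicate points, which I would isolate as two short sub-claims, are (i) that the existential target $(\upsilon,q')$ is always an actual successor in the widened tree --- this is precisely what the extra directions indexed by $\QDiamond \supseteq Q$ provide, so the existential move can always be mirrored --- and (ii) that both projecting and lifting leave the colour of each step unchanged, which holds because the colour of the move to $q'$ is decided solely by $f_{\C}(w)(q')$, a value shared by the two labellings after hiding. I expect the main (though still routine) obstacle to be bookkeeping the widening carefully enough to see that it merely relabels the target direction of existential moves, touching neither the colours nor the set of reachable automaton states along a branch; once that is pinned down, the equivalence follows.
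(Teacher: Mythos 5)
Your proposal is correct and follows essentially the same route as the paper's proof: both compare the (canonical) run trees of the two universal automata and observe that branches correspond under $\hide_{\QDiamond}$-projection and its lifting without changing colours, so the all-branches co-B\"uchi condition transfers in both directions. Your write-up is in fact somewhat more precise than the paper's (which merely notes that the run tree of $\U_{\C}$ embeds into that of $\U_{\C}^{w}$ and that the extra paths cannot hurt acceptance), but the underlying argument is the same.
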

\begin{proof}
Consider the run trees of $\mathcal{U}_{\C}$ and $\U_{\C}^{w}$ on an arbitrary input tree $\langle T, (V, f_{\C}, f_{\G}) \rangle$ and the corresponding widened tree $\wide_{\QDiamond}\big( \langle T, (V, f_{\C}, f_{\G}) \rangle \big)$, respectively. For every tree path in the run tree of $\mathcal{U}_{\C}$, there is the same tree path in the run tree of $\U_{\C}^{w}$. In other words, the run tree of $\mathcal{U}_{\C}$ is a subtree of that of $\U_{\C}^{w}$. Furthermore, consider the paths that are in the run tree of $\U_{\C}^{w}$ but are not in the run tree of $\U_{\C}$. For those paths, there are more colour $0$'s, which will not affect the acceptance of the run tree of the UCT~$\U_{\C}^{w}$. Thus, the run tree of $\mathcal{U}_{\C}$ is accepting if, and only if, the run tree of $\U_{\C}^{w}$ is accepting.
\end{proof}

Define a DBN that runs on $(\Sigma \times \Delta_{\C} \times \Delta_{\G})$-labelled $(\Upsilon \times \QDiamond)$-trees $\D_{\G}^{w} = \langle \Sigma \times \Delta_{\C} \times \Delta_{\G}, Q_{\G}, \delta_{\D_{\G}^{w}}, q_0^{\G}, \alpha_{\D_{\G}^{w}} \rangle$ where $\delta_{\D_{\G}^w}\big( q, \langle \sigma, \delta', (c, q') \rangle \big)$ is $(\Box, q', 2)$ if $c = \Box$ and $(\boxtimes_1^{(c,\text{\ding{73}})}, q', 2)$ if $c \in \Upsilon$
for all $q \in Q_{\G}$, $\big( \sigma, \delta', (c, q') \big) \in \Sigma \times \Delta_{\C} \times \Delta_{\G}$. Similar to \cref{lemma:widen-candidate}, we have:
\begin{restatable}{lem}{lemmaWidenGFM}\label{lemma:widen-GFM}
The DBN $\D_{\G}$ accepts a $(\Sigma \times \Delta_{\C} \times \Delta_{\G})$-labelled tree $\langle T, (V, f_{\C}, f_{\G}) \rangle$ if, and only if, the DBN $\D_{\G}^w$ accepts a $(\Sigma \times \Delta_{\C} \times \Delta_{\G})$-labelled tree $\wide_{\QDiamond}\big( \langle T, (V, f_{\C}, f_{\G} ) \rangle \big)$. 
\end{restatable}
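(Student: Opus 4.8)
The plan is to mirror the proof of \cref{lemma:widen-candidate}, replacing the colour-$0$ bookkeeping of the co-B\"uchi argument by the colour-$2$ bookkeeping demanded by the (dual) B\"uchi condition of the deterministic automata $\D_{\G}$ and $\D_{\G}^w$. Since both automata are deterministic, each input tree has a unique run tree, so acceptance reduces to the statement that every infinite path of that unique run tree sees colour $2$ infinitely often. I would therefore prove the contrapositive equivalence: $\D_{\G}$ has a rejecting path in its run tree on $\langle T, (V, f_{\C}, f_{\G}) \rangle$ if, and only if, $\D_{\G}^w$ has a rejecting path in its run tree on $\wide_{\QDiamond}\big(\langle T, (V, f_{\C}, f_{\G}) \rangle\big)$; the claimed equivalence of acceptance then follows immediately.

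First I would read the colours off the two transition functions. On reading a letter whose $\Delta_{\G}$-component is $(c, q')$, both automata send $q'$ into every successor, colouring \emph{at most one} edge with $1$ and all remaining edges with $2$: when $c = \Box$ (so $q' \in F_{\G}$) every edge gets colour $2$, whereas when $c \in \Upsilon$ exactly one edge gets colour $1$, namely the direction $c$ in $\D_{\G}$ and the single widened direction $(c, \text{\ding{73}})$ in $\D_{\G}^w$, with all other directions---including every $(c, x)$ for $x \neq \text{\ding{73}}$---coloured $2$. Consequently, an infinite path of either run tree is rejecting precisely when, from some point on, at every step the strategy gives a $\Upsilon$-value and the path follows the distinguished colour-$1$ edge. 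For the forward implication I would take a rejecting path $\pi'$ of $\D_{\G}^w$; it must eventually follow $(c, \text{\ding{73}})$ at every step, so its projection $\hide_{\QDiamond}(\pi')$ is a well-defined infinite path of $\D_{\G}$'s run tree that eventually follows $c$ at every step and is hence rejecting. For the converse I would lift a rejecting path $\pi$ of $\D_{\G}$ by replacing each colour-$1$ step taken in direction $c$ by the step in direction $(c, \text{\ding{73}})$ (the finitely many initial steps lifted to any legal widened direction); the resulting $\pi'$ eventually carries colour $1$ at every step and is therefore rejecting in $\D_{\G}^w$.

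The only point requiring care---the exact analogue of the ``more colour $0$'s do not affect acceptance'' sentence in \cref{lemma:widen-candidate}---is to verify that the \emph{additional} directions introduced by $\wide_{\QDiamond}$ never manufacture a new rejecting path. This is where the choice of $\text{\ding{73}}$ as the unique colour-$1$ copy is essential: every widened direction other than $(c, \text{\ding{73}})$ carries colour $2$, so any path using such a direction infinitely often is automatically accepting, and only paths that stay forever on the $\text{\ding{73}}$-spine can be rejecting. I do not expect a genuine difficulty here; the main thing to get right is the bookkeeping of the $\boxtimes_1^{(c, \text{\ding{73}})}$ semantics against the direction-splitting performed by the widening, together with checking that the projection $\hide_{\QDiamond}$ and the lifting map both send infinite paths to infinite paths and preserve the ``eventually colour $1$'' property.
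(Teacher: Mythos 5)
Your proof is correct and follows essentially the same route as the paper's: both arguments compare the two (unique, deterministic) run trees path by path, observing that widening only introduces additional colour-$2$ edges off the distinguished $(c,\text{\ding{73}})$ direction, so that rejecting paths of $\D_{\G}^w$ and of $\D_{\G}$ correspond under projection and lifting. Your contrapositive phrasing in terms of rejecting paths is just a slightly more explicit rendering of the paper's ``the extra paths carry more colour $2$'s, which cannot hurt B\"uchi acceptance'' argument.
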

\begin{proof}
The proof is similar to that of \cref{lemma:widen-candidate}. Consider the run trees of $\D_{\G}$ and $\D_{\G}^{w}$ on an arbitrary input tree $\langle T, (V, f_{\C}, f_{\G}) \rangle$ and the corresponding widened tree $\wide_{\QDiamond} \big(\langle T, (V, f_{\C}, f_{\G}) \rangle \big)$, respectively. For every tree path in the run tree of $\D_{\G}$, there is the same tree path in the run tree of $\D_{\G}^{w}$. In other words, the run tree of $\D_{\G}$ is a subtree of that of $\D_{\G}^{w}$. Furthermore, consider the paths that are in the run tree of $\D_{\G}^{w}$ but are not in the run tree of $\D_{\G}$. For those paths, there are more colour $2$'s, which will not affect the acceptance of the run tree of the DBN~$\D_{\G}^{w}$. Thus, the run tree of $\mathcal{D}_{\G}$ is accepting if, and only if, the run tree of $\D_{\G}^{w}$ is accepting.
\end{proof}

\paragraph{\textbf{Tree Pruning.}} Given a widened tree $\wide_{\QDiamond}\big( \langle T, (V, f_{\C}, f_{\G}) \rangle \big)$, the operator $\prune$ keeps, for all $w \in \wide_{\QDiamond}\big( \langle T, (V, f_{\C}, f_{\G}) \rangle \big)$ and all $q \in \QDiamond$, exactly one direction $(\upsilon_q, q)$ from $\suc(w)$, such that, for all $q \in \QDiamond$, the following holds:
\begin{itemize}
    \item if $q = \text{\ding{73}}$ and $f_{\G}(\hide_{\QDiamond}(w)) = (\upsilon, q') \in \Upsilon \times Q_{\G}$, then $\upsilon_q = \upsilon$; and
    \item if $q \in Q$ and $f_{\C}(\hide_{\QDiamond}(w))(q) = \upsilon \in \Upsilon$ then $\upsilon_q =  \upsilon$.
\end{itemize}
If $q \in Q$ and $f_{\C}(\hide_{\QDiamond}(w))(q) = \Box$ or if $q = \text{\ding{73}}$ and $f_{\G}(\hide_{\QDiamond}(w)) \in \{ \Box \} \times Q_{\G}$, then $\upsilon_q$ can be chosen arbitrarily from $\suc(\hide_{\QDiamond}(w))$.


Next, all nodes $w$ in the pruned tree are renamed to $\hide_{\Upsilon}(w)$. Pruning the tree by operator $\prune$ gives us full $\QDiamond$-trees, and thus trees with fixed branching degree $|\QDiamond|$. 

Let us define a UCT that runs on full $(\Sigma \times \Delta_{\C} \times \Delta_{\G})$-labelled $\QDiamond$-trees $\mathcal{U}_{\C}^{p} = \langle \Sigma \times \Delta_{\C} \times \Delta_{\G}, Q, \delta_{\U_{\C}^{p}}, q_0, \alpha_{\U_{\C}^{p}} \rangle$, where $\delta_{\mathcal{U}^{p}_{\C}}(q, \langle \sigma, \delta', g \rangle) =  \bigwedge_{q' \in \delta(q, \sigma) \land \delta'(q') = \Box} (\Box, q', 0) \land \bigwedge_{q' \in \delta(q, \sigma) \land \delta'(q') \in \Upsilon} ( q', q', 1 )$ for all $q \in Q$ and $(\sigma, \delta', g) \in \Sigma \times \Delta_{\C} \times \Delta_{\G}$. 

Consider the run trees of $\mathcal{U}_{\C}^{w}$ and $\U_{\C}^{p}$ on an arbitrary widened tree $\wide_{\QDiamond} \big(\langle T, (V, f_{\C}, f_{\G}) \rangle \big)$ and the corresponding pruned tree $\prune\Big( \wide_{\QDiamond} \big(\langle T, (V, f_{\C}, f_{\G}) \rangle \big) \Big)$, respectively. The run tree of $\mathcal{U}_{\C}^{p}$ is a subtree of that of $\U_{\C}^{w}$: 
\begin{restatable}{lem}{lemmaPruneCandidate}\label{lemma:prune-candidate}
If the UCT $\U_{\C}^{w}$ accepts a $(\Sigma \times \Delta_{\C} \times \Delta_{\G})$-labelled tree $\wide_{\QDiamond} \big(\langle T, (V, f_{\C}, f_{\G}) \rangle \big)$, then the UCT $\mathcal{U}_{\C}^p$ accepts the $(\Sigma \times \Delta_{\C} \times \Delta_{\G})$-labelled tree $\prune\Big( \wide_{\QDiamond} \big(\langle T, (V, f_{\C}, f_{\G}) \rangle \big) \Big)$. 
\end{restatable}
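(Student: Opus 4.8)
The plan is to reuse the template of the proofs of \cref{lemma:widen-candidate} and \cref{lemma:widen-GFM}: exhibit the run tree of $\mathcal{U}_{\C}^{p}$ on the pruned tree as a \emph{colour-preserving subtree} of the run tree of $\U_{\C}^{w}$ on the widened tree. Since both are universal co-B\"uchi automata, ``accepts'' means that \emph{every} infinite path of the (forced) run tree carries colour $1$ only finitely often, and this all-paths property is trivially inherited by any sub-collection of paths. Concretely, I would first record the bijection $\phi$ that $\prune$-and-rename induces between the nodes of the pruned tree, which after renaming by $\hide_{\Upsilon}$ is a full $\QDiamond$-tree, and the retained nodes of the widened tree: inductively $\phi(\varepsilon)=\varepsilon$ and $\phi(w {\cdot} q)=\phi(w){\cdot}(\upsilon_q,q)$, where $(\upsilon_q,q)\in\suc(\phi(w))$ is the unique direction kept by $\prune$ for $q\in\QDiamond$. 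Because both widening and the renaming preserve labels along $\hide_{\QDiamond}$, the label read at pruned node $w$ equals the label read at widened node $\phi(w)$, so the two automata consult the same $\langle\sigma,\delta',g\rangle$ at corresponding nodes.

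Next I would lift $\phi$ to run trees, mapping a run-tree node of $\mathcal{U}_{\C}^{p}$ labelled $(w,q)$ to the run-tree node of $\U_{\C}^{w}$ labelled $(\phi(w),q)$, and verify atom by atom that this is an injective, colour-preserving tree embedding. For a state $q'\in\delta(q,\sigma)$ with $\delta'(q')=\Box$, $\mathcal{U}_{\C}^{p}$ sends $q'$ with colour $0$ to every $\QDiamond$-successor, while $\U_{\C}^{w}$ sends $q'$ with colour $0$ to every $(\Upsilon\times\QDiamond)$-successor; under $\phi$ the former children are exactly the $\phi$-images of the retained subset of the latter, all with colour $0$. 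For a state $q'$ with $\delta'(q')=\upsilon\in\Upsilon$, $\mathcal{U}_{\C}^{p}$ produces the single colour-$1$ child in direction $q'\in\QDiamond$, whereas $\U_{\C}^{w}$ produces the colour-$1$ child in direction $(\upsilon,q')$; these agree because the defining clause of $\prune$ for $q'\in Q$ forces $\upsilon_{q'}=f_{\C}\big(\hide_{\QDiamond}(\phi(w))\big)(q')=\upsilon$, so $\phi(w{\cdot}q')=\phi(w){\cdot}(\upsilon,q')$. In either case the colour on the corresponding edge is identical, so every run-tree path of $\mathcal{U}_{\C}^{p}$ maps to a path of $\U_{\C}^{w}$ with the same colour sequence.

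I would then conclude directly: since $\U_{\C}^{w}$ accepts the widened tree, all of its run-tree paths see colour $1$ finitely often; each infinite path of $\mathcal{U}_{\C}^{p}$'s run tree embeds via the colour-preserving map into such a path and hence also sees colour $1$ finitely often, so $\mathcal{U}_{\C}^{p}$ accepts the pruned tree. Unlike the widening lemmas this is only a one-way implication, because $\prune$ \emph{discards} paths and may thereby discard the witnesses of non-acceptance. The main point requiring care, and the only genuine obstacle, is the bookkeeping around $\prune$: I must check in all of its clauses that the selected direction $(\upsilon_q,q)$ indeed lies in $\suc(\phi(w))$, so that the widened run tree really contains the child matched to each pruned child. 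For the two clauses where $f_{\C}$ or $f_{\G}$ dictates an $\upsilon\in\Upsilon$ this uses that these strategy labellings only ever name existing successors (as in \cref{lemma:strategy-and-strategy-labelled-tree-candidate} and \cref{lemma:strategy-and-strategy-labelled-tree-GFM}, where an existential direction $c$ is taken only when $w{\cdot}c\in\suc(w)$), and for the two ``arbitrary'' clauses it uses that the unravelled tree is closed, so $\suc_T\big(\hide_{\QDiamond}(\phi(w))\big)\neq\emptyset$ and some direction with second component $q$ is always available.
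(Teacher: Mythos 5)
Your proposal is correct and follows essentially the same route as the paper, which justifies \cref{lemma:prune-candidate} by the single observation that the run tree of $\mathcal{U}_{\C}^{p}$ on the pruned tree is a (colour-preserving) subtree of the run tree of $\U_{\C}^{w}$ on the widened tree, so that the universal co-B\"uchi acceptance condition is inherited. Your write-up merely makes this explicit by constructing the embedding $\phi$ and checking, clause by clause of $\prune$, that the retained direction for an existential state $q'$ coincides with the direction $(\delta'(q'),q')$ used by $\U_{\C}^{w}$ --- a useful elaboration of the argument the paper leaves implicit.
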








Let us define a DBN that runs on full $(\Sigma \times \Delta_{\C} \times \Delta_{\G})$-labelled $\QDiamond$-trees $\D_{\G}^{p} = \langle \Sigma \times \Delta_{\C} \times \Delta_{\G}, Q_{\G}, \delta_{\D_{\G}^{p}}, q_0^{\G}, \alpha_{\D_{\G}^{p}} \rangle$ where $\delta_{\D_{\G}^{p}}\big( q, \langle \sigma, \delta', (c, q') \rangle \big)$
is $(\Box, q', 2)$ if $c = \Box$ and $(\boxtimes_1^{\text{\ding{73}}}, q', 2)$ if $c \in \Upsilon$
for all $q \in Q_{\G}$, $\big( \sigma, \delta', (c, q') \big) \in \Sigma \times \Delta_{\C} \times \Delta_{\G}$. Similar to \cref{lemma:prune-candidate}, we have: 
\begin{restatable}{lem}{lemmaPruneGFM}\label{lemma:prune-GFM}
If the DBN $\D_{\G}^{w}$ accepts a $(\Sigma \times \Delta_{\C} \times \Delta_{\G})$-labelled tree $\wide_{\QDiamond}(\langle T, (V, f_{\C}, f_{\G}) \rangle)$, then the DBN $\mathcal{D}_{\G}^p$ accepts the full $(\Sigma \times \Delta_{\C} \times \Delta_{\G})$-labelled tree $\prune\Big(\wide_{\QDiamond}\big( \langle T, (V, f_{\C}, f_{\G} ) \rangle \big)\Big)$. 
\end{restatable}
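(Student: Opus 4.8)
The plan is to mirror the argument sketched before \cref{lemma:prune-candidate}, exploiting that $\D_{\G}^{w}$ and $\D_{\G}^{p}$ are both deterministic (indeed symmetric), so each admits a unique run tree on its input, and that the B\"uchi acceptance condition transfers along a colour-preserving subtree inclusion. The decisive point is the treatment of the $\text{\ding{73}}$-direction. At a node $w$ of the widened tree whose label is $\langle \sigma, \delta', (c, q') \rangle$ with $c \in \Upsilon$, we have $f_{\G}(\hide_{\QDiamond}(w)) = (c, q')$, so the pruning rule for $\text{\ding{73}}$ sets $\upsilon_{\text{\ding{73}}} = c$; hence the successor retained as the $\text{\ding{73}}$-direction of the pruned tree is exactly the widened-tree direction $(c, \text{\ding{73}})$ into which $\D_{\G}^{w}$ sends its unique colour-$1$ copy (via $\boxtimes_1^{(c,\text{\ding{73}})}$). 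Since $\D_{\G}^{p}$ sends its colour-$1$ copy into $\text{\ding{73}}$ (via $\boxtimes_1^{\text{\ding{73}}}$) and colour~$2$ into every other direction, while $\D_{\G}^{w}$ sends colour~$2$ into every direction other than $(c, \text{\ding{73}})$, the two colourings agree under the bijection $(c, \text{\ding{73}}) \leftrightarrow \text{\ding{73}}$ together with $(\upsilon_{\kappa}, \kappa) \leftrightarrow \kappa$ for the remaining directions $\kappa \in Q$. In the $c = \Box$ case both automata colour every edge with~$2$, so agreement is automatic there and the arbitrary pruning choices (those where $f_{\G}$ or $f_{\C}$ yields $\Box$) are harmless.

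First I would make this correspondence precise. After renaming each node by $\hide_{\Upsilon}$, the pruned tree $\prune\big(\wide_{\QDiamond}(\langle T, (V, f_{\C}, f_{\G}) \rangle)\big)$ is a full $\QDiamond$-tree, so the deterministic $\D_{\G}^{p}$ never blocks on it. A straightforward induction on tree depth then shows that the run tree of $\D_{\G}^{p}$ on the pruned tree coincides, node by node and colour by colour, with the restriction of the run tree of $\D_{\G}^{w}$ on the widened tree to the retained directions: the inductive step uses the colour agreement above together with the fact that both automata carry the same state $q'$ into corresponding successors. This yields a state- and colour-preserving embedding of the run tree of $\D_{\G}^{p}$ into that of $\D_{\G}^{w}$; equivalently, the former is a subtree of the latter modulo the $\hide_{\Upsilon}$ relabelling, exactly as in the remark preceding \cref{lemma:prune-candidate}.

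Finally I would transfer acceptance. Every infinite path of the run tree of $\D_{\G}^{p}$ maps, under this embedding, to an infinite path of the run tree of $\D_{\G}^{w}$ carrying an identical sequence of transition colours. Assuming $\D_{\G}^{w}$ accepts the widened tree, every infinite path of its run tree sees colour~$2$ infinitely often; hence so does every infinite path of the run tree of $\D_{\G}^{p}$, which is precisely the B\"uchi acceptance condition. Therefore $\D_{\G}^{p}$ accepts the pruned tree.

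The main obstacle is purely the bookkeeping of the direction/colour correspondence, above all confirming that the single colour-$1$ edge of $\D_{\G}^{w}$ is never discarded by $\prune$ and is always identified with the single colour-$1$ edge ($\text{\ding{73}}$) of $\D_{\G}^{p}$. Once this is verified, the subtree inclusion and the transfer of the B\"uchi condition are routine, and dual to the co-B\"uchi transfer used for the universal automaton in \cref{lemma:prune-candidate}.
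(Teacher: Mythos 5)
Your proof is correct and follows essentially the same route as the paper's: the paper's own argument is just the observation that the run tree of $\D_{\G}^{p}$ on the pruned tree is a (colour-preserving) subtree of the run tree of $\D_{\G}^{w}$ on the widened tree, which you spell out in more detail. Your explicit check that the pruning rule for $\text{\ding{73}}$ retains exactly the direction $(c,\text{\ding{73}})$ carrying the unique colour-$1$ edge of $\boxtimes_1^{(c,\text{\ding{73}})}$, so that it is identified with the $\boxtimes_1^{\text{\ding{73}}}$ edge of $\D_{\G}^{p}$, is precisely the bookkeeping the paper leaves implicit.
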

\begin{proof}
Consider the run trees of $\D_{\G}^{w}$ and $\D_{\G}^{p}$ on an arbitrary widened tree $\wide_{\QDiamond} \big(\langle T, (V, f_{\C}, f_{\G}) \rangle \big)$ and the corresponding pruned tree $\prune\Big( \wide_{\QDiamond} \big(\langle T, (V, f_{\C}, f_{\G}) \rangle \big) \Big)$, respectively. It follows from the fact that the run tree of $\D_{\G}^{p}$ is a subtree of the run tree of $\D_{\G}^{w}$.
\end{proof}

Note that a tree $\prune\Big(\wide_{\QDiamond}\big( \langle T, (V, f_{\C}, f_{\G} ) \rangle \big)\Big)$ being accepted by $\U_{\C}^{p}$ does not imply that the tree $\langle T, V \rangle$ is accepted by $\widetilde{\T}_{\C}$.
This is because some nodes in the tree $T$ may disappear after pruning.
For the same reason, that a tree $\prune\Big(\wide_{\QDiamond}\big( \langle T, (V, f_{\C}, f_{\G} ) \rangle \big)\Big)$ is accepted by $\D_{\G}^{p}$ does not imply that the tree $\langle T, V \rangle$ is accepted by $\T_{\G}$.
\smallskip

\paragraph{\textbf{Strategy Simplification -- removing the direction $\Upsilon$.}}
Noting that the $\Upsilon$ part of the strategies has flown into the transformations of the tree, but are no longer used by the automata, we can remove the $\Upsilon$ part from the strategies:
Let $\Delta_{\C}'$ be the set of functions $Q \to \{\Box, \Diamond\}$ and $\Delta_{\G}' \subseteq Q_{\G}$. For $\Delta_{\G}'$, we remove $\{\Box\} \cup \Upsilon$ from $\Delta_{\G}$, as it is clear from the construction of the DBT $\D_{\G}^p$ and the definition of strategy $\Delta_{\G}$ that we have a $(\Box, q, 2)$ transition when $q \in F_{\G}$ is chosen and a $(\boxtimes^{\text{\ding{73}}}_1, q, 2)$ transition when $q \not\in F_{\G}$ is chosen.  

We first define the UCT $\U_{\C}' = \langle \Sigma \times \Delta_{\C}' \times \Delta_{\G}', Q, \delta_{\U_{\C}'}, q_0, \alpha_{\U_{\C}'} \rangle$ that runs on full $(\Sigma \times \Delta_{\C}' \times \Delta_{\G}')$-labelled $\QDiamond$-trees.
Let  $\delta_{\mathcal{U}_{\C}'}(q, \langle \sigma, \delta', g \rangle) = \bigwedge_{q' \in \delta(q, \sigma) \land \delta'(q') = \Box} (\Box, q', 0) \land \bigwedge_{q' \in \delta(q, \sigma) \land \delta'(q') = \Diamond} ( q', q', 1 )$ for all $q \in Q$ and $(\sigma, \delta', g) \in \Sigma \times \Delta_{\C}' \times \Delta_{\G}'$.
We next define the DBN that runs on full $(\Sigma \times \Delta_{\C}' \times \Delta_{\G}')$-labelled $\QDiamond$-trees $\D_{\G}' = \langle \Sigma \times \Delta_{\C}' \times \Delta_{\G}', Q_{\G}, \delta_{\D_{\G}'}, q_0^{\G}, \alpha_{\D_{\G}'} \rangle$ where $\delta_{\D_{\G}'}\big( q, \langle \sigma, \delta',  q' \rangle \big)$ 
is $(\Box, q', 2)$ if $q' \in F_{\G}$ and $(\boxtimes_1^{\text{\ding{73}}}, q', 2)$ if $q' \not\in F_{\G}$
for all $q \in Q_{\G}$ and $(\sigma, \delta', q') \in \Sigma \times \Delta_{\C}' \times \Delta_{\G}'$. 

Given a $(\Sigma \times \Delta_{\C} \times \Delta_{\G})$-labelled $\QDiamond$-tree $\langle T, (V, f_{\C}, f_{\G}) \rangle$, the operator $\relabel$ assigns all nodes $w \in T$ with old label $\big(\sigma, \delta', (c, q')\big)$ a new label $\big(\sigma, \delta'', q'\big)$, where $\delta''(q) $
is $\Box$ if $\delta'(q) = \Box$ and $\Diamond$ if $\delta'(q) \in \Upsilon$.
Relabelling gives us $(\Sigma \times \Delta_{\C}' \times \Delta_{\G}')$-labelled trees. 

\begin{restatable}{lem}{lemmaStrategyLabelUpsilonFreeCandidateGFM}\label{lemma:strategy-label-Upsilon-free-candidate-GFM}
The UCT $\U_{\C}^p$ accepts a full $(\Sigma \times \Delta_{\C} \times \Delta_{\G})$-labelled $\QDiamond$-tree $\langle T, (V, f_{\C}, f_{\G}) \rangle$ if, and only if, the UCT $\U_{\C}'$ accepts a full $(\Sigma \times \Delta_{\C}' \times \Delta_{\G}')$-labelled $\QDiamond$-tree $\relabel \big( \langle T, (V, f_{\C}, f_{\G} ) \rangle \big)$. Similarly, the DBN $\D_{\G}^p$ accepts a full $(\Sigma \times \Delta_{\C} \times \Delta_{\G})$-labelled $\QDiamond$-tree $\langle T, (V, f_{\C}, f_{\G}) \rangle$ if, and only if, the DBN $\D_{\G}'$ accepts a full $(\Sigma \times \Delta_{\C}' \times \Delta_{\G}')$-labelled $\QDiamond$-tree $\relabel \big( \langle T, (V, f_{\C}, f_{\G} ) \rangle \big)$.
\end{restatable}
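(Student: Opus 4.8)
The plan is to prove both biconditionals at once by showing that the paired automata produce \emph{identical} run trees on corresponding inputs. Acceptance of a UCT and of a DBN depends only on the colouring of the (unique, in the deterministic case, or universally quantified, in the universal case) run tree, so if the run tree of $\U_{\C}^p$ on $\langle T,(V,f_{\C},f_{\G})\rangle$ coincides node-for-node and colour-for-colour with that of $\U_{\C}'$ on $\relabel\big(\langle T,(V,f_{\C},f_{\G})\rangle\big)$, the ``if and only if'' follows immediately, and similarly for $\D_{\G}^p$ versus $\D_{\G}'$. The first step I would record is the structural observation that makes this work: after pruning, the inputs are full $\QDiamond$-trees, and $\relabel$ changes only the labelling, leaving the node set $\QDiamond^*$ untouched. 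Moreover, neither $\delta_{\U_{\C}^p}$ nor $\delta_{\D_{\G}^p}$ ever inspects the concrete direction $\upsilon\in\Upsilon$ stored in a label—only the binary distinction ``$\Box$ versus a direction in $\Upsilon$'' is used, and the direction into which a copy is actually sent is the target state $q'$ (an element of $\QDiamond$), or $\text{\ding{73}}$, independently of $\upsilon$. This redundancy of the $\Upsilon$-component is exactly what $\relabel$ discards.

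For the candidate case I would note that $\relabel$ acts on the $\Delta_{\C}$-component by sending $\Box\mapsto\Box$ and every $\upsilon\in\Upsilon\mapsto\Diamond$, so that for each state $q'$ one has $\delta''(q')=\Box$ iff $\delta'(q')=\Box$, and $\delta''(q')=\Diamond$ iff $\delta'(q')\in\Upsilon$. Comparing $\delta_{\U_{\C}^p}(q,\langle\sigma,\delta',g\rangle)$ with $\delta_{\U_{\C}'}(q,\langle\sigma,\delta'',g'\rangle)$, the two conjunctions partition $\delta(q,\sigma)$ in the very same way: each $\Box$-state contributes the atom $(\Box,q',0)$ and each remaining state contributes $(q',q',1)$ in both automata. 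Hence the atom sets coincide at every node, the induced run trees are identical, and $\U_{\C}^p$ accepts the tree iff $\U_{\C}'$ accepts its relabelling.

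For the GFM case the key ingredient is the definition of $\Delta_{\G}$: $(\Box,q)\in\Delta_{\G}$ holds exactly when $q\in F_{\G}$, while $(\upsilon,q)\in\Delta_{\G}$ with $\upsilon\in\Upsilon$ holds exactly when $q\notin F_{\G}$. Therefore the case split ``$c=\Box$ versus $c\in\Upsilon$'' used by $\delta_{\D_{\G}^p}$ on a label $(c,q')$ is literally the split ``$q'\in F_{\G}$ versus $q'\notin F_{\G}$'' used by $\delta_{\D_{\G}'}$ once $\relabel$ has dropped $c$ and kept only $q'$. Consequently $\D_{\G}^p$ and $\D_{\G}'$ emit the same single atom, namely $(\Box,q',2)$ or $(\boxtimes_1^{\text{\ding{73}}},q',2)$, at every node; the run trees coincide and acceptance is preserved.

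I do not anticipate a genuine obstacle: each biconditional reduces to the equality of run trees, which in turn reduces to a pointwise equality of the transition functions on relabelled inputs. The only point demanding care is the justification that $\relabel$ destroys no information the automata actually consult—for the candidate automaton this is the remark that the emitted direction is always $q'$ regardless of $\upsilon$, and for the GFM automaton it is the $c=\Box\iff q'\in F_{\G}$ correspondence hard-wired into $\Delta_{\G}$. Once these are in place, the remainder is a routine check that the transitions agree.
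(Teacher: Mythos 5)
Your proposal is correct and follows exactly the reasoning the paper itself relies on (the paper gives no explicit proof body for this lemma, only the preceding remark that ``the $\Upsilon$ part of the strategies \ldots{} are no longer used by the automata''): since $\relabel$ changes only labels on the fixed node set $\QDiamond^*$, and the transition functions of $\U_{\C}^p$ versus $\U_{\C}'$ (respectively $\D_{\G}^p$ versus $\D_{\G}'$) emit identical atoms on corresponding labels --- using the $\Box$-versus-$\Upsilon$ split for the candidate and the $c=\Box \iff q'\in F_{\G}$ correspondence built into $\Delta_{\G}$ for the GFM automaton --- the run trees coincide and acceptance transfers in both directions. Your two justificatory observations (the emitted direction is always $q'$ or $\text{\ding{73}}$ regardless of $\upsilon$, and the hard-wired $F_{\G}$ correspondence) are precisely the points that make the relabelling lossless, so nothing is missing.
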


\paragraph{\textbf{Language Intersection Emptiness Check.}}
Let a $\Sigma$-labelled $\Upsilon$-tree $\langle T, V \rangle$ be accepted by both the ACS $\widetilde{\T}_{\C}$ and the NBS $\T_{\G}$.
Let $\langle T , f_{\C} \rangle$ be the winning strategy of $\widetilde{\T}_{\C}$ and $\langle T , f_{\G} \rangle$ be the winning strategy of $\T_{\G}$ over this tree.
From the previous lemmas in \cref{app:constructions}, we have that if a tree $\langle T, V \rangle$ is accepted by both $\widetilde{\T}_{\C}$ and $\T_{\G}$, then the tree $\langle T', V' \rangle = \prune\Big(\wide_{\QDiamond}\big( \langle T, (V, f_{\C}, f_{\G} ) \rangle \big)\Big)$ is accepted by both $\U_{\C}^{p}$ and $\D_{\G}^{p}$, and the tree $\relabel(\langle T', V' \rangle )$ is accepted by both $\U_{\C}'$ and $\D_{\G}'$: 

\begin{cor}\label{corollary:QGFM-main-theorem-impies-with-Upsilon}
We have that $\lang(\widetilde{\T}_{\C}) \cap \lang(\T_\G) \not= \emptyset$ implies $\lang(\U_{\C}') \cap \lang(\D_{\G}') \not= \emptyset$.
\end{cor}


As argued before, the other directions of \cref{lemma:prune-candidate} and \cref{lemma:prune-GFM} do not necessarily hold.
The other direction of \cref{corollary:QGFM-main-theorem-impies-with-Upsilon}, however, holds, because a tree in the intersection of $\lang(\U_{\C}')$ and $\lang(\D_{\G}')$ is accepted by both $\widetilde{\T}_{\C}$ and $\T_\G$ with $\Upsilon = \QDiamond$.

\begin{restatable}{prop}{propositionQGFMMainTheoremImpliedby}\label{proposition:QGFM-main-theorem-impliedby}
For any full ($\Sigma \times \Delta_{\C}' \times \Delta_{\G}'$)-labelled $\QDiamond$-tree $\langle T, (V, f_{\C}', f_{\G}') \rangle$ in $\lang(\U_{\C}')$ and $\lang(\D_{\G}')$, respectively, we have that the full $\Sigma$-labelled $\QDiamond$-tree $\langle T, V \rangle$ is accepted by $\widetilde{\T}_{\C}$ and $\T_\G$, respectively. 

\end{restatable}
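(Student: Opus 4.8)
The plan is to establish each of the two implications by reading player \emph{accept}'s winning strategy directly off the auxiliary label components $f_\C'$ and $f_\G'$ of the annotated tree, and then checking that the run tree of the symmetric automaton induced by this strategy coincides, edge by edge and colour by colour, with the run tree of $\U_\C'$ (unique, as it is universal) resp.\ of the deterministic $\D_\G'$ on the annotated tree. The enabling fact is that the input is a \emph{full} $\QDiamond$-tree, i.e.\ here $\Upsilon=\QDiamond$: the dedicated directions produced by widening---one direction named $q$ for each state of $\C$, together with the extra direction $\text{\ding{73}}$ for $\G$---are exactly the directions into which the pointed operators of $\U_\C'$ and $\D_\G'$ send their copies. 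This lets the existential operators of $\widetilde{\T}_\C$ and $\T_\G$ be resolved to these fixed directions, so that the two run trees become literally the same labelled tree.

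For the candidate half, from $\langle T,(V,f_\C',f_\G')\rangle\in\lang(\U_\C')$ I would define player \emph{accept}'s strategy in $\widetilde{\T}_\C$ at a node $w$ with automaton state $q$ from $\delta'=f_\C'(w)$: for every $q'\in\delta(q,V(w))$, take the branch $(\Box,q',0)$ when $\delta'(q')=\Box$, and the branch $(\Diamond,q',1)$ resolved to the direction $q'$ when $\delta'(q')=\Diamond$. Final successors are always assigned $\Diamond$ (a constraint carried over from $\Delta_\C$ by $\relabel$), so this respects the obligation of $\widetilde{\T}_\C$ to send final states existentially. Since $\delta_{\U_\C'}$ emits exactly $(\Box,q',0)$ resp.\ $(q',q',1)$ for the same successors, the run tree of $\widetilde{\T}_\C$ under this strategy equals the unique run tree of $\U_\C'$, with the same colours in $\{0,1\}$ on every edge. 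As $\U_\C'$ accepts, every branch carries only finitely many $1$'s, so the strategy is co-B\"uchi winning and $\langle T,V\rangle\in\lang(\widetilde{\T}_\C)$.

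The GFM half is entirely symmetric. From $\langle T,(V,f_\C',f_\G')\rangle\in\lang(\D_\G')$ I would read off, at each node $w$, the chosen successor $q'=f_\G'(w)$ and let player \emph{accept} in $\T_\G$ take the disjunct $(\Box,q',2)$ when $q'\in F_\G$, and the disjunct $(\boxtimes_1,q',2)$---with the distinguished colour-$1$ direction resolved to $\text{\ding{73}}$---when $q'\notin F_\G$. As $\delta_{\D_\G'}$ yields precisely $(\Box,q',2)$ resp.\ $(\boxtimes_1^{\text{\ding{73}}},q',2)$, the induced run tree of $\T_\G$ again matches that of $\D_\G'$ edge by edge, colours in $\{1,2\}$ included, and B\"uchi acceptance transfers, giving $\langle T,V\rangle\in\lang(\T_\G)$.

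The obstacle I expect to be central is the colour bookkeeping itself: one must verify that, under the resolution of $\Diamond$ to the direction $q'$ and of $\boxtimes_1$ to $\text{\ding{73}}$, each edge of the symmetric automaton's run tree carries the same colour as the matching edge of $\U_\C'$ resp.\ $\D_\G'$, so that the co-B\"uchi resp.\ B\"uchi condition is literally the same object on both run trees---this is exactly what the widening with the fresh $\QDiamond$ directions was engineered to guarantee. The remaining point to check is that $q'=f_\G'(w)$ is a genuine successor in $\delta_\G(q,V(w))$ of the $\G$-state $q$ tracked at $w$, so that the chosen disjunct of $\T_\G$ exists; this holds because the label components were introduced precisely to record player \emph{accept}'s legitimate choices when the nondeterminism of $\T_\G$ was made explicit, so that on an accepted tree these labels form a bona fide strategy of $\T_\G$.
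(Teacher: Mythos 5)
Your proposal is correct and follows essentially the same route as the paper: you read player \emph{accept}'s strategy for $\widetilde{\T}_{\C}$ (resp.\ $\T_\G$) directly off the label components $f_\C'$ (resp.\ $f_\G'$), resolving $\Diamond$ to the direction named $q'$ and the pointed direction of $\boxtimes_1$ to the extra direction, exactly as in the paper's definition of $f_\C$ and $f_\G$. The core justification---that the induced run tree coincides edge-by-edge and colour-by-colour with the (unique) run tree of $\U_\C'$ resp.\ $\D_\G'$ and that the acceptance conditions agree---is the same argument the paper gives.
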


\begin{proof}

If $\langle T, (V, f_{\C}', f_{\G}') \rangle$ is accepted by $\U_\C'$ then, due to the construction of $\U_\C'$, $\langle T,V\rangle$ is accepted by $\widetilde{\T}_{\C}$ with $\langle T, f_{\C} \rangle$ as the winning strategy, where $f_{\C}(w)(q) = \left \{ \begin{array}{ll}
\Box & \mbox{if $f_{\C}'(w)(q) = \Box$}\\
q & \mbox{if $f_{\C}'(w)(q) = \Diamond$}
\end{array}
\right .$ for all $w \in T$ and $q \in Q$.
This is because the run tree of $\widetilde{\T}_{\C}$ over $\langle T,V\rangle$ with this strategy is the same as the run tree of $\U_{\C}'$ over $\langle T, (V, f_{\C}', f_{\G}') \rangle$, and $\widetilde{\T}_{\C}$ and $\U_{\C}'$ have the same acceptance condition. 

Likewise, if $\langle T, (V, f_{\C}', f_{\G}') \rangle$ is accepted by $\D_\G'$, then $\langle T,V\rangle$ is accepted by
${\T}_{\G}$ with $\langle T, f_{\G} \rangle$ as the winning strategy, where $f_{\G}(w) = \left \{ \begin{array}{ll}
(\Box, q) & \mbox{if $f_{\G}'(w) = q \in F_{\G}$}\\
(\text{\ding{73}}, q) & \mbox{if $f_{\G}'(w) = q \not\in F_{\G}$}
\end{array}
\right .$ for all $w \in T$. 
This is because the run tree of ${\T}_{\G}$ over $\langle T,V \rangle$ with this strategy is the same as the run tree of $\D_{\G}'$ over $\langle T, (V, f_{\C}', f_{\G}') \rangle$, and ${\T}_{\G}$ and $\D_{\G}'$ have the same acceptance condition.
This concludes the proof.
%
\end{proof}

In the next step, we determinise $\U_{\C}'$ and obtain a deterministic parity tree automaton (DPT) $\D_{\C}$ such that $\lang(\U_{\C}') = \lang(\D_{\C})$.
For universal automata, this is a standard transformation that works on the word level, as the UCT can be viewed as individual UCW running along every individual branch of the tree.
It is therefore a standard operation in synthesis to determinise these automata by dualising them to nondeterministic word automata, determinising them to deterministic parity automata \cite{Piterman07}, and dualising the resulting DPW by increasing all colours by $1$, which defines a DPT $\D_{\C}$, which is language equivalent to $\U_\C'$.



To check whether the candidate NBW $\C$ is QGFM, it now suffices to check the emptiness of $\lang(\D_{\C}) \cap \lang(\D_{\G}')$.
To this end, we construct a deterministic tree automaton $\D$ that recognises the language $\lang(\D) = \lang(\D_{\C}) \cap \lang(\D_{\G}')$. 

The DPT $\D$ is constructed as follows: the new states are triples $(q_\C,q_\G,i)$, where $q_\C$ is a state of $\D_\C$, and updated according to the rules of $\D_\C$, $q_\G$ is a state from $\D_\G'$, and updated according to the rules of $\D_\G'$, and $i$ serves as memory for the highest colour that occurs in a run of $\D_\C$ between two accepting transitions (those with colour $2$) of $\D_\G'$.


That is, when we are in a state $(q_\C,q_\G,i)$, $\sigma \in \Sigma'$ (where $\Sigma'$ is the alphabet over which $\D$, $\D_\C$, $\D_\G'$ are interpreted), and $\upsilon \in \QDiamond$, then
$\big(\upsilon,(q_\C',q_\G',i'), j\big)$ is a conjunct of $\delta_{\D}\big((q_\C,q_\G,i),\sigma\big)$ if, and only if,
(1) $\delta_{\D_\C}(q_\C,\sigma)$ has a conjunct $(\upsilon, q_\C',i_\C)$,
(2) $\delta_{\D_\G'}(r,\sigma)$ has a conjunct $(\upsilon, q_\G',i_\G)$, and
(3) either $i_\G=1$, $i' = \max\{i, i_\C\}$, and $j = 1$,
or $i_\G=2$, $i'=i_\C$, and $j = i$.

This way, if there are only finitely many accepting transitions on a path in the run tree of $\D_\G'$, then the dominating colour of the same path in the run tree of $\D$ is $1$, and
if there are infinitely many accepting transitions on a path in the run tree of $\D_\G'$, then the dominating colour of the same path in the run tree of $\D$ is the dominating colour of the same path in the run tree of $\D_\C$.

\theoremQGFMIntersectionMain*
\begin{proof}
The UCT $\U_{\C}'$ has $|Q|$ states. From $\U_{\C}'$, we can construct a language equivalent DPT $\D_\C$ with $O(|Q|!^2)$ states and $O(|Q|)$ colours \cite{Piterman07,Schewe09}. The GFM automaton of our choice $\G$ has $|Q_\G| = 3^{|Q|}$ states. The DBN $\D_{\G}'$ has $|Q_\G| = 3^{|Q|}$ states. By the construction of the DPT $\D$ described above, $\D$ has  $O\big(|Q|!^23^{|Q|}|Q|\big)$ states and $O(|Q|)$ colours.

We note that solving the language non-emptiness of $\mathcal D$ reduces to solving a parity game, where player {\it accept} and player {\it reject} take turns in selecting the letter on the node of the tree, and choosing the direction to follow.

This game has $O\big(|\Sigma \times \Delta_\C' \times \Delta_\G' \times Q_\D|\big)$ states (where $Q_\D$ is the set of states of $\D$) and $O(|Q|)$ colours.
Using one of the quasi polynomial time algorithms for solving parity games \cite{CaludeJKL017}, this can be solved in time polynomial in $\max\{|\Sigma|,|Q|!\}$.

As this is the most expensive step in the construction, we get the overall cost and result from the procedure in \cref{fig:algorithms}. 
\end{proof}

\end{document}